\documentclass[11pt,twoside,english]{article}
\usepackage{amsmath}
\usepackage{amsthm}
\usepackage{amssymb}
\usepackage{colortbl}
\usepackage{array}
\usepackage{mathtools}
\usepackage{varwidth}
\usepackage{cancel}
\usepackage{mathdots}
\usepackage{stackrel}
\usepackage{graphicx}
\usepackage[a4paper]{geometry}
\usepackage{setspace}
\PassOptionsToPackage{version=3}{mhchem}
\usepackage{mhchem}
\usepackage{esint}
\usepackage[bookmarks=true,
 pdftitle={Nonparametric Identification and Estimation of Production Functions Invariant to Productivity Dynamics},
 pdfauthor={Rentaro Utamaru},
 breaklinks=true,pdfborder={0 0 0},pdfborderstyle={},backref=false,colorlinks=true,linkcolor=blue!70!black,citecolor=blue!70!black,urlcolor=blue!70!black]
 {hyperref}

\makeatletter

\providecommand{\tabularnewline}{\\}
\newenvironment{cellvarwidth}[1][t]
    {\begin{varwidth}[#1]{\linewidth}}
    {\@finalstrut\@arstrutbox\end{varwidth}}

\usepackage{makecell}
\usepackage{booktabs}
\usepackage{float}
\usepackage[section]{placeins}
\usepackage{xcolor}
\usepackage{threeparttablex}
\usepackage{siunitx}
\usepackage{dcolumn}
\usepackage{adjustbox}
\usepackage{tikz}
\usetikzlibrary{shapes.geometric, arrows.meta, positioning}
\usepackage{enumitem}
\usepackage{tabularray}
\usepackage{pdflscape}
\usepackage[titletoc]{appendix}

\makeatother

\theoremstyle{plain}
\newtheorem{thm}{\protect\theoremname}
\newtheorem{assumption}{\protect\assumptionname}
\usepackage[american]{babel}
\usepackage[style=authoryear,backref=false, url=false, doi=false, eprint=false,maxcitenames=3,giveninits=true,uniquename=false,sorting=nyt]{biblatex}
\providecommand{\assumptionname}{Assumption}
\providecommand{\theoremname}{Theorem}
\providecommand{\remarkname}{Remark}
\providecommand{\propositionname}{Proposition}
\providecommand{\corollaryname}{Corollary}
\newtheorem{remark}{\protect\remarkname}
\newtheorem{proposition}{\protect\propositionname}
\newtheorem{corollary}{\protect\corollaryname}

\AtEveryBibitem{\clearfield{note}}
\begin{document}
\IfFileExists{Table/Empirical/cross_val_macros.tex}{%
  \newcommand{\NconsGroup}{302}
\newcommand{\NinconsGroup}{200}
\newcommand{\NallBlockC}{502}
\newcommand{\NbothPass}{38}
\newcommand{\PctConsPass}{39.2}
\newcommand{\PctInconsPass}{12.0}
\newcommand{\CorBkCrossVal}{0.75}
}{%
  \newcommand{\NconsGroup}{302}
  \newcommand{\NinconsGroup}{200}
  \newcommand{\NallBlockC}{502}
  \newcommand{\NbothPass}{38}
  \newcommand{\PctConsPass}{39.2}
  \newcommand{\PctInconsPass}{12.0}
  \newcommand{\CorBkCrossVal}{0.75}
}
\title{Nonparametric Identification and Estimation of Production Functions
Invariant to Productivity Dynamics\thanks{I am grateful to Yasutora Watanabe, Yuta Toyama, Shosei Sakaguchi,
Hidehiko Ichimura, and Satoshi Imahie for their insightful comments and detailed discussions.
I also thank Takanori Adachi, Daiya Isogawa, Yuta Kikuchi, Toshifumi Kuroda,
Yusuke Matsuki, Kentaro Nakamura, Masato Nishiwaki, Tatsushi Oka, Ryo Okui,
Naoya Sueishi, Hidenori Takahashi, and Naoki Wakamori for helpful comments,
as well as participants at the Japan Empirical Industrial Organization Workshop
and the Kansai Econometric Society Meeting. This research was financially supported
by the Project Research Program of the Joint Usage/Research Center Programs
at the Institute of Economic Research, Hitotsubashi University
(Grant Number: IERPK2437); the JST SPRING fellowship; and a
Grant-in-Aid for JSPS Fellows (Grant Number: 25KJ0910).
This research was conducted under approval number
20240708-stat-No1 dated July 8, 2024, by the Statistics Bureau,
Ministry of Internal Affairs and Communications. I utilized
microdata from the \textit{Census of
Manufactures} (Ministry of Economy, Trade and Industry) and the
\textit{Economic Census for Business Activity} (Ministry of
Internal Affairs and Communications; Ministry of Economy, Trade
and Industry). The views expressed in this paper are those of the
author and do not necessarily reflect the views of the Japanese
government or the ministries.
All remaining errors are my own.}}
\author{Rentaro Utamaru\thanks{Institute for Research in Contemporary Political and Economic, Waseda University,
1-104 Totsukamachi, Shinjuku-ku, Tokyo 169-8050, Japan. Email: rentaro.utamaru@gmail.com}}
\date{April 2026\\[0.5em]
\normalsize\href{https://r-utamaru.github.io/files/utamaru_nonmarkov_production.pdf}{Click here for the latest version}}
\maketitle

\vspace{-1.5em}
\begin{center}
\fbox{\parbox{0.85\textwidth}{\centering\small
\textbf{Preliminary Draft. Comments Welcome.}\\[0.2em]
The core identification theory and GMM estimator are complete.
Empirical results and Monte Carlo simulations are subject to revision.
Extensions to the GMM implementation of exclusion restrictions and
formal specification testing are in progress.}}
\end{center}
\vspace{-0.5em}
\begin{abstract}
Production function estimates underpin the measurement of firm-level
markups, allocative efficiency, and the productivity effects of
policy interventions. Since \textcite{olley1996thedynamics}, every major
proxy variable estimator has identified the production function
through a first-order Markov assumption on unobserved productivity;
I show that misspecification of this assumption generates persistent
upward bias in the materials elasticity that propagates into
overestimated markups and inflated treatment effects. I replace
the Markov restriction with conditional independence across three
intermediate input demands, a static condition grounded in input
market segmentation, and establish nonparametric identification
from a single cross-section. I develop a GMM estimator and
establish consistency and asymptotic normality. Monte Carlo
simulations confirm that the proposed estimator is unbiased across
Markov and non-Markov environments, while the standard estimator
exhibits persistent bias of up to 63 percent of the true materials
elasticity. In 502 Japanese manufacturing industries, the proposed
method yields systematically lower markups than the standard method
across the entire distribution (median 0.93 vs.\ 1.03), reducing
the share of industries with markups above unity from 54 to 37
percent. In a difference-in-differences analysis of the 2011
T\={o}hoku earthquake, the standard method overstates the
productivity loss by 0.40 percentage points, roughly \$3.6
billion (\textyen{}400 billion) per year.\\

\noindent\textsl{Keywords}: Production Function, Productivity,
Nonparametric Identification, Markups, Market Power\\
\textsl{JEL Classification Codes}: C13, C14, D24, L11, L40
\end{abstract}
\clearpage{}

\section{Introduction}

Estimated production functions underpin the measurement of market
power, allocative efficiency, and the effects of policy on firm
performance. The ratio of the materials elasticity to the materials
revenue share gives the firm-level markup
\parencite{deloecker2012markups}; the dispersion of the productivity
residual measures resource misallocation
\parencite{hsieh2009misallocation}; and the productivity level itself
serves as the outcome variable in studies of trade liberalization
\parencite{deloecker2013detecting}, R\&D investment
\parencite{doraszelski2013rdand}, and disaster recovery. These
downstream analyses inherit the production function estimate: if the
materials elasticity is biased, so is the markup, the misallocation
measure, and the treatment effect. The recent finding that markups
have risen across the global economy \parencite{deloecker2020rise}
relies on such estimates, making the consistency of the underlying
production function a first-order concern.
This paper asks whether the production function can be identified
without restricting how productivity evolves over time, and
documents the consequences when this restriction is removed.

Since \textcite{olley1996thedynamics}, every major production function
estimator has relied on the same structural restriction: productivity
must follow a first-order Markov process. This includes the methods of
\textcite{levinsohn2003estimating},
\textcite{ackerberg2015identification}, and
\textcite{gandhi2020onthe}, as well as dynamic panel approaches
\parencite{arellano1991sometests,blundell1998initial}. The Markov
assumption is not a regularity condition; it is the identifying
restriction that pins down the materials elasticity through the
transition equation. When productivity evolves endogenously through
R\&D, learning, or managerial turnover, omitting the relevant state
variables generates a transmission bias
\parencite{deloecker2007doexports,deloecker2013detecting,doraszelski2013rdand}.
The assumption also presupposes a stationary transition process,
ruling out structural breaks from aggregate shocks, regulatory
shifts, or technological change. More fundamentally,
\textcite{chen2024identifying} show that under the
potential outcomes framework, any treatment that alters the transition
path of productivity violates the Markov property by construction,
even when the treatment variable is included as a control.
The bias does not vanish with sample size, nor can it be removed by
adding treatment indicators to the Markov transition equation. The
Markov-based estimate is therefore inconsistent precisely in the
settings where productivity serves as an outcome variable, the
dominant use of production function estimation in applied work.

This paper shows that the Markov assumption is unnecessary for
identification. I replace it with a static condition: conditional
independence of demand shocks across three intermediate inputs (raw
materials, electricity, water). Three flexible inputs whose demands
respond to the same underlying productivity serve as three noisy
measurements of a common latent variable. Because each input is
procured from a separate market, the input-specific demand
shocks are mutually independent conditional on productivity and
observable controls. I recover the productivity distribution from
these signals using the spectral decomposition of
\textcite{hu2008instrumental} (hereafter HS08), without any
restriction on how productivity evolves over time. Identification
requires only a single cross-section; the data requirement
(firm-level quantities of three separate inputs) is met in
manufacturing censuses across several countries.\footnote{These
include India's Annual Survey of Industries, Canada's Annual Survey
of Manufacturing and Logging, the World Bank Enterprise Survey, and
the U.S.\ EIA Form~923. When labor adjusts rapidly to current
productivity, two intermediate inputs suffice
(footnote~\ref{rem:static_labor}).}

The substitution of assumptions has first-order consequences for
economic measurement. In 502 Japanese manufacturing industries, the
proposed method yields systematically lower markups than the
standard ACF method across the entire distribution: the ACF markup
CDF lies strictly to the right at every percentile. At the median,
the gap is 0.10 (proposed 0.93 vs.\ ACF 1.03), and the share of
industries with markups above unity falls from 54 percent under ACF
to 37 percent under the proposed method. The Markov assumption thus
inflates the measured degree of market power across the
manufacturing sector. Monte Carlo simulations
trace the mechanism: under potential outcome dynamics, ACF's bias in
the materials elasticity is $+0.19$ (63 percent of the true value),
while the proposed estimator is unbiased.

In a difference-in-differences analysis of the 2011 T\={o}hoku
earthquake, the standard method overstates the productivity loss by
0.40 percentage points, corresponding to roughly \$3.6 billion
(\textyen{}400 billion) per year. Because identification is static, the estimator
can be applied period by period, producing time-varying estimates
of production technologies without imposing structural stability on
the productivity process. The empirical application documents
substantial temporal variation across 2003--2020 and yields
divergent conclusions regarding allocative efficiency as assessed
through the \textcite{olley1996thedynamics} decomposition. The
Markov assumption does not merely introduce statistical noise; it
systematically inflates measured market power and distorts policy
conclusions.

The substitution involves an honest tradeoff. The Markov assumption,
when it holds, provides efficiency gains by exploiting the
time-series history of productivity. The conditional independence
assumption uses only within-period information, so under correct
Markov specification, standard estimators have lower variance. I
document this in Monte Carlo simulations under correct Markov
specification. The value of
the proposed method lies in the broad class of applications where the
Markov assumption is questionable or directly contradicted by the
research design, including any study in which a treatment alters
productivity dynamics \parencite{chen2024identifying}.

The two assumptions differ in the nature of their economic content.
The Markov restriction constrains the time-series evolution of
unobserved productivity; no economic theory predicts that
productivity should follow a first-order autoregression, and the
assumption cannot be tested within the proxy variable framework.
The conditional independence restriction constrains input market
structure: it specifies what threatens identification (common
shocks across input markets) and what restores it (conditioning on
observable controls $z_{jt}$ that absorb the common component).
The threats are enumerable (demand fluctuations, aggregate markup
changes, correlated procurement), and the defenses are
observable (inventory, aggregate output, fixed effects;
Section~\ref{subsec:relax_indep}). The microfoundations in
Appendix~\ref{sec:micro_fnd} derive the demand shocks from a
cost-minimization problem with input-specific markdowns, making
the economic content of the assumption precise. No analogous transparency is available for the Markov assumption:
within the proxy variable framework, no observable implication
distinguishes a correctly specified AR(1) from an AR(2) or a
potential outcome process. By contrast, the conditional
independence assumption yields a testable necessary condition
(Remark~\ref{rem:testable_exclusion}): in 502 industries, the
pairwise convergence diagnostic supports the identifying
restriction for capital, providing direct evidence on the
empirical plausibility of the assumption. When the assumption is
violated through a common shock to electricity and water (the
most economically salient threat), Monte Carlo analysis
(Section~\ref{sec:simulation}, Appendix~\ref{app:ci_violation})
shows that the resulting bias in $\hat{\beta}_m$ is
\emph{upward}, the same direction as the Markov misspecification
bias. The empirical finding that the proposed method yields
\emph{lower} $\hat{\beta}_m$ than ACF therefore cannot be
explained by conditional independence violation; it is consistent
only with Markov misspecification in the standard estimator.

\begin{table}[tbph]
\centering
{\footnotesize\caption{{\footnotesize{} Comparison with other studies}}\label{tab:comparison}
}{\footnotesize\par}

\begin{adjustbox}{max width=\linewidth}
\begin{tabular}{lV{\linewidth}ccccc}
\toprule
 & {\small Identification}{\small\par}

{\small Method} & {\small Req.\ Markov} & \begin{cellvarwidth}[t]
\centering
{\small Req.\ Scalar}{\small\par}

{\small Unobs.}
\end{cellvarwidth} & \begin{cellvarwidth}[t]
\centering
{\small Nonpara}{\small\par}

{\small Non-Hicks}
\end{cellvarwidth} & \begin{cellvarwidth}[t]
\centering
{\small Function}{\small\par}

{\small Type}
\end{cellvarwidth} & \begin{cellvarwidth}[t]
\centering
{\small Proxy or}{\small\par}

{\small Control}
\end{cellvarwidth}\tabularnewline
\midrule
{\small Proposed method} & \citeauthor*{hu2008instrumental} &  &  & {\small$\checkmark$} & {\small Gross} & {\small$e_{jt},w_{jt}$}\tabularnewline
{\small\textcite{gandhi2020onthe}} & {\small FOC + Markov} & {\small$\checkmark$} & {\small$\checkmark$} &  & {\small Gross} & {\small$s_{jt}$ (share)}\tabularnewline
{\small\textcite{dotyadynamic}} & \citeauthor*{hu2008instrumental} & {\small$\checkmark$} & {\small$\checkmark$} & {\small$\checkmark$} & {\small Gross} & {\small$I_{jt},y_{jt+1}$}\tabularnewline
{\small\textcite{hu2020estimating}} & \citeauthor*{hu2008instrumental} & {\small$\checkmark$} & {\small$\checkmark$} &  & {\small Gross} & {\small$I_{jt},m_{jt+1}$}\tabularnewline
{\small\textcite{brandestimating}} & \citeauthor*{hu2008instrumental} & {\small$\checkmark$} & {\small$\checkmark$} &  & {\small Gross} & {\small$y_{jt-1},y_{jt+1}$}\tabularnewline
{\small\textcite{zeng2023identification}} & \begin{cellvarwidth}[t]
{\small\citeauthor*{matzkin2003nonparametric}}\\
{\small\citeauthor*{imbens2009identification}}
\end{cellvarwidth} & {\small$\checkmark$} & {\small$\checkmark$} & {\small$\checkmark$} & {\small Value} & {\small$K_{jt-1},I_{jt-1}$}\tabularnewline
{\small\textcite{ackerberg2022nonparametric}} & \begin{cellvarwidth}[t]
{\small\citeauthor*{matzkin2003nonparametric}}\\
{\small\citeauthor*{imbens2009identification}}
\end{cellvarwidth} & {\small$\checkmark$} & {\small$\checkmark$} & {\small$\checkmark$} & {\small Gross} & {\small$\left\{ y_{j\tau},x_{j\tau}\right\} ^{t-1}_{\tau=t-M}$}\tabularnewline
{\small\textcite{navarrononparametric}} & \begin{cellvarwidth}[t]
{\small\citeauthor*{matzkin2003nonparametric}}\\
{\small\citeauthor*{imbens2009identification}}
\end{cellvarwidth} & {\small$\checkmark$} & {\small$\checkmark$} & {\small$\checkmark$} & {\small Gross} & {\small$x_{jt-1},\mathcal{Y}_{jt-1}$}\tabularnewline
{\small\textcite{pan2022identification}} & \begin{cellvarwidth}[t]
{\small\citeauthor*{matzkin2003nonparametric}}\\
{\small\citeauthor*{imbens2009identification}}
\end{cellvarwidth} & {\small$\checkmark$} & {\small$\checkmark$} & {\small$\checkmark$} & {\small Gross} & {\small$\left\{ y_{j\tau},x_{j\tau}\right\} ^{t-1}_{\tau=t-M}$}\tabularnewline
\bottomrule
\end{tabular}
\end{adjustbox}

\medskip
\small\raggedright \textit{Notes: ``Req.\ Markov'' indicates whether the
method requires a Markov assumption on productivity; a blank cell
indicates the method does not.
``Req.\ Scalar Unobs.'' indicates whether the method requires
scalar unobservability (productivity as the sole unobservable
in input demand); a blank cell indicates that the method permits
input-specific demand shocks.
``Nonpara Non-Hicks'' indicates nonparametric identification
under non-Hicks-neutral specifications.
``Function Type'' distinguishes gross output from value-added
production functions. ``Proxy or Control'' lists the proxy
variables or control variables used for identification.
For the proposed method, the ``Nonpara Non-Hicks'' checkmark
refers to the identification result in
Appendix~\ref{app:prod_recovery}; the implemented estimator
is Hicks-neutral (equation~\eqref{eq:gmm_prod}).
The proposed method requires conditional independence of
input-specific demand shocks
(Assumption~\ref{ass:cond_indep}) in place of the Markov and
scalar unobservability conditions; both blank cells in its row
reflect this substitution, not an absence of identifying
assumptions.}
\end{table}

I make three contributions. First, I show that the cross-sectional
covariance structure among three flexible intermediate inputs fully
substitutes for the Markov restriction, delivering nonparametric
identification of the production function and the productivity
distribution from a single period. This is a substitution, not a
relaxation, of identifying assumptions. The mapping to the HS08
framework provides density identification
(Theorem~\ref{thm:density_id}); the theoretical contribution of
this paper lies in what follows. I characterize the residual
indeterminacy that arises when Markov is dropped: any two
observationally equivalent structures differ only by a location
shift $\Delta(k,l)$ applied to productivity, ruling out nonlinear
transformations (Theorem~\ref{thm:obs_equiv}). I provide two routes
that close this indeterminacy without dynamic assumptions, an
exclusion restriction (Corollary~\ref{thm:exclusion}) and a
homothetic regularity condition
(Theorem~\ref{thm:homothetic_id}).

While nonparametric sieve
estimation could in principle implement the identification results
directly, the high-dimensional numerical integration is
computationally prohibitive for census-scale panels. I develop a
Cobb--Douglas GMM estimator designed for applied use, and establish
its consistency and asymptotic normality
(Theorem~\ref{thm:asymptotics}); the extension to translog
production is developed in Appendix~\ref{app:translog}. The conditional independence
assumption yields a pairwise convergence diagnostic
(Remark~\ref{rem:testable_exclusion}) with no analogue under the
Markov assumption: within the proxy variable framework, no
restriction distinguishes a correctly specified AR(1) from an AR(2)
or a potential outcome process. In 502 industries, this diagnostic
converges to zero for capital but not for labor, providing direct
evidence on the differential applicability of the exclusion
restriction.

Second, I document that the Markov assumption generates a
systematic upward bias in measured market power. Monte Carlo
simulations show that ACF's bias in $\hat{\beta}_m$ does not
vanish as sample size grows: $+0.026$ under AR(2) dynamics and
$+0.266$ under potential outcome dynamics. In the empirical
application, ACF produces higher materials elasticities and higher
markups at every percentile across 502 industries. The gap crosses
the competitive threshold and reverses the policy-relevant
conclusion about market structure. The recovered productivity
measures also show stronger associations with economic fundamentals
than those from the standard method, consistent with a higher
signal-to-noise ratio from separating input-specific demand shocks
(Section~\ref{sec:empirical}).

Third, I show that productivity measures recovered from the
proposed method are valid under the potential outcomes framework
(Proposition~\ref{prop:omega_hat_D}), resolving the inconsistency
identified by \textcite{chen2024identifying}. Because the estimator
uses no transition equation, the recovered productivity is
invariant to how a treatment operates on productivity dynamics.
The earthquake event study illustrates the practical consequence:
the proposed estimate is $-1.28$ percent while ACF yields $-1.68$
percent, a gap that arises because the ACF estimate lacks the
theoretical guarantee that the production function parameters are
consistently estimated under treatment-induced dynamics.
The same static, $\omega$-conditional structure also renders
the estimator robust to endogenous exit: under the standard
timing convention where exit precedes input choice, conditioning
on $\omega$ absorbs survival selection, and no survival probability
correction is needed (Remark~\ref{rem:selection}).

\paragraph{Related literature.}

Table~\ref{tab:comparison} positions my identification strategy
within the recent literature. The most closely related work is
\textcite{gandhi2020onthe} (GNR). GNR's Theorem~1 establishes
that proxy variable methods alone cannot identify the gross
output production function; additional within-period,
cross-sectional information is required. Both approaches
supply such information: GNR through the structural link between the
production function and the firm's first-order condition, yielding
a nonparametric share regression that directly identifies the
flexible input elasticity; my approach through the measurement
error structure of HS08, using conditional independence across
intermediate inputs to recover the distribution of unobserved
productivity.

The two approaches rest on different assumptions regarding input
markets. GNR's first-order condition requires competitive input
markets with common prices and that any unobserved component in the
share equation is non-persistent (their Appendix~O6, Assumption~7);
when input-specific markdowns or procurement frictions are
persistent, the FOC-based estimation equation does not hold and the
share regression is misspecified. My framework
permits persistent, input-specific demand shocks arising from
procurement relationships, supply contracts, or input-specific
markdowns; identification requires only mutual independence across
inputs at each time point, accommodating arbitrary serial
dependence within each shock. GNR's second stage recovers capital
and labor elasticities using the Markov structure; my approach
requires no dynamic assumption at any stage. The scalar
unobservability case is a special case of my model, obtained when
the input-specific shocks are degenerate
(Section~\ref{sec:simulation}).

Alternative approaches that exploit static first-order conditions
\parencite{grieco2016production,caselli2025productivity} avoid
dynamic assumptions but generally require parametric restrictions
on functional forms and the demand system. Additional related work
is summarized in Table~\ref{tab:comparison}. Several recent papers
apply the HS08 framework to production functions
\parencite{brandestimating,hu2020estimating,dotyadynamic},
but all use lagged variables as instruments and therefore retain the
Markov assumption. \textcite{zeng2023identification} avoid the
Markov restriction at the estimation stage but presuppose it for
the investment policy function. A growing literature on
non-Hicks-neutral identification
\parencite{navarrononparametric,ackerberg2022nonparametric,pan2022identification,kasahara2023identification,dotyadynamic},
including factor-augmenting approaches
\parencite{doraszelski2018measuring,demirerproduction,raval2019themicro},
retains first- or higher-order Markov assumptions; my identification
results extend to these models without dynamic restrictions
(Appendix~\ref{app:prod_recovery}), though the implemented
estimator uses the Hicks-neutral Cobb--Douglas specialization
(Section~\ref{sec:gmm_spec}).

The remainder of the paper is organized as follows.
Section~\ref{sec:model_identification} presents the model and the
nonparametric identification results.
Section~\ref{sec:estimation} develops the GMM estimator.
Section~\ref{sec:simulation} presents Monte Carlo evidence.
Section~\ref{sec:empirical} applies the estimator to 502 Japanese
manufacturing industries.
Section~\ref{sec:conclusion} concludes.

\section{Model and Identification}

\label{sec:model_identification}

This section establishes the identification strategy in three steps.
First, I show that three conditionally independent input demands
identify the joint distribution of productivity and inputs within
each capital-labor cell (Theorem~\ref{thm:density_id}); any two
observationally equivalent structures differ only by a location
shift $\Delta(k,l)$ (Theorem~\ref{thm:obs_equiv}).
Second, I provide two conditions that eliminate this indeterminacy:
an exclusion restriction (Corollary~\ref{thm:exclusion}) and a
homothetic regularity condition
(Theorem~\ref{thm:homothetic_id}). The exclusion restriction
carries a testable implication
(Remark~\ref{rem:testable_exclusion}). The formal statement of
density identification (Theorem~\ref{thm:density_id}) and the technical regularity
conditions (Assumptions~\ref{ass:injectivity}--\ref{ass:labeling}) are in
Appendix~\ref{app:identification_details}.
These identification results translate into three groups of moment
conditions in the GMM estimator of
Section~\ref{sec:estimation}: proxy moments (Block~A),
covariance moments (Block~B), and curvature moments (Block~C).
When these terms appear below, they refer forward to
Section~\ref{sec:gmm_approach}.

\subsection{Model Setup}

I define the general gross output production function for
firm $j$ at time $t$ as follows: 
\begin{equation}
y_{jt}=f_{t}(k_{jt},l_{jt},m_{jt},e_{jt},w_{jt},\omega_{jt})+\varepsilon_{jt}\label{eq:prod_func}
\end{equation}
Here, $y_{jt}$ is the logarithm of output, $k_{jt}$ and $l_{jt}$
are the logarithms of capital and labor. Following the production function literature
\parencite{olley1996thedynamics,ackerberg2015identification,bond2005adjustment},
capital and labor are treated as dynamic or quasi-fixed inputs whose
current values are predetermined relative to intermediate input
decisions.
The model requires at least three distinct intermediate inputs:
$m_{jt}$ (raw materials), $e_{jt}$ (electricity), and $w_{jt}$ (industrial water).
Three inputs are the minimum required by the
\textcite{hu2008instrumental} spectral decomposition: it identifies
the latent productivity distribution from three mutually independent
measurements of a common latent variable; two measurements do not
suffice for nonparametric identification without additional
restrictions.\footnote{When labor adjusts rapidly to current
productivity, it may serve as a third measurement of $\omega_{jt}$,
reducing the required number of flexible intermediate inputs from
three to two; see Footnote~\ref{rem:static_labor} for details.}
$\omega_{jt}$ is the firm's productivity, unobserved by the
econometrician but known to the firm when making input decisions.
$\varepsilon_{jt}$ denotes ex-post production shocks (measurement
error or unexpected disruptions), unobserved by both the firm and
the econometrician at the time of input choice.

The state variable vector $x_{jt}=(k_{jt},l_{jt},z_{jt})$
determines input demand. Here, $k_{jt}$ and $l_{jt}$ are the
primary inputs, while $z_{jt}$ represents additional firm-specific
state variables such as inventory levels, input prices, or market
conditions that do not directly enter the production function but
influence input demand. Given $x_{jt}$, the demand for each
intermediate input is determined as follows:
\begin{align}
m_{jt} & =g_{m}(x_{jt},\omega_{jt},\tau_{jt})\label{eq:demand_m}\\
e_{jt} & =g_e(x_{jt},\omega_{jt},\nu_{jt})\label{eq:demand_mp}\\
w_{jt} & =g_w(x_{jt},\omega_{jt},\eta_{jt})\label{eq:demand_mpp}
\end{align}
The functions $g_{m}(\cdot)$, $g_e(\cdot)$, and $g_w(\cdot)$
are unknown and potentially nonlinear. $\tau_{jt}$,
$\nu_{jt}$, and $\eta_{jt}$ are unobserved shock terms specific
to each input demand, following
\textcites{hu2020estimating}{brandestimating}{dotyadynamic}. These
shocks capture optimization errors, supply disruptions, and
adjustment frictions not explained by productivity and state
variables. Appendix~\ref{sec:micro_fnd} derives the demand system from
a cost-minimization problem under imperfect input markets and shows
that these shocks correspond to input-specific markdowns, prices,
and wedges; specifically, the components of markdowns and wedges
orthogonal to observable state variables.

The presence of input-specific shocks represents a departure from
the scalar unobservability assumption maintained in
\textcite{olley1996thedynamics},
\textcite{levinsohn2003estimating},
\textcite{ackerberg2015identification}, GNR, and others, which
requires productivity to be the sole unobservable affecting input
demand. When scalar unobservability fails because firm-level input
prices, markdowns, or wedges are unobserved, standard proxy variable
estimators are inconsistent
\parencite{jaumandreu2021reexamining,doraszelski2025production}. In
my framework, all unobserved firm-specific heterogeneity beyond
productivity is absorbed into $\tau_{jt}, \nu_{jt}, \eta_{jt}$, and
identification requires only that these shocks be mutually
independent across inputs, not that they be absent. Scalar
unobservability is nested as the special case $\tau = \nu = \eta = 0$
at the model level; the identification strategy requires
non-degenerate demand shocks and is therefore complementary to,
rather than a generalization of, scalar inversion methods.
From the standpoint of the cost-minimization model in
Appendix~\ref{sec:micro_fnd}, $\tau = \nu = \eta = 0$ requires
that all firms in an industry face identical input prices, identical
markdowns in every input market, and make no optimization errors in
input choice. In practice, firms negotiate procurement contracts
individually, face supplier-specific delivery terms, and adjust
input quantities with heterogeneous frictions. The presence of
input-specific demand shocks is the empirically relevant case; the
proposed framework treats these shocks as a source of identifying
information rather than a nuisance to be assumed away.

This formulation also addresses the collinearity problem identified
by \textcite{gandhi2020onthe}: under scalar unobservability, flexible
inputs determined by static optimization lack sufficient residual
variation to identify the gross production function
\parencite{ackerberg2015identification,bond2005adjustment}. GNR
resolve this problem by exploiting the first-order condition for the
flexible input, which identifies its output elasticity from the
revenue share. My approach resolves the collinearity through
independent input-specific shocks, which supply the cross-sectional
variation needed for identification via the measurement error
structure of HS08, without relying on the first-order condition or
dynamic moment conditions. The practical difference is that the
share regression requires the first-order condition to hold with
common input prices, whereas my approach permits firm-specific
input prices and markdowns (Appendix~\ref{sec:micro_fnd}).

\subsection{Assumptions for Identification}
\label{sec:assumptions}

The identification theory rests on two substantive assumptions
stated here, together with three regularity conditions
(Assumptions~\ref{ass:injectivity}--\ref{ass:labeling}) collected in
Appendix~\ref{app:identification_details}.

\begin{assumption}[Additive Error Structure]
\label{ass:additive_error}
The production function has an additive error structure:
\begin{equation}
  y_{jt} = f_t(k_{jt}, l_{jt}, m_{jt}, e_{jt}, w_{jt},
  \omega_{jt}) + \varepsilon_{jt},
  \label{eq:prod_func_additive}
\end{equation}
where the ex-post shock $\varepsilon_{jt}$ satisfies
\[
  \mathbb{E}\bigl[\varepsilon_{jt} \mid k_{jt}, l_{jt}, m_{jt},
  e_{jt}, w_{jt}, \omega_{jt}\bigr] = 0.
\]
\end{assumption}

\textit{Role and economic content.} This is standard in the production
function literature \parencite{olley1996thedynamics,ackerberg2015identification}.
The shock $\varepsilon_{jt}$ captures ex-post deviations (measurement
error, unexpected disruptions) that are realized after input choices
are made and are therefore uncorrelated with all inputs and
productivity. It acts as classical measurement error in the
dependent variable and inflates standard errors but does not bias
the production function estimates
(Theorem~\ref{thm:asymptotics}).

\begin{assumption}[Conditional Independence]
\label{ass:cond_indep}
The demand shocks $(\tau_{jt}, \nu_{jt}, \eta_{jt})$ for the three
intermediate inputs are mutually independent, conditional on
productivity $\omega_{jt}$ and state variables
$x_{jt} = (k_{jt}, l_{jt}, z_{jt})$:
\[
  f_{\tau, \nu, \eta \mid \omega, x}
  = f_{\tau \mid \omega, x} \cdot f_{\nu \mid \omega, x}
  \cdot f_{\eta \mid \omega, x}.
\]
Mutual independence is required; pairwise independence does not
suffice for the spectral decomposition of HS08.
\end{assumption}

\textit{Role.} This is the substantive identifying condition. Together
with the regularity conditions in
Appendix~\ref{app:identification_details}
(Assumptions~\ref{ass:injectivity}--\ref{ass:labeling}), it enables
the unique spectral decomposition of the integral
equation~\eqref{eq:integral_eq}. Conditional independence is the
economically substantive condition; it restricts the data generating
process rather than regularity of the operators.

\textit{Economic content.} The assumption posits that, for a firm
with given state variables and productivity level, an unexpected
shock to raw material demand (e.g., a supply chain disruption) is
independent of a shock to electricity demand (e.g., an unscheduled
rate surcharge). This is natural when input markets are
segmented: raw materials, electricity, and water are procured
through distinct channels, under separate contracts, with
different suppliers. The common components of demand variation (product demand
fluctuations, aggregate markup changes) are captured by $x_{jt}$;
$\tau_{jt}, \nu_{jt}, \eta_{jt}$ represent the residual,
input-specific components. The microfoundations in
Appendix~\ref{sec:micro_fnd} make this structure precise.

\subsection{Interpretation and Robustness of the Conditional Independence Assumption}

\label{subsec:relax_indep}

The general principle is as follows. Common shocks that affect all three input
demands (product demand fluctuations, markup variation, aggregate
input price movements) can be absorbed by projecting onto
observable control variables $z_{jt}$; the shock terms
$\tau_{jt}, \nu_{jt}, \eta_{jt}$ are then defined as the
orthogonal residuals of this projection
(Appendix~\ref{sec:micro_fnd}). The independence assumption
therefore requires only that the \textit{residual}, input-specific
components of demand variation are mutually independent.

Several potential threats illustrate this principle.
\textit{Unobserved demand shocks} generate common variation across
all inputs, but can be proxied by inventory fluctuations
\parencite{kumar2019productivity} or recovered from revenue data
\parencite{kasahara2020nonparametric}, included in $z_{jt}$.
\textit{Product market power} affects all input demands through
marginal revenue; following
\textcite{ackerbergproduction,jaumandreu2025robustproduction},
low-dimensional sufficient statistics for the markup (e.g.,
competitors' output, average variable cost) can be included in
$z_{jt}$.\footnote{Under Cournot competition,
\textcite{ackerbergproduction} show that the total output of
competitors serves as a sufficient statistic.}
\textit{Input market power} (markdowns) may generate common
bargaining advantages across inputs, but the common component
depends on firm attributes (size, liquidity) captured by
$(k_{jt}, l_{jt}, z_{jt})$; what remains in the shock terms are
idiosyncratic variations from individual supplier relationships.
It is economically reasonable that the outcome of negotiations
with raw material suppliers is independent of electricity rate
negotiations, conditional on firm size and other
observables.\footnote{When an intermediate input is traded on
competitive commodity markets, the firm is a price-taker and
the markdown on that input vanishes.
\textcite{avignon2025markups} exploit this property for
globally traded dairy commodities to separately identify
markups and markdowns on other inputs.}
\textit{Common input price shocks} (e.g., oil price hikes)
affect multiple inputs symmetrically and are controlled by time
fixed effects or industry-specific deflators in $z_{jt}$.
Firm-specific price variations are absorbed as part of the
structural shock terms and need only be independent across inputs.

\subsection{Identification of the Production Function}
\label{sec:prod_func_id}

The identification proceeds in two stages: first, I recover the
production function and productivity distribution within each
capital-labor cell $(k_0, l_0)$; second, I characterize and resolve
the residual indeterminacy that arises when linking these
cell-specific results across different values of $(k, l)$.%
\footnote{In the following, firm subscripts $j$ are suppressed as
I discuss population-level arguments. The time subscript $t$ is
retained only to indicate time-variation in the production function
$f_t$.}

\subsubsection{Identification within Each $(k, l)$}
\label{sec:within_kl}

The foundational identification result applies the spectral
decomposition of HS08, whose conditions I verify under the
present assumptions.

\begin{thm}[Identification of Densities]
\label{thm:density_id}
Under Assumptions~\ref{ass:additive_error}--\ref{ass:cond_indep}
and Assumptions~\ref{ass:injectivity}--\ref{ass:labeling},
the observable conditional joint density
$f_{m_{jt}, e_{jt} \mid x_{jt}, w_{jt}}$ uniquely identifies
the three unknown conditional density functions:
$f_{m_{jt} \mid \omega_{jt}, x_{jt}}$,
$f_{e_{jt} \mid \omega_{jt}, x_{jt}}$, and
$f_{\omega_{jt} \mid x_{jt}, w_{jt}}$.
\end{thm}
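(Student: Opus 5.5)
The plan is to map the model into the measurement-error framework of HS08 and verify its conditions. Condition throughout on the state vector $x=(k,l,z)$. The three measurements of the latent $\omega$ are $m$ (the outcome-type measurement), $e$ (the mismeasured regressor), and $w$ (the instrument).

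\textbf{Step 1: conditional independence of the measurements.} By equations \eqref{eq:demand_m}--\eqref{eq:demand_mpp}, once $(\omega,x)$ is fixed, $m$, $e$ and $w$ are measurable functions of $\tau$, $\nu$ and $\eta$ respectively. Assumption~\ref{ass:cond_indep} gives mutual independence of these shocks given $(\omega,x)$. Hence $m$, $e$ and $w$ are mutually independent given $(\omega,x)$, and
\[
f_{m\mid e,\omega,w,x}=f_{m\mid\omega,x},\qquad f_{e\mid\omega,w,x}=f_{e\mid\omega,x}.
\]
Substituting into the law of total probability yields the integral equation
\begin{equation*}
f_{m,e\mid x,w}(m,e)=\int f_{m\mid\omega,x}(m)\,f_{e\mid\omega,x}(e)\,f_{\omega\mid x,w}(\omega)\,d\omega .
\end{equation*}
Assumption~\ref{ass:additive_error} is not needed here; output does not enter this equation.

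\textbf{Step 2: operator form.} Fix $x$ and define integral operators on $L^1$:
\begin{itemize}
\item $L_{e\mid w}$ with kernel $f_{e\mid x,w}$;
\item $L_{e\mid\omega}$ with kernel $f_{e\mid\omega,x}$;
\item $L_{\omega\mid w}$ with kernel $f_{\omega\mid x,w}$;
\item $L_{m;e\mid w}$ with kernel $f_{m,e\mid x,w}(m,\cdot)$ for fixed $m$;
\item $D_{m\mid\omega}$, the diagonal multiplication operator by $f_{m\mid\omega,x}(m\mid\cdot)$.
\end{itemize}
Step 1 then gives two factorizations:
\[
L_{e\mid w}=L_{e\mid\omega}L_{\omega\mid w},\qquad L_{m;e\mid w}=L_{e\mid\omega}D_{m\mid\omega}L_{\omega\mid w}.
\]
The injectivity condition (Assumption~\ref{ass:injectivity}) ensures that $L_{e\mid\omega}$ and $L_{\omega\mid w}$ are injective, so $L_{e\mid w}$ has a left inverse on its range. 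This gives the eigen-decomposition
\[
L_{m;e\mid w}L_{e\mid w}^{-1}=L_{e\mid\omega}\,D_{m\mid\omega}\,L_{e\mid\omega}^{-1},
\]
in which the left-hand side is observable.

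\textbf{Step 3: uniqueness of the decomposition.} This is the main obstacle. Following HS08 Theorem~1, three things must be pinned down.
\begin{itemize}
\item \emph{Distinct eigenvalues.} Assumption~\ref{ass:distinct} requires that for $\omega\neq\omega'$ the functions $f_{m\mid\omega,x}$ and $f_{m\mid\omega',x}$ differ on a set of positive measure. Because the eigenfunctions do not depend on $m$, varying $m$ then separates the eigenspaces.
\item \emph{Scale of the eigenfunctions.} Each eigenfunction $f_{e\mid\omega,x}$ is a density, so normalizing it to integrate to one fixes its scale.
\item \emph{Labeling.} The eigenfunctions are indexed only up to a relabeling of $\omega$. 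Assumption~\ref{ass:labeling} supplies a known functional $M$, for instance a mean or median normalization under which $M[f_{e\mid\omega,x}]$ is monotone in $\omega$, that orders them.
\end{itemize}
Uniqueness of the spectral decomposition of a bounded operator (HS08's appeal to Dunford--Schwartz) then identifies $f_{e\mid\omega,x}$ and $f_{m\mid\omega,x}$.

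\textbf{Step 4: the latent distribution.} With $L_{e\mid\omega}$ known and injective, $L_{\omega\mid w}=L_{e\mid\omega}^{-1}L_{e\mid w}$ recovers $f_{\omega\mid x,w}$.

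Finally, I would check that boundedness and the $L^1$ regularity of the densities carry over when conditioning on $x$. Because everything is conditional on $x$, the argument is run pointwise in $x$.
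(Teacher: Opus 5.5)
Your Step~1 is the paper's derivation of \eqref{eq:integral_eq}. From there the paper simply invokes Theorem~1 of HS08, whereas you unpack that argument, and the unpacking breaks at the end of Step~2. Injectivity of $L_{e\mid\omega}$ and $L_{\omega\mid w}$ gives you the identity $L_{m;e\mid w}L_{e\mid w}^{-1}=L_{e\mid\omega}D_{m\mid\omega}L_{e\mid\omega}^{-1}$ only on $\mathrm{Range}(L_{e\mid w})$. Injectivity of $L_{\omega\mid w}$ controls the null space of that factor, not its range, so nothing makes this subspace rich enough for the spectral-uniqueness step. Concretely, suppose $w$ is binary. Then $L_{\omega\mid w}$ is injective as soon as $f_{\omega\mid x,w}(\cdot\mid x,0)$ and $f_{\omega\mid x,w}(\cdot\mid x,1)$ are linearly independent, yet $\mathrm{Range}(L_{e\mid w})$ is two-dimensional. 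The data then reveal $L_{e\mid\omega}D_{m\mid\omega}L_{e\mid\omega}^{-1}$ only on a plane, and no continuum of eigenfunctions $f_{e\mid\omega,x}(\cdot\mid\omega)$ can be read off. The missing ingredient is HS08's injectivity condition on the operator between the two observables, here $L_{w\mid e}$ with kernel $f_{w\mid e,x}$ (their $L_{z\mid x}$). By duality, $L_{e\mid w}^{*}g=L_{w\mid e}(f_{e}\,g)/f_{w}$ for $g\in L^\infty$, so injectivity of $L_{w\mid e}$ makes $\mathrm{Range}(L_{e\mid w})$ dense. Only then does the identity determine the operator whose spectral decomposition you need. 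The paper's Assumption~\ref{ass:injectivity}, which you followed, is phrased with $L_{\omega\mid w}$ rather than $L_{w\mid e}$; the paper's proof escapes this step only by deferring to HS08.

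Step~3 needs two further corrections. First, an $M$ under which $M[f_{e\mid\omega,x}]$ is merely monotone in $\omega$ only orders the eigenpairs. For any strictly increasing $\kappa$, the relabeling $\tilde\omega=\kappa(\omega)$, with $f_{\omega\mid x,w}$ transformed by the Jacobian, fits $f_{m,e\mid x,w}$ equally well and keeps $M$ monotone, so the three densities are not unique. You need an exact labeling, $M[\,\cdot\,]=\omega$. Assumption~\ref{ass:labeling} provides one, imposed on $f_{m\mid\omega,x}$ rather than on $f_{e\mid\omega,x}$. That works in your arrangement, because each eigenfunction $f_{e\mid\omega,x}(\cdot\mid\omega)$ is recovered together with its eigenvalue function $m\mapsto f_{m\mid\omega,x}(m\mid\omega)$, but you should say so. Second, the distinctness you invoke, that $f_{m\mid\omega,x}$ differs across $\omega$ on a set of positive measure, is the right condition for your operator and is HS08's distinct-eigenvalue condition. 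It is not, however, what Assumption~\ref{ass:distinct_eigen} states. That assumption concerns $f_{w\mid\omega,x}$, and since $w$ is the conditioning variable it does nothing to separate the eigenvalues $f_{m\mid\omega,x}(m\mid\omega)$. State the condition you actually use as an explicit hypothesis rather than attributing it to the paper's assumption.
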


The proof, which verifies the conditions of HS08's Theorem~1
for the integral equation~\eqref{eq:integral_eq}, is in
Appendix~\ref{app:density_id_proof}.

As a consequence of Theorem~\ref{thm:density_id} and
equation~\eqref{eq:bayes_omega}, for each fixed $(k_0, l_0)$, the
following are nonparametrically identified: the conditional densities
$f_{m \mid \omega, k_0, l_0}$, $f_{e \mid \omega, k_0, l_0}$,
$f_{w \mid \omega, k_0, l_0}$, and
$f_{\omega \mid k_0, l_0, m, e, w}$.

Using these identification results, I recover the structure of
$f_t$ as a function of $(m, e, w, \omega)$. I focus on the Hicks-neutral specification, widely adopted in the
empirical literature, and defer the general case to
Appendix~\ref{app:prod_recovery}. Under this specification
$y = g_t(k, l, m, e, w) + \omega + \varepsilon$,
Assumption~\ref{ass:additive_error} implies
\begin{equation}
  g_t(k_0, l_0, m, e, w) 
  = \mathbb{E}[y \mid x, m, e, w] 
  - \mathbb{E}[\omega \mid x, m, e, w].
  \label{eq:hicks_neutral_id}
\end{equation}
Here $g_t$ represents the component of the production technology that
depends on intermediate inputs, with productivity $\omega$ separated
out. The first term on the right-hand side is a conditional
expectation identified directly from the data, and the second is
computable from the posterior density in
equation~\eqref{eq:bayes_omega}. Thus $g_t$ is identified as a
function of $(m, e, w)$ without additional assumptions. For the 
general non-Hicks-neutral model, 
$f_t(k_0, l_0, m, e, w, \omega)$ is identified as a function of 
$(m, e, w, \omega)$ under additional regularity conditions on the 
distribution of $\varepsilon$; see Appendix~\ref{app:prod_recovery} 
for details.

For each fixed $(k_0, l_0)$, the conditional distribution
$f_{\omega \mid k_0, l_0, m, e, w}$ is fully characterized,
and the conditional expectation
\begin{equation}
  \hat{\omega}_{jt} \equiv
  \mathbb{E}[\omega_{jt} \mid x_{jt}, m_{jt}, e_{jt}, w_{jt}]
  = \int \omega \, f_{\omega \mid x, m, e, w}
  (\omega \mid \cdot)\, d\omega
  \label{eq:omega_hat}
\end{equation}
provides a firm-level productivity measure for each firm $j$ and
period $t$. The empirical applications of this within-$(k,l)$
identification, including markup estimation and policy evaluation,
are developed in Section~\ref{sec:implications_empirical} after the
identification theory is completed.

However, to identify $f_t$ as a function of $(k, l)$ as well,
additional structure is needed. (When labor adjusts rapidly to
current productivity, it serves as an additional measurement,
reducing the required intermediate inputs from three to
two.\footnote{\label{rem:static_labor}When labor adjusts within the
production period, it serves as a third measurement of
$\omega_{jt}$, and the HS08 identification procedure
(Theorem~\ref{thm:density_id}) applies to the triple
$(l_{jt}, m_{jt}, e_{jt})$, reducing the required flexible
intermediate inputs from three to two. This extension applies when
adjustment costs are small enough that $l_{jt}$ responds to
within-period productivity innovations; industries with high
turnover or temporary staffing (e.g., food processing, garment
manufacturing) are natural candidates. When labor is quasi-fixed,
$l_{jt}$ reflects past rather than current productivity, and the
conditional independence conditions for $(l, m, e)$ do not hold.
See Appendix~\ref{app:prod_recovery} for details.})
$\omega$ must be defined on a common scale across different values of
$(k, l)$. Since Theorem~\ref{thm:density_id} applies the HS08
procedure independently for each $(k, l)$, there is no automatic
correspondence between the $\omega$ values identified at
$(k_1, l_1)$ and those identified at $(k_2, l_2)$. I now formalize
this problem.

\subsubsection{Observational Equivalence and Limits of
Identification}
\label{sec:obs_equiv}

Theorem~\ref{thm:density_id} identifies the production function
within each $(k_0, l_0)$, but a practitioner needs parameters that
are comparable across different capital-labor combinations. The next
result shows exactly what remains unresolved and rules out the
possibility that the indeterminacy takes a nonlinear form.

Under Assumptions~\ref{ass:additive_error}--\ref{ass:cond_indep} and the
regularity conditions in Appendix~\ref{app:identification_details}
(Assumptions~\ref{ass:injectivity}--\ref{ass:labeling}), the conditional densities
$f_{m|\omega}$, $f_{e|\omega}$, $f_{w|\omega}$ and the marginal
density $f_\omega$ are nonparametrically identified from the joint
density of $(m,e,w)$ conditional on $(k,l,z)$
(Theorem~\ref{thm:density_id}). This pins down the shape of each
conditional distribution but leaves a common location shift
$\Delta(k,l)$ applied to the latent variable unresolved. The
following theorem characterizes this residual indeterminacy
completely.

\begin{thm}[Complete Characterization of Observational Equivalence]
\label{thm:obs_equiv}
Under Assumptions~\ref{ass:additive_error}--\ref{ass:cond_indep}
and Assumptions~\ref{ass:injectivity}--\ref{ass:labeling}, a 
necessary and sufficient condition for two structures 
$(f_t, \omega)$ and $(\tilde{f}_t, \tilde{\omega})$ to generate the 
same joint distribution of observables is that there exists a 
continuous function $\Delta(k, l)$ such that
\begin{equation}
  \tilde{\omega} = \omega + \Delta(k, l), \qquad
  \tilde{f}_t(k, l, m, e, w, \tilde{\omega})
  = f_t(k, l, m, e, w, \tilde{\omega} - \Delta(k, l)).
  \label{eq:obs_equiv}
\end{equation}
\end{thm}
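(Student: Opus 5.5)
The plan is to prove the two directions separately, working throughout in the regime of Theorem~\ref{thm:density_id}: conditional on each cell $x=(k,l,z)$, the HS08 decomposition identifies the measurement densities $f_{m\mid\omega,x}$, $f_{e\mid\omega,x}$, $f_{w\mid\omega,x}$ and the latent density $f_{\omega\mid x}$, up to whatever normalization the labeling condition (Assumption~\ref{ass:labeling}) leaves free.

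\emph{Sufficiency} is a change of variables. Given $(f_t,\omega)$ and a continuous $\Delta(k,l)$, set $\tilde\omega=\omega+\Delta(k,l)$ and define $\tilde f_t$ as in \eqref{eq:obs_equiv}. Set $\tilde g_m(x,\tilde\omega,\tau)=g_m(x,\tilde\omega-\Delta(k,l),\tau)$, and define $\tilde g_e$ and $\tilde g_w$ analogously. Then:
\begin{itemize}
\item $(m,e,w)$ is unchanged pointwise.
\item $y=\tilde f_t(k,l,m,e,w,\tilde\omega)+\varepsilon$ holds with the same $\varepsilon$.
\item Conditional on $x$, the shift is a deterministic translation. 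Hence $\mathbb{E}[\varepsilon\mid k,l,m,e,w,\tilde\omega]=0$ is preserved, and the mutual independence of $(\tau,\nu,\eta)$ given $(\tilde\omega,x)$ is preserved.
\item The regularity conditions (injectivity, completeness, distinct eigenvalues) are invariant to translating the latent argument.
\end{itemize}
The joint law of $(y,k,l,z,m,e,w)$ is therefore identical.

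\emph{Necessity} uses Theorem~\ref{thm:density_id} cell by cell. Fix $x$. Both structures generate the same $f_{m,e\mid x,w}$, so by the uniqueness in HS08 their eigen-decompositions coincide up to a relabeling of the latent index. That is, there is a bijection $\phi_x$ with $\tilde\omega=\phi_x(\omega)$, $f_{\tilde m\mid\tilde\omega,x}(\cdot\mid\phi_x(\omega))=f_{m\mid\omega,x}(\cdot\mid\omega)$, and similarly for the other densities. The key step is to show that $\phi_x$ is a translation. The labeling condition of HS08 pins the latent variable through a known functional of one measurement density (for instance a location functional such as the mean or median of $f_{w\mid\omega,x}$ being strictly monotone in $\omega$). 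Two cases arise:
\begin{itemize}
\item If that normalization holds only up to an additive constant per cell, which is what Hicks-neutral scaling implies, then the monotone relabeling preserving it must be $\phi_x(\omega)=\omega+\Delta(x)$.
\item If it is stated more loosely, I would use the additive Hicks-neutral structure \eqref{eq:hicks_neutral_id}. Both structures must give the same $\mathbb{E}[y\mid x,m,e,w]$, and $\omega$ enters additively in each. Differentiating in $\omega$ along the identified posterior then forces $\phi_x'\equiv 1$, so that any nonlinear reparametrization changes the identified posterior mean in a way not offset by the $(m,e,w)$-only function $g_t$.
\end{itemize}

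Two further steps finish the argument. First, I argue that $\Delta$ depends only on $(k,l)$ and not on $z$: $z$ is excluded from $f_t$, and $\tilde f_t$ must not depend on $z$, so a $z$-dependent shift would make $\tilde f_t$ vary with $z$. Second, continuity of $\Delta$ follows from the continuity of the identified densities in $(k,l)$, which is assumed among the regularity conditions. Substituting $\tilde\omega=\omega+\Delta$ into the output equation and using Assumption~\ref{ass:additive_error} to identify $\mathbb{E}[y\mid\cdot,\omega]$ yields the stated relation for $\tilde f_t$.

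The main obstacle is showing that $\phi_x$ is a pure translation rather than an arbitrary monotone map. I would first try to read this off the exact form of Assumption~\ref{ass:labeling}. If that is insufficient, I would fall back on the Hicks-neutral/additive-$\omega$ argument via the posterior-mean identity. For the general non-Hicks case, the restriction would have to come from the $\varepsilon$-distribution conditions of Appendix~\ref{app:prod_recovery}.
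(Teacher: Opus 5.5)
Your main line is essentially the paper's own proof in Appendix~\ref{app:proof_obs_equiv}. Sufficiency is a translation of the latent argument; necessity applies HS08 uniqueness cell by cell to get a relabeling bijection, which the location normalization forces to be a translation once it is read, as the paper reads it, as fixing $\omega$ only up to a cell-specific constant $c(k,l)=M[f_{m\mid\omega,k,l}(\cdot\mid\omega)]-\omega$, and continuity of $\Delta$ is inherited from the continuity of $f_{m\mid\omega,k,l}$ in $(k,l)$. Your first bullet is therefore all that is needed, and you should drop the posterior-mean fallback: at a fixed cell the discrepancy $\mathbb{E}[\omega-\phi_x(\omega)\mid x,m,e,w]$ is itself a function of $(m,e,w)$, so it is simply absorbed into $\tilde g_t$ and that identity alone cannot force $\phi_x'\equiv 1$.
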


The proof is given in Appendix~\ref{app:proof_obs_equiv}; the key
steps are as follows. The HS08 eigenvalue-eigenfunction
decomposition uniquely determines the functional form of each
conditional density within each $(k_0, l_0)$, ruling out nonlinear
transformations of $\omega$. Any remaining degree of freedom must
therefore be a location shift that varies across $(k, l)$,
yielding~\eqref{eq:obs_equiv}. The continuity of $\Delta(k,l)$
follows from the continuous dependence of
$f_{m \mid \omega, k, l}$ on $(k, l)$ (stated after
Assumption~\ref{ass:labeling}) together with the perturbation
theory of compact operators under simple eigenvalues
(Assumption~\ref{ass:distinct_eigen}; see
Appendix~\ref{app:proof_obs_equiv} for details).

Nonlinear transformations (including scale transformations) are ruled
out because the eigenvalue--eigenfunction decomposition in HS08
uniquely determines the functional form of each conditional density 
within each $(k_0, l_0)$. Second, the $\Delta(k, l)$ indeterminacy 
arises inherently from the fact that Theorem~\ref{thm:density_id} 
applies the HS08 procedure independently for each $(k, l)$. Within 
each $(k_0, l_0)$, Assumption~\ref{ass:labeling} fixes the level of 
$\omega$, but the reference point of this normalization may depend on 
$(k_0, l_0)$. The data on conditional distributions of intermediate 
input demands do not contain information to unify $\omega$ levels 
across different $(k, l)$.\footnote{%
\textcite{hahn2023identification} show that in dynamic approaches
such as \textcite{olley1996thedynamics}, the identification of
dynamic input elasticities relies on an index restriction that
collapses state variables into a one-dimensional scalar.
Theorem~\ref{thm:density_id} does not provide such an index
restriction, and hence the indeterminacy with respect to the dynamic
elasticities persists.}

Economically, the $\Delta(k, l)$ indeterminacy means that the effect
of $(k, l)$ on the production function and
$\mathbb{E}[\omega \mid k, l]$ cannot be separated without
additional restrictions. As a direct consequence,
$f_t$ is identified up to the specification of
$\mathbb{E}[\omega \mid k, l]$: fixing
$\mathbb{E}[\omega \mid k, l]$ pins down $\Delta = 0$
(Theorem~\ref{thm:id_up_to_E},
Appendix~\ref{app:identification_details}).

The $\Delta(k, l)$ indeterminacy
also arises in the existing literature:
\textcite{gandhi2020onthe} resolve it in the Hicks-neutral setting
by combining first-order conditions with a Markov assumption, which
reduces $\Delta(k, l)$ to a constant; for non-Hicks-neutral models,
this strategy fails because $\omega$ cannot be separated from the
first-order condition.\footnote{In the Hicks-neutral model,
$\Delta(k, l)$ shifts the $(k,l)$ component of the production
function:
$\tilde{g}_t(k, l, \cdot) = g_t(k, l, \cdot) - \Delta(k, l)$.
In non-Hicks-neutral models, the FOC
$P_t \cdot (\partial f_t / \partial M) = \rho_t$ retains $\omega$
on the left-hand side, precluding a share regression.
\textcite{li2024identification} show that heterogeneous output
elasticities with respect to flexible inputs remain identifiable
under a scalar unobservable assumption on the proxy variable.}

I provide two alternative methods that close the identification
gap without dynamic assumptions. Theorem~\ref{thm:obs_equiv}
guarantees that $\Delta(k, l)$ is a continuous function of $(k,l)$
alone, which both methods exploit. Section~\ref{sec:exclusion}
imposes exclusion restrictions on intermediate input demands that
directly constrain the functional form of $\Delta(k, l)$, achieving
nonparametric point identification. Section~\ref{sec:homothetic}
parametrically specifies the $(k, l)$ component and introduces a
regularity condition on the shape of $\mathbb{E}[\omega \mid k, l]$,
achieving parametric identification through the non-constant
curvature of the homothetic transformation.

\subsubsection{Closing the Identification Gap}
\label{sec:closing_gap}

The $\Delta(k,l)$ indeterminacy is the cost of dispensing with the
Markov assumption. I now show this cost is payable: two conditions,
each operating without dynamic restrictions, eliminate the
indeterminacy and deliver point identification.

\subsubsection{Nonparametric Identification via Exclusion Restrictions}
\label{sec:exclusion}

The $\Delta(k, l)$ indeterminacy arises because the location 
normalization in Assumption~\ref{ass:labeling} is applied 
independently for each $(k, l)$ (Theorem~\ref{thm:obs_equiv}). If 
the HS08 location normalization 
$M[f_{m \mid \omega, k, l}(\cdot \mid \omega)] = \omega$ could be 
applied \textit{uniformly} across all $(k, l)$, then 
$\Delta(k, l) = 0$ would follow immediately. However, for this 
uniform normalization to hold, 
$M[f_{m \mid \omega, k, l}(\cdot \mid \omega)]$ must not depend on 
$(k, l)$; that is, the conditional demand for the intermediate 
input, given $\omega$, must be independent of $(k, l)$. This 
observation suggests that exclusion restrictions on intermediate 
input demands directly constrain 
$\Delta(k, l)$.

\begin{corollary}[Identification via Exclusion Restrictions]
\label{thm:exclusion}
In addition to
Assumptions~\ref{ass:additive_error}--\ref{ass:cond_indep}
and~\ref{ass:injectivity}--\ref{ass:labeling}, suppose one of 
the following conditions holds:
\begin{enumerate}
\item[(i)] The demand for some intermediate input (e.g., $w$) does 
  not depend on $(k, l)$: 
  $f_{w \mid \omega, k, l} = f_{w \mid \omega}$.
\item[(ii)] The demand for one input (e.g., $m$) does not depend on 
  $k$, and the demand for another (e.g., $e$) does not depend on 
  $l$: $f_{m \mid \omega, k, l} = f_{m \mid \omega, l}$ and 
  $f_{e \mid \omega, k, l} = f_{e \mid \omega, k}$.
\end{enumerate}
Then, under the normalization $\mathbb{E}[\omega] = 0$, the 
production function $f_t$ is nonparametrically point-identified. 
Condition~(i) is a special case of condition~(ii).
\end{corollary}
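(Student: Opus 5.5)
The plan is to reduce point identification to showing that the location shift in Theorem~\ref{thm:obs_equiv} must vanish. That theorem says any observationally equivalent alternative $(\tilde f_t,\tilde\omega)$ satisfies $\tilde\omega=\omega+\Delta(k,l)$ and $\tilde f_t(\cdot,\tilde\omega)=f_t(\cdot,\tilde\omega-\Delta(k,l))$ for some continuous $\Delta$. It therefore suffices to show that, under (i) or (ii) together with $\mathbb{E}[\omega]=0$, every admissible $\Delta$ is identically zero. Once $\Delta\equiv 0$, uniqueness of $f_t$ follows from the within-cell recovery in Section~\ref{sec:within_kl}: the Hicks-neutral formula \eqref{eq:hicks_neutral_id}, or Appendix~\ref{app:prod_recovery} in the general case, now holds on a common $\omega$-scale across $(k,l)$.

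First I record how the alternative structure transforms the measurement densities. Since $\tilde\omega=\omega+\Delta(k,l)$ and the observables are the same, the alternative conditional density of $m$ is
\[
  \tilde f_{m\mid\tilde\omega,k,l}(m\mid\tilde\omega)=f_{m\mid\omega,k,l}\bigl(m\mid\tilde\omega-\Delta(k,l)\bigr),
\]
and likewise for $e$ and $w$. Any admissible alternative must also satisfy the maintained restrictions, so the exclusion conditions must hold for $\tilde f$ too.

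Under (ii), the $m$-density does not depend on $k$ under either structure. Fix $l$ and two capital levels $k_1,k_2$. For every $\tilde\omega$ we get
\[
  f_{m\mid\omega,l}\bigl(\cdot\mid\tilde\omega-\Delta(k_1,l)\bigr)=f_{m\mid\omega,l}\bigl(\cdot\mid\tilde\omega-\Delta(k_2,l)\bigr).
\]
If $\Delta(k_1,l)\neq\Delta(k_2,l)$, the map $\omega\mapsto f_{m\mid\omega,l}(\cdot\mid\omega)$ would be periodic in $\omega$. That contradicts the injectivity/distinct-eigenfunction condition in Assumptions~\ref{ass:injectivity}--\ref{ass:labeling}, which requires distinct $\omega$ to give distinct densities. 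A cleaner route is to apply the functional $M$ from Assumption~\ref{ass:labeling}, which gives $\tilde\omega-\Delta(k_1,l)=\tilde\omega-\Delta(k_2,l)$ directly. Either way $\Delta(k,l)=a(l)$ is independent of $k$. The symmetric argument with $e$, whose density does not depend on $l$, gives $\Delta(k,l)=b(k)$. Combining the two, $a(l)=b(k)$ for all $(k,l)$ in the support, so $\Delta$ is a constant $c$; I will assume the support is a product set, or at least connected in the $k$ and $l$ directions, and use the continuity from Theorem~\ref{thm:obs_equiv} to chain across cells. Finally, $\mathbb{E}[\tilde\omega]=\mathbb{E}[\omega]+c$ and both means are normalized to zero, so $c=0$.

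Condition (i) is the special case of (ii) in which $w$ plays both roles: its density is free of both $k$ and $l$, so the same argument applied to $w$ alone gives $\Delta$ constant at once. This also explains the claim in the statement that (i) is a special case of (ii).

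The main obstacle is the step ruling out a nonzero shift inside an excluded density. It needs the identified conditional densities to be injective in $\omega$ (no translation invariance), and this must be read off precisely from Assumption~\ref{ass:labeling}; the $M$-functional route avoids the subtlety if $M$ is location-equivariant. A secondary issue is the support of $(k,l)$: the step from $a(l)=b(k)$ to a constant needs enough overlap, which I would handle with a product-support or connectedness assumption.
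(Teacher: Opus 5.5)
Your proof is correct and follows the same route as the paper. You invoke Theorem~\ref{thm:obs_equiv}, require the exclusion restriction to hold in the alternative structure so that $\Delta$ cannot vary in the excluded argument, conclude that $\Delta$ is constant, and eliminate the constant with $\mathbb{E}[\omega]=0$. You are also more careful than the paper in making explicit the non-periodicity step and the support-connectedness requirement, both of which it leaves implicit. One caution: the $M$-functional shortcut works only for $m$, because Assumption~\ref{ass:labeling} is stated only for $m$; for $e$ and $w$ you need the injectivity route, via Assumptions~\ref{ass:injectivity} and~\ref{ass:distinct_eigen} respectively.
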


\begin{proof}
By Theorem~\ref{thm:obs_equiv}, observationally equivalent 
structures are parameterized by 
$\tilde{\omega} = \omega + \Delta(k, l)$. Requiring that the 
exclusion restriction be maintained in the alternative structure:

\textit{Condition~(i):} 
$f_{w \mid \omega, k, l} = f_{w \mid \omega}$ implies that in 
the alternative structure, 
$f_{w \mid \tilde{\omega}, k, l}(w \mid \tilde{\omega}) 
= f_{w \mid \omega}(w \mid \tilde{\omega} - \Delta(k, l))$, 
which is independent of $(k, l)$ only if $\Delta(k, l)$ is constant.

\textit{Condition~(ii):} 
$f_{m \mid \omega, k, l} = f_{m \mid \omega, l}$ implies $\Delta$
does not depend on $k$.
$f_{e \mid \omega, k, l} = f_{e \mid \omega, k}$ implies $\Delta$
does not depend on $l$. Together, $\Delta$ is constant.

In both cases, $\mathbb{E}[\omega] = 0$ pins down $\Delta = 0$.
\end{proof}

Economically, condition~(i) requires that the demand for some
intermediate input (e.g., electricity) depends on productivity alone
and not on capital or labor intensity; this
may hold in energy-intensive industries where electricity consumption
is driven by production volume rather than by the composition of
capital equipment. Condition~(ii) requires that different inputs
exclude different primary inputs from their demand: for example, raw
material demand does not depend on labor intensity, and fuel demand
does not depend on capital intensity. These exclusion restrictions limit the scope of application to
industries where institutional knowledge supports them. For
settings where
such restrictions cannot be justified, I provide a parametric
alternative in the next subsection.

\begin{remark}[Testability of the Exclusion Restriction]
\label{rem:testable_exclusion}
Under the linear demand specification~\eqref{eq:gmm_demand_m}--\eqref{eq:gmm_demand_mpp},
let $a_{k}^h$, $a_{l}^h$, and $a_{\omega}^h$ denote the slope coefficients
on $k$, $l$, and $\omega$ in the demand for input $h$:
$(a_{k}^m, a_{l}^m, a_{\omega}^m) = (\gamma_k, \gamma_l, \gamma_\omega)$,
$(a_{k}^e, a_{l}^e, a_{\omega}^e) = (\delta_k, \delta_l, \delta_\omega)$,
$(a_{k}^w, a_{l}^w, a_{\omega}^w) = (\zeta_k, \zeta_l, \zeta_\omega)$.
The exclusion restriction of Corollary~\ref{thm:exclusion} for a
single input $h$ is not separately testable from Block~A+B
estimates. Under the normalization $\beta_k = \beta_l = 0$
(Section~\ref{sec:gmm_approach}), the estimated demand
coefficient $\hat{a}_{k}^{h *}$ converges to
$a_{k}^{h} - a_{\omega}^{h}\beta_k$, confounding the structural
exclusion parameter $a_{k}^{h}$ with the indeterminacy
$a_{\omega}^{h}\beta_k$ from Theorem~\ref{thm:obs_equiv}.

The \emph{joint} restriction across inputs, however, yields a
diagnostic test, a necessary condition for consistency with the
exclusion restriction, but not a sufficient one. Under
Proposition~\ref{prop:excl_ols}, the OLS estimate
$\hat{\beta}_k^{(h)}$ from input $h$ converges to
$\beta_k - a_{k}^{h}/a_{\omega}^{h}$. Define the pairwise
discrepancy
\begin{equation}
  d_k^{(h_1, h_2)}
  \equiv
  \frac{\hat{a}_{k}^{h_2 *}}{\hat{a}_{\omega}^{h_2}}
  - \frac{\hat{a}_{k}^{h_1 *}}{\hat{a}_{\omega}^{h_1}}
  \xrightarrow{p}
  \frac{a_{k}^{h_2}}{a_{\omega}^{h_2}}
  - \frac{a_{k}^{h_1}}{a_{\omega}^{h_1}},
  \label{eq:overid_diff}
\end{equation}
which is free of the $\Delta(k,l)$ indeterminacy since
$\beta_k$ cancels in the difference. Under the joint
exclusion restriction $a_{k}^{h_1} = a_{k}^{h_2} = 0$,
$d_k = 0$; the converse does not hold. The test statistic
$d_k = 0$ is a \emph{necessary condition} for the full joint exclusion
restriction, not a sufficient one: $d_k = 0$ also obtains in the
knife-edge case where $a_{k}^{h}/a_{\omega}^{h}$ is equal across inputs
but nonzero. This configuration has no structural basis when the three
inputs involve distinct procurement channels, but the possibility
cannot be ruled out on the basis of $d_k$ alone
(Appendix~\ref{app:testability_exclusion}).
The test is therefore best interpreted as a \emph{diagnostic}: a
rejection of $d_k = 0$ is evidence against the exclusion restriction,
while non-rejection is consistent with, but does not establish, it.
Since $d_k$ is a smooth function of the
Block~A+B parameters, its standard error is obtained by the
delta method from the GMM variance-covariance matrix,
yielding a Wald test without the generated-regressors
problem that would arise from testing OLS estimates
directly. With three inputs, the formal test has two
degrees of freedom ($d_k = d_l = 0$)
(Appendix~\ref{app:testability_exclusion}).
I apply this test in Section~\ref{subsec:exclusion_test}.
\end{remark}

The formal statement and proof are given in
Proposition~\ref{prop:excl_ols}
(Appendix~\ref{app:identification_details}).\footnote{Replacing the
linear subtraction of $\hat{\omega}^h$ in
Proposition~\ref{prop:excl_ols} with a polynomial regression
is not consistent in general; see
Appendix~\ref{app:nonlinear_specs} for details.}

\subsubsection{Parametric Identification via Homothetic Regularity}
\label{sec:homothetic}

As an alternative when exclusion restrictions cannot be justified, I
parametrically specify the $(k, l)$ component and introduce a 
regularity condition on $\mathbb{E}[\omega \mid k, l]$. Consider the 
additively separable model
\begin{equation}
  y = g(k, l;\, \theta) + q(m, e, w) + \omega + \varepsilon,
  \label{eq:parametric_model}
\end{equation}
where $g$ is parametric with known functional form and $q$ is 
nonparametric. From Section~\ref{sec:within_kl}, $q$ is 
nonparametrically recoverable for each fixed 
$(k_0, l_0, \omega_0)$.

Specializing to $g(k, l;\, \theta) = \beta_k k + \beta_l l$, 
Theorem~\ref{thm:obs_equiv} reduces the identification indeterminacy 
to
\begin{equation}
  \Delta(k, l) = c_k k + c_l l,
  \quad (c_k, c_l) \in \mathbb{R}^2.
  \label{eq:linear_ambiguity}
\end{equation}
To eliminate this two-dimensional indeterminacy, I introduce the
following regularity condition.

\begin{assumption}[Homothetic Weak Separability]
\label{ass:homothetic}
The conditional expectation of TFP in the cross-section has a 
homothetic structure: there exist continuously differentiable 
functions $h \colon \mathbb{R} \to \mathbb{R}$ and
$v \colon \mathbb{R}^2 \to \mathbb{R}$ such that
\[
  \bar{\omega}(k, l) \equiv \mathbb{E}[\omega \mid k, l] 
  = h(v(k, l)),
\]
where:
\begin{enumerate}
\item[(A)] \textit{Nonlinear transformation:} $h'$ is not a constant 
  function.
\item[(B)] \textit{Translation homogeneity:} $v$ satisfies $v(k+c, l+c) = v(k,l) + c$ for all $c \in \mathbb{R}$.
\item[(C)] \textit{Imperfect substitutability:} The isoquants of $v$ 
  are strictly convex, and the marginal rate of substitution 
  $v_k / v_l$ is not constant on $(k, l)$.
\end{enumerate}
\end{assumption}

All three conditions are necessary for
Theorem~\ref{thm:homothetic_id}: (A) prevents observational
equivalence with linear functions; (B) ensures the counterfactual
index is also translation homogeneous, so that the MRS of
$\tilde{v}$ is translation invariant; (C) excludes Cobb--Douglas,
where $v_k/v_l$ is constant and a one-dimensional indeterminacy
persists. Economically, (A) requires nonlinear returns to the input
bundle, (B) corresponds to constant returns to scale in the level
variables (since translation homogeneity on the log scale is
equivalent to degree-one homogeneity in levels), and (C)
requires a finite and non-unit elasticity of substitution, satisfied
by CES, translog, and normalized quadratic forms.
Assumption~\ref{ass:homothetic} can be checked from Blocks~A
and~B alone (Section~\ref{sec:diagnostics}); detailed necessity
arguments and testability procedures are in
Appendix~\ref{app:assumption6_details}.

To illustrate, consider the CES specification where
$v(k, l) = \frac{1}{\rho_v}\log\bigl(\alpha e^{\rho_v k} + (1-\alpha) e^{\rho_v l}\bigr)$
is translation homogeneous on the log scale: $v(k+c, l+c) = v(k,l) + c$.
With $h(v) = \gamma v$ (for $\gamma \neq 0$ and higher-order terms
$\rho_2 v^2 + \rho_3 v^3$ with $\rho_2 \neq 0$ or $\rho_3 \neq 0$),
$h'$ is non-constant (satisfying (A)), $v$ is translation homogeneous
(satisfying (B)),
and the MRS $v_k/v_l = [\alpha/(1-\alpha)] e^{\rho_v(k-l)}$ is
non-constant for $\rho_v \neq 0$ (satisfying (C)). The
Cobb--Douglas case ($\rho_v \to 0$, so $v \to \alpha k + (1-\alpha) l$)
yields a linear $v$ and a constant MRS, violating conditions~(A) and~(C)
simultaneously; the rank condition in Theorem~\ref{thm:homothetic_id}
fails, and $(\beta_k, \beta_l)$ cannot be separately identified.
More generally, when $\rho_v$ is close to zero, identification of
$(\beta_k, \beta_l)$ through Block~C becomes weak: the marginal rate
of substitution $v_k/v_l$ approaches a constant as $\rho_v \to 0$,
so the cross-sectional variation in $(k_{jt}, l_{jt})$ provides little
leverage on the curvature parameters. In the empirical analysis, the
$t$-statistics for $\hat{\rho}_2$ and $\hat{\rho}_3$
(Section~\ref{sec:diagnostics}) provide a direct diagnostic for this
failure; industries where both are statistically insignificant should
not be relied upon for separate identification of $\beta_k$ and $\beta_l$
through Block~C alone.
When $\rho_v = 0$, the exclusion restriction of
Corollary~\ref{thm:exclusion} provides an alternative identification
route.

\begin{thm}[Static Parametric Identification]
\label{thm:homothetic_id}
Under Assumptions~\ref{ass:additive_error}--\ref{ass:cond_indep},
\ref{ass:injectivity}--\ref{ass:labeling},
and~\ref{ass:homothetic}, the structural parameters $\beta_k$
and $\beta_l$ in model~\eqref{eq:parametric_model} are
point-identified from static data alone.
\end{thm}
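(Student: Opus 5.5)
The proof reduces the problem, via Theorem~\ref{thm:obs_equiv}, to showing that the only admissible location shift is a constant. In model~\eqref{eq:parametric_model} with $g(k,l;\theta)=\beta_k k+\beta_l l$, any observationally equivalent structure has $\tilde\omega=\omega+\Delta(k,l)$ and $\tilde g=g-\Delta$. Because $\tilde g$ must stay in the parametric linear family, $\Delta$ takes the form~\eqref{eq:linear_ambiguity}, $\Delta(k,l)=c_kk+c_ll$; the constant part is absorbed by the normalization $\mathbb{E}[\omega]=0$. Taking conditional expectations, the alternative structure carries
\[
\tilde{\bar\omega}(k,l)=\bar\omega(k,l)+c_kk+c_ll .
\]
For the alternative to be admissible, $\tilde{\bar\omega}$ must also satisfy Assumption~\ref{ass:homothetic}, so $\tilde{\bar\omega}=\tilde h(\tilde v(k,l))$ with $\tilde h'$ non-constant, $\tilde v$ translation homogeneous, and $\tilde v$ satisfying (C). The goal is to show this forces $c_k=c_l=0$. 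Then $(\beta_k,\beta_l)$ is pinned down: within-cell quantities are identified by Theorem~\ref{thm:density_id}, so $\bar\omega$ and $g$ are known once $\Delta$ is fixed.

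The main step compares marginal rates of substitution. Translation homogeneity gives $v_k+v_l=1$ and $\tilde v_k+\tilde v_l=1$. Differentiating the displayed identity,
\[
\tilde h'(\tilde v)\,\tilde v_k=h'(v)v_k+c_k,\qquad \tilde h'(\tilde v)\,\tilde v_l=h'(v)v_l+c_l .
\]
Adding the two equations eliminates the index derivatives and gives $\tilde h'(\tilde v)=h'(v)+c_k+c_l$. Write $s\equiv c_k+c_l$. Along the diagonal direction $(k+c,l+c)$, $v$ shifts by $c$ and $\bar\omega$ moves only through $h$. Comparing this with the same direction for $\tilde v$ shows that $\tilde v$ and $v$ share level sets up to the shift induced by $s$. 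Concretely, $\tilde v$ must be a function of $v$ plus a term linear in $(k,l)$, and translation homogeneity forces that linear term to be of the form $a(k-l)$.

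Substituting back gives $c_k-c_l$-type terms in the form $a(k-l)$ times $h'$-dependent factors. By (A), $h'$ is non-constant, so $h'(v)$ varies along each isoquant direction only through $v$, not through $k-l$. By (C), $v_k/v_l$ is non-constant, so $v$ is not itself a function of $(k+l)$ alone. Together these make the coefficient of $(k-l)$ vanish, which requires the antisymmetric part $c_k-c_l$ to be zero. With $c_k=c_l$, the sum condition $\tilde h'=h'+2c_k$ can then be checked against the MRS equation
\[
\frac{\tilde v_k}{\tilde v_l}=\frac{h'v_k+c_k}{h'v_l+c_k}.
\]
The right-hand side depends on $h'(v)$, which varies with $v$ by (A). Translation homogeneity of $\tilde v$ makes the left-hand side invariant along diagonal translations, whereas $h'(v)$ changes along them. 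Hence $c_k=0$.

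I expect the main obstacle to be this last step: making rigorous that the MRS of $\tilde v$ must be invariant along diagonals while the right-hand side is not unless $c_k=0$. The direction of attack is to fix $k-l$ and vary $k+l$, so that $v_k/v_l$ stays fixed by (B). The right-hand side is then a Möbius function of the varying scalar $h'(v)$, and it is constant only if $c_k(v_l-v_k)=0$. Strict convexity and non-constant MRS from (C) exclude $v_k=v_l$ on an open set, which yields $c_k=0$. I would also remark that a local rank condition, namely that the Jacobian of the Block~C moments in $(c_k,c_l)$ is nonsingular, is the operational form of this argument.
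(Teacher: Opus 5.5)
Your reduction to $\Delta=c_kk+c_ll$, your two differentiated identities, and $\tilde h'(\tilde v)=h'(v)+s$ are all correct. The gap is the middle paragraph, which is the only place you obtain $c_k=c_l$, and it is not a proof.

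\begin{itemize}
\item From $\tilde h'(\tilde v)=h'(v)+s$ you only get that $\tilde h'\circ\tilde v$ and $h'\circ v$ have the same level sets. Passing from those to level sets of $\tilde v$ and $v$ requires invertibility of $h'$ and $\tilde h'$, and (A) does not provide it.
\item Even if you grant $\tilde v=\phi(v)+a_1k+a_2l$, translation homogeneity only forces $\phi$ to be affine with slope $1-a_1-a_2$. It does not force $a_1+a_2=0$, so the linear term need not have the form $a(k-l)$.
\item The claim that (A) and (C) kill the $(k-l)$ coefficient and leave exactly $c_k-c_l=0$ is asserted, not computed.
\end{itemize}
Because your M\"obius step is run only under $c_k=c_l$, the case $c_k\neq c_l$ is never actually excluded.

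The fix is already in your last paragraph: drop the restriction $c_k=c_l$.
\begin{itemize}
\item Fix $k-l$ and move along the diagonal. By (B), $v_k,v_l,\tilde v_k,\tilde v_l$ are constant there, while $v$ sweeps all of $\mathbb{R}$. By (A), $x=h'(v)$ therefore takes at least two distinct values.
\item Cross-multiply your two differentiated identities to get $(xv_k+c_k)\tilde v_l=(xv_l+c_l)\tilde v_k$. This avoids dividing by $\tilde h'$ or by $xv_l+c_l$.
\item This identity is affine in $x$ and holds at two distinct values of $x$, so both coefficients vanish: $v_k\tilde v_l=v_l\tilde v_k$ and $c_k\tilde v_l=c_l\tilde v_k$.
\item The vectors $(v_k,v_l)$ and $(\tilde v_k,\tilde v_l)$ are parallel and nonzero, with coordinates summing to one, so they coincide. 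Hence $c_lv_k-c_kv_l=0$ at every $(k,l)$.
\item If $(c_k,c_l)\neq(0,0)$, this equation together with $v_k+v_l=1$ either has no solution or pins $(v_k,v_l)$ to a constant vector. Then $v$ is linear, contradicting (C).
\end{itemize}
Note that this last step uses the full non-constant-MRS part of (C), not merely $v_k\neq v_l$ somewhere.

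This repaired argument is exactly the paper's proof. The paper writes the same condition as $(c_lv_k-c_kv_l)\bigl(h'(v+c)-h'(v)\bigr)=0$ and concludes from (A) and (C) in the same way. Your middle paragraph can simply be deleted.
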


\begin{proof}
By contradiction. Suppose an observationally equivalent 
$\tilde{\beta}_i = \beta_i + c_i$ ($i = k, l$) exists with 
$(c_k, c_l) \neq (0, 0)$. By~\eqref{eq:linear_ambiguity}, the 
alternative TFP function satisfies 
$\mathbb{E}[\tilde{\omega} \mid k, l] 
= h(v(k, l)) - c_k k - c_l l$. Requiring that 
$\mathbb{E}[\tilde{\omega} \mid k, l]
= \tilde{h}(\tilde{v}(k, l))$ for some translation homogeneous
$\tilde{v}$ and differentiable $\tilde{h}$, the translation
invariance of the marginal rate of substitution of $\tilde{v}$
requires
\[
  (c_l\, v_k - c_k\, v_l)
  \bigl(h'(v + c) - h'(v)\bigr) = 0
\]
for all $(k, l) \in \mathbb{R}^2$ and $c \in \mathbb{R}$. By
condition~(A), $h'$ is non-constant, so the second factor is nonzero
for some $(v_0, c_0)$. Hence
$c_l\, v_k - c_k\, v_l = 0$ everywhere, so $v_k / v_l$ equals
the constant $c_k/c_l$. Under translation homogeneity, a constant
MRS forces $v(k,l) = \alpha k + (1-\alpha)l$, which is linear in
$(k,l)$, contradicting condition~(C). Therefore
$(c_k, c_l) = (0, 0)$.

Condition~(B) (translation homogeneity) enters the argument through
the translation invariance of the MRS of $\tilde{v}$: since
$\tilde{v}_k + \tilde{v}_l = 1$ (implied by translation homogeneity),
without it $\tilde{v}$ need not be translation homogeneous, and the
equality $c_l\, v_k - c_k\, v_l = 0$ does not follow.
\end{proof}

Theorem~\ref{thm:homothetic_id} is stated and proved for the CES
specification of $v(k,l)$; the argument extends to other parametric
forms (e.g., translog) subject to verifying the rank condition
specific to each functional form.\footnote{For instance, with a translog specification
$g = \beta_k k + \beta_l l + \beta_{kk} k^2 + \beta_{ll} l^2
+ \beta_{kl} kl$, $\Delta$ is restricted to the corresponding
polynomial class and the homothetic regularity condition eliminates
the indeterminacy by a similar argument, but the conditions on the
MRS differ from the CES case.}

\subsection{Implications for Empirical Applications}
\label{sec:implications_empirical}

The within-$(k_0, l_0)$ identification results have direct empirical
applications that differ in what they require. Markup estimation
requires only $\beta_m$, which is identified by Blocks~A and~B alone.
Event studies and difference-in-differences designs similarly require
only Block~A+B: because the estimator uses no transition equation for
$\omega$, the recovered $\hat{\omega}_{jt}$ is valid under any
productivity dynamics, including treatment-induced non-Markov paths
(Proposition~\ref{prop:omega_hat_D}, Appendix~\ref{app:identification_details}).
Full productivity-level analysis (including the identification of
$\beta_k$ and $\beta_l$) requires Block~C in addition.

\paragraph{Applications.}
Because estimation does not employ a transition process for
$\omega$, the estimates are invariant to how a policy $D_{jt}$
affects productivity dynamics
(Proposition~\ref{prop:omega_hat_D},
Appendix~\ref{app:identification_details}).
For markup estimation, the within-$(k_0, l_0)$ results suffice:
output elasticities $\partial f_t / \partial m$ are identified for
each fixed $(k_0, l_0)$, which recovers markups as the ratio of the
output elasticity to the revenue share
\parencite{deloecker2012markups}.

\begin{remark}[Functional Form Generality]
\label{rem:functional_form}
The identification results of this paper rest on the conditional
independence of intermediate input demands (Assumption~\ref{ass:cond_indep}),
not on the functional form of production. Theorems~\ref{thm:density_id}
and~\ref{thm:obs_equiv} establish nonparametric identification via
the HS08 spectral decomposition for any production function satisfying
Assumptions~\ref{ass:additive_error}--\ref{ass:cond_indep}. The GMM
estimator of Section~\ref{sec:estimation} implements this under
Cobb--Douglas, where input demands are linear in productivity and the
moment conditions take a tractable linear form.
Appendix~\ref{app:translog} shows that the same identification
source (conditional independence) yields nonlinear moment conditions
under translog production. The empirical implementation focuses on
Cobb--Douglas to maintain computational tractability and to isolate
the effect of relaxing the Markov assumption from functional form
complexities.
\end{remark}

\begin{remark}[Robustness to Endogenous Exit]
\label{rem:selection}
Standard proxy variable estimators require a survival probability
correction \parencite{olley1996thedynamics} because the innovation
shock $\xi_{jt}$ in the Markov transition equation
$\omega_{jt} = g(\omega_{j,t-1}) + \xi_{jt}$ is left-truncated
conditional on survival: firms with $\omega_{jt}$ below the exit
threshold do not appear in the data, biasing $\mathbb{E}[\xi_{jt}
\mid \omega_{j,t-1}, S_{jt}=1]$ away from zero.

The proposed estimator does not use the transition equation and
therefore does not involve $\xi_{jt}$. Identification of
$(\beta_m, \beta_e, \beta_w)$ rests on the within-period
conditional independence of demand shocks
(Assumption~\ref{ass:cond_indep}), which conditions on $\omega_{jt}$.
Under the standard timing convention that exit decisions are made
at the start of period $t$ based on the state $(\omega_{jt}, k_{jt})$
before input-specific demand shocks $(\tau_{jt}, \nu_{jt}, \eta_{jt})$
are realized, survival is a deterministic function of
$(\omega_{jt}, k_{jt})$. Conditioning on $\omega_{jt}$ therefore
absorbs the selection:
\[
  f(\tau, \nu, \eta \mid \omega, x, S=1)
  = f(\tau, \nu, \eta \mid \omega, x),
\]
and the moment conditions that identify $(\beta_m, \beta_e, \beta_w)$
hold on the surviving population without any survival probability
correction. No assumption on the productivity process is required for
this result; it follows from the static, $\omega$-conditional
structure of the identification strategy.

Two qualifications apply. First, the recovered distribution of
$\omega_{jt}$ is the survivor distribution, not the population
distribution; aggregate productivity statistics based on the
recovered $\hat{\omega}_{jt}$ reflect surviving firms only. Second,
the argument does not extend to parameters identified from the
transition equation (e.g., the persistence of productivity), which
the proposed method does not estimate.
\end{remark}

\section{Estimation Methods}
\label{sec:estimation}

The nonparametric identification results of
Section~\ref{sec:model_identification} establish that the production
function and productivity distribution are identified from the joint
density of intermediate inputs; nonparametric sieve estimation could
in principle implement this directly, but the high-dimensional
numerical integration required is computationally prohibitive for
census-scale panels spanning hundreds of industries.
I therefore develop a GMM estimator that specializes to a linear
production function and linear demand functions. Under this
parametric restriction, the observational equivalence class of
Theorem~\ref{thm:obs_equiv} reduces to a two-dimensional
indeterminacy $(c_k, c_l)$ (equation~\eqref{eq:linear_ambiguity}),
and the identification results of Corollary~\ref{thm:exclusion}
and Theorem~\ref{thm:homothetic_id} carry through directly.

\subsection{Estimation Based on the Generalized Method of Moments}
\label{sec:gmm_approach}

As noted in Remark~\ref{rem:functional_form}, the identification
results of Section~\ref{sec:model_identification} apply to general
production functions. The parametric implementation below specializes
to the Cobb--Douglas case, where input demand functions are linear in
productivity (Appendix~\ref{sec:micro_fnd}). This linearity yields the
tractable linear GMM system of Blocks~A--B. Extension to flexible
functional forms such as translog is developed in
Appendix~\ref{app:translog}; the identification source remains the
conditional independence of demand shocks.

\subsubsection{Overview}

The GMM estimator jointly recovers the production function and demand
parameters from three blocks of moment conditions:
\begin{enumerate}
\item[(i)] \textbf{Block~A} (Proxy moments): orthogonality conditions 
  derived from eliminating $\omega_{jt}$ across pairs of demand 
  residuals and the production residual, using an asymmetric 
  instrument strategy;
\item[(ii)] \textbf{Block~B} (Covariance moments): cross-covariance 
  restrictions among demand and production residuals, exploiting the 
  mutual independence of demand shocks;
\item[(iii)] \textbf{Block~C} (Curvature moments): conditional
  moment restrictions derived from the homothetic regularity
  condition on $\mathbb{E}[\omega_{jt} \mid k_{jt}, l_{jt}]$
  (Assumption~\ref{ass:homothetic}), which closes the $\Delta(k,l)$
  identification gap characterized in Theorem~\ref{thm:obs_equiv}.
\end{enumerate}

Blocks~A and~B identify the intermediate input elasticities 
$(\beta_m, \beta_e, \beta_w)$, the demand function 
parameters~$(\theta_g, \psi_\omega)$, and certain composite 
functions of $(\beta_k, \beta_l)$ and the demand slopes. However, as 
shown in Section~\ref{sec:obs_equiv}, these blocks alone cannot 
separate $\beta_k$ and $\beta_l$ from the demand function slopes on 
$(k, l)$ due to the $\Delta(k,l)$ observational equivalence 
(Theorem~\ref{thm:obs_equiv}). Block~C resolves this indeterminacy through the nonlinear curvature
of $\mathbb{E}[\omega \mid k, l]$ imposed by
Assumption~\ref{ass:homothetic}, thereby achieving point
identification of all structural parameters
(Theorem~\ref{thm:homothetic_id}). When its identifying conditions
are weak, the exclusion restriction of
Corollary~\ref{thm:exclusion} provides an alternative route.
Figure~\ref{fig:flowchart} (Appendix~\ref{app:flowchart}) provides
a visual overview of the full estimation and inference pipeline,
including the diagnostic branches that determine which identification
route applies.

\subsubsection{Model Specification and Parameters}
\label{sec:gmm_spec}

The parametric specialization below implements the identification
results of Section~\ref{sec:model_identification} under additive
separability; this restriction reduces the nonparametric problem
to a finite-dimensional GMM system while preserving all theoretical
properties of Theorems~\ref{thm:obs_equiv}--\ref{thm:homothetic_id}.
To apply GMM, I impose additive separability on both the production
and demand functions.

\paragraph{Production function.}
Following the parametric model of 
Section~\ref{sec:homothetic}, the production function is specified 
as:
\begin{equation}
  y_{jt} = \beta_k k_{jt} + \beta_l l_{jt} 
  + \beta_m m_{jt} + \beta_e e_{jt} + \beta_w w_{jt} 
  + \omega_{jt} + \varepsilon_{jt}.
  \label{eq:gmm_prod}
\end{equation}
Here $g(k,l;\theta) = \beta_k k + \beta_l l$ is the parametric 
$(k,l)$ component and 
$q(m,e,w) = \beta_m m + \beta_e e + \beta_w w$ is the 
(linear) intermediate input component, corresponding to the 
additively separable model~\eqref{eq:parametric_model}.

\paragraph{Demand functions.}
The intermediate input demands take the additively separable form:
\begin{align}
  m_{jt}   &= \gamma_k\, k_{jt} + \gamma_l\, l_{jt} + h_m(z_{jt})
             + \gamma_\omega\, \omega_{jt} + \tau_{jt},
             \label{eq:gmm_demand_m} \\
  e_{jt}  &= \delta_k\, k_{jt} + \delta_l\, l_{jt} + h_e(z_{jt})
             + \delta_\omega\, \omega_{jt} + \nu_{jt},
             \label{eq:gmm_demand_mp} \\
  w_{jt} &= \zeta_k\, k_{jt} + \zeta_l\, l_{jt} + h_w(z_{jt})
             + \zeta_\omega\, \omega_{jt} + \eta_{jt},
             \label{eq:gmm_demand_mpp}
\end{align}
where the functions $h_m, h_e, h_w$ are left unrestricted and
$\psi_\omega = (\gamma_\omega, \delta_\omega, \zeta_\omega)$ are the
productivity loading coefficients.%
\footnote{The Cobb--Douglas first-order condition
(Appendix~\ref{sec:micro_fnd}) structurally constrains the demand
function to be linear in $(k, l, \omega)$, but imposes no restriction
on the functional form of the dependence on $z$. The state variables
$z_{jt}$ enter through input prices $\ln P_{h,jt}$, the common
market factor $\ln(P_{jt}/\mu_{jt})$, markdowns $\ln \psi_{h,jt}$,
and wedges $\ln \Upsilon_{h,jt}$
(equation~\eqref{eq:structural_decomposition}), each of which may
depend nonlinearly on $z$.}
The demand slope parameters
$\theta_g = (\gamma_k, \gamma_l, \delta_k, \delta_l, \zeta_k,
\zeta_l)$ and the productivity loadings $\psi_\omega$ are estimated
jointly by GMM together with the $3\,d_z$ coefficients of
$h_m, h_e, h_w$ on the polynomial basis in $z$.

\paragraph{Homothetic structure of 
$\mathbb{E}[\omega \mid k, l]$.}
Under Assumption~\ref{ass:homothetic} (Homothetic Weak
Separability), the conditional expectation of productivity admits the
representation
$\mathbb{E}[\omega_{jt} \mid k_{jt}, l_{jt}] = h(v(k_{jt}, l_{jt}))$.
The economic motivation is discussed in Section~\ref{sec:homothetic}.
I parametrize the index function using a CES aggregator:
\begin{equation}
  v_{jt}(\alpha, \rho_v) = \frac{1}{\rho_v}\,\log\!\bigl(
    \alpha\, e^{\rho_v\, k_{jt}} + (1 - \alpha)\, e^{\rho_v\, l_{jt}}
  \bigr),
  \label{eq:index_v}
\end{equation}
which, in levels, corresponds to the CES aggregator
$V = \bigl(\alpha\, K^{\rho_v} + (1 - \alpha)\, L^{\rho_v}\bigr)^{1/\rho_v}$.
This nests the Cobb--Douglas case ($\rho_v \to 0$, where $v \to \alpha\,k + (1-\alpha)\,l$)
as a special case and satisfies the degree-one
homogeneity requirement (Assumption~\ref{ass:homothetic}(B)) and the
strict convexity of isoquants
(Assumption~\ref{ass:homothetic}(C)) for $\alpha \in (0,1)$ and any $\rho_v \neq 0$.
The transformation function $h$ is approximated by a cubic polynomial:
\begin{equation}
  h(v;\, \rho) = \rho_1\, v + \rho_2\, v^2 + \rho_3\, v^3,
  \label{eq:h_poly}
\end{equation}
where the constant $\rho_0$ is absorbed by de-meaning prior to estimation. Under the normalization 
$\mathbb{E}[\omega] = 0$, the constant satisfies 
$\rho_0 = -\mathbb{E}[\rho_1 v + \rho_2 v^2 + \rho_3 v^3]$; this 
constant is not separately identified from the production function 
intercept and is recovered post-estimation. Condition~(A) of 
Assumption~\ref{ass:homothetic} ($h'$ non-constant) requires 
$\rho_2 \neq 0$ or $\rho_3 \neq 0$; this is a necessary condition 
for the identification of $\beta_k$ and $\beta_l$ 
(Theorem~\ref{thm:homothetic_id}). I report results for polynomial orders 3 through 5 as a
robustness check; computational details including the
parametrization of $h$ are in Appendix~\ref{app:computation}.

\paragraph{Parameter classification.}
The full parameter vector is 
$\Theta = (\theta_1', \theta_2')'$, where:
\begin{align*}
  \theta_1 &= (\beta_m,\, \beta_e,\, \beta_w,\, 
             \theta_g,\, \psi_\omega) 
  &&\text{(intermediate input and demand parameters),} \\
  \theta_2 &= (\beta_k,\, \beta_l,\, 
             \alpha,\, \rho_1,\, \rho_2,\, \rho_3) 
  &&\text{(primary input and homothetic parameters).}
\end{align*}

\paragraph{Residuals.}
Define the observable residuals, where the nuisance functions
$h_m(z), h_e(z), h_w(z)$ are estimated jointly as described below:%
\begin{align}
  \tilde{m}_{jt}   &\equiv m_{jt} - \gamma_k\, k_{jt}
                   - \gamma_l\, l_{jt} - h_m(z_{jt})
                   = \gamma_\omega\,\omega_{jt} + \tau_{jt},
                   \label{eq:resid_m} \\
  \tilde{e}_{jt}  &\equiv e_{jt} - \delta_k\, k_{jt}
                   - \delta_l\, l_{jt} - h_e(z_{jt})
                   = \delta_\omega\,\omega_{jt} + \nu_{jt},
                   \label{eq:resid_mp} \\
  \tilde{w}_{jt} &\equiv w_{jt} - \zeta_k\, k_{jt}
                   - \zeta_l\, l_{jt} - h_w(z_{jt})
                   = \zeta_\omega\,\omega_{jt} + \eta_{jt},
                   \label{eq:resid_mpp} \\
  \tilde{y}_{jt}   &\equiv y_{jt} - \beta_m m_{jt}
                   - \beta_e e_{jt} - \beta_w w_{jt}
                   = \beta_k k_{jt} + \beta_l l_{jt}
                   + \omega_{jt} + \varepsilon_{jt}.
                   \label{eq:resid_y}
\end{align}
The equalities following the definition signs hold at the true
parameter values. The nuisance functions $h_m(z), h_e(z), h_w(z)$
are approximated by second-degree polynomials in $z$ and estimated
jointly with the structural parameters; details are in
Appendix~\ref{app:demeaning_details}.

\subsubsection{Moment Conditions}
\label{sec:gmm_moments}

Under Block~A+B estimation, the normalization $\beta_k = \beta_l = 0$
is adopted; this is without loss of generality because the
$\Delta(k,l)$ observational equivalence
(Theorem~\ref{thm:obs_equiv}) implies that $\beta_k$ and $\beta_l$
are not separately identified from the demand function slopes on
$(k, l)$ without Block~C. Under this normalization,
$\tilde{y}_{jt} = \omega_{jt} + \varepsilon_{jt}$.

\paragraph{Block~A: Proxy Moments.}%
\footnote{The moment conditions require
Assumption~\ref{ass:gmm_uncorrelated}
(Appendix~\ref{app:identification_details}), which is implied by
the zero conditional mean condition together with
Assumption~\ref{ass:cond_indep}.}
By eliminating $\omega_{jt}$ across pairs of residuals, I
construct three error terms that depend only on the structural 
shocks:
\begin{align}
  u_{1,jt} &\equiv \delta_\omega\,\tilde{m}_{jt} 
           - \gamma_\omega\,\tilde{e}_{jt} 
           = \delta_\omega\,\tau_{jt} 
           - \gamma_\omega\,\nu_{jt}, 
           \label{eq:e1} \\
  u_{2,jt} &\equiv \zeta_\omega\,\tilde{m}_{jt} 
           - \gamma_\omega\,\tilde{w}_{jt} 
           = \zeta_\omega\,\tau_{jt} 
           - \gamma_\omega\,\eta_{jt}, 
           \label{eq:e2} \\
  u_{3,jt} &\equiv \gamma_\omega\,\tilde{y}_{jt} 
           - \tilde{m}_{jt} 
           = \gamma_\omega\,\varepsilon_{jt} - \tau_{jt}. 
           \label{eq:e3}
\end{align}
An asymmetric instrument strategy assigns different instruments to
each error based on the shock composition. Since $u_{i,jt}$ excludes
certain shocks, the corresponding intermediate inputs serve as valid
additional instruments
(Appendix~\ref{app:block_a_details}):
\begin{align}
  \mathbb{E}\bigl[(Z_{\mathrm{base}},\, w_{jt})
  \otimes u_{1,jt}(\Theta)\bigr] &= \mathbf{0},
  \label{eq:momentA1} \\
  \mathbb{E}\bigl[(Z_{\mathrm{base}},\, e_{jt})
  \otimes u_{2,jt}(\Theta)\bigr] &= \mathbf{0},
  \label{eq:momentA2} \\
  \mathbb{E}\bigl[(Z_{\mathrm{base}},\, e_{jt},\, w_{jt})
  \otimes u_{3,jt}(\Theta)\bigr] &= \mathbf{0},
  \label{eq:momentA3}
\end{align}
where $Z_{\mathrm{base},jt} = (k_{jt},\, l_{jt},\, z_{jt})$.
Block~A is invariant to the $\Delta(k,l)$
transformation of Theorem~\ref{thm:obs_equiv} and therefore cannot
separately identify $\beta_k$ from the demand slopes on $(k, l)$
(Appendix~\ref{app:block_a_details}).

\paragraph{Block~B: Covariance Moments.}
Let $\phi_h$ denote the productivity loading of residual $\tilde{h}$:
$\phi_m \equiv \gamma_\omega$, $\phi_e \equiv \delta_\omega$,
$\phi_w \equiv \zeta_\omega$, and $\phi_y \equiv 1$.
The mutual exogeneity of shocks
(Assumption~\ref{ass:gmm_uncorrelated}(3)) implies that
$\mathrm{Cov}(\tilde{h}_1, \tilde{h}_2)
= \phi_{h_1}\,\phi_{h_2}\,\mathrm{Var}(\omega)$ for
each pair $(h_1, h_2) \in \{y, m, e, w\}$, $h_1 \neq h_2$.
Eliminating $\mathrm{Var}(\omega)$ across the six distinct pairs
yields six covariance relations of the form
\begin{equation}
  \mathbb{E}\bigl[\tilde{h}_1\,\tilde{h}_2
  - \phi_{h_2}\,\tilde{y}\,\tilde{h}_1\bigr] = 0,
  \label{eq:momentB_compact}
\end{equation}
for each pair (Appendix~\ref{app:block_b_derivation} lists the
individual conditions). Of these six relations, four are
algebraically implied by the Block~A instrumental variable
moments: the conditions involving cross-products of the demand
residuals $\tilde{e}$ and $\tilde{w}$ with the proxy equation
errors are already encoded in the Block~A moment conditions
through the instruments $Z_3 = (k, l, \tilde{e}, \tilde{w})$.
Consequently, Block~B contributes only two independent moment
conditions beyond Block~A, and the combined Block~A+B system is
just-identified. The concentrated covariance-ratio formulas
derived in Appendix~\ref{app:block_b_derivation} remain useful
for obtaining closed-form scale parameter estimates,
improving computational efficiency.
As with Block~A, Block~B is invariant to the
$\Delta(k,l)$ transformation.

\paragraph{Block~C: Curvature Moments.}
\label{sec:blockC}
Block~C resolves the $\Delta(k,l)$ indeterminacy by implementing 
the homothetic regularity condition 
(Assumption~\ref{ass:homothetic}, 
Theorem~\ref{thm:homothetic_id}).

Define the net output residual 
$\tilde{y}_{jt}(\theta_1) = y_{jt} - \beta_m m_{jt} 
- \beta_e e_{jt} - \beta_w w_{jt}$. Evaluating at the 
true parameter vector $\Theta_0$, the production 
function~\eqref{eq:gmm_prod} gives 
$\tilde{y}_{jt} = \beta_k k_{jt} + \beta_l l_{jt} + \omega_{jt} 
+ \varepsilon_{jt}$. Taking the conditional expectation with 
respect to $(k_{jt}, l_{jt})$:
\begin{equation}
  \mathbb{E}[\tilde{y}_{jt} \mid k, l] 
  = \beta_k\, k + \beta_l\, l + h(v(k,l)).
  \label{eq:cond_exp_ytilde}
\end{equation}
The first step uses 
$\mathbb{E}[\varepsilon_{jt} \mid k, l] = 0$, which follows from 
Assumption~\ref{ass:additive_error} by the law of iterated 
expectations. The second step uses 
Assumption~\ref{ass:homothetic}. No structural
decomposition of $\omega_{jt}$ is postulated; 
equation~\eqref{eq:cond_exp_ytilde} follows entirely from the 
definition of conditional expectation and the regularity condition 
on its functional form.

Define the structural error:
\begin{equation}
  u_{jt}(\Theta) \equiv \tilde{y}_{jt}(\theta_1) 
  - \beta_k\, k_{jt} - \beta_l\, l_{jt} 
  - h\bigl(v(k_{jt}, l_{jt};\, \alpha);\, \rho\bigr).
  \label{eq:u_structural}
\end{equation}
Equation~\eqref{eq:cond_exp_ytilde} implies 
$\mathbb{E}[u_{jt} \mid k_{jt}, l_{jt}] = 0$ at $\Theta_0$, which 
yields valid moment conditions with any function of $(k, l)$ as 
instruments. I use the polynomial instrument vector:
\begin{equation}
  Z_{2,jt} = \bigl(k_{jt},\; l_{jt},\;
  k_{jt}^2,\; l_{jt}^2,\; k_{jt}\,l_{jt},\;
  k_{jt}^3,\; l_{jt}^3,\; k_{jt}^2\,l_{jt},\;
  k_{jt}\,l_{jt}^2 \bigr)',
  \label{eq:Z2}
\end{equation}
giving the moment conditions:
\begin{equation}
  \mathbb{E}\bigl[Z_{2,jt} \cdot u_{jt}(\Theta)\bigr] 
  = \mathbf{0}.
  \label{eq:momentC}
\end{equation}
As with Block~A, the constant term is excluded from $Z_{2,jt}$
and $\rho_0$ (the intercept of $h$) is recovered post-estimation
from the de-meaned residuals.

\paragraph{Identification mechanism.}
The structural error $u_{jt}$ depends on 
$\theta_2 = (\beta_k, \beta_l, \alpha, \rho_1, \rho_2, \rho_3)$. 
Theorem~\ref{thm:homothetic_id} establishes that under 
Assumption~\ref{ass:homothetic}, the $\Delta(k,l) = c_k k + c_l l$ 
transformation is incompatible with the homothetic structure unless 
$(c_k, c_l) = (0,0)$. Operationally, this identification works 
through the higher-order instruments in $Z_{2,jt}$: the nonlinear 
terms $v^2$ and $v^3$ in $h$ interact with the homogeneity of $v$ 
in a manner that uniquely pins down $\beta_k$ and $\beta_l$.

If $\rho_2 = \rho_3 = 0$ (i.e., $h$ is linear), then $\beta_k$ and
$\rho_1 \alpha$ are linearly confounded and identification fails.
The significance of $\hat{\rho}_2$ and/or $\hat{\rho}_3$ therefore
serves as a diagnostic for the strength of identification. I report
estimates and standard errors of these parameters in both the
simulation and the empirical analysis.\footnote{%
In practice, even when $\rho_2$ and $\rho_3$ are nonzero,
the near-collinearity between $\rho_1 v(k,l)$ and
$(\beta_k k, \beta_l l)$ can impede numerical optimization.
I orthogonalize the polynomial basis $(v, v^2, v^3)$
against the linear span of $(1, k, l)$ before constructing
$h$, so that only the nonlinear component of $h(v)$ (the
source of identification,
Theorem~\ref{thm:homothetic_id}) enters the Block~C
moment conditions. This is a reparametrization: the
structural parameters $(\beta_k, \beta_l, \alpha)$ are
invariant, while the polynomial coefficients
$(\rho_1, \rho_2, \rho_3)$ are redefined as loadings on
the orthogonalized basis.}

\subsubsection{De-Meaning, Intercepts, and Estimation Procedure}
\label{sec:gmm_procedure}

\paragraph{De-meaning and estimation procedure.}
All variables are de-meaned prior to estimation and the constant
is excluded from all instrument vectors. All parameters
$\Theta = (\theta_1, \theta_2)$ are estimated simultaneously by
two-step GMM:
\begin{equation}
  \hat{\Theta} = \arg\min_\Theta\;
  g_N(\Theta)'\,\hat{W}\,g_N(\Theta),
  \qquad
  g_N(\Theta) = \frac{1}{N}\sum_{j=1}^N
  \bar{g}_j(\Theta),
  \label{eq:gmm_objective}
\end{equation}
where $\bar{g}_j(\Theta) = T^{-1}\sum_{t=1}^T g_{jt}(\Theta)$
stacks all moment conditions, and $\hat{W}$ is the optimal
weighting matrix estimated from a first-step identity-weighted
GMM. Post-estimation intercepts and further implementation
details are in Appendix~\ref{app:demeaning_details}.

\subsubsection{Recovering Productivity}
\label{sec:omega_recovery_gmm}

Given the estimated parameters $\hat{\Theta}$, the firm-level 
productivity measure is computed as:
\begin{equation}
  \hat{\omega}_{jt} = y_{jt} - \hat{\beta}_k\, k_{jt} 
  - \hat{\beta}_l\, l_{jt} - \hat{\beta}_m\, m_{jt} 
  - \hat{\beta}_{e}\, e_{jt} 
  - \hat{\beta}_{w}\, w_{jt}.
  \label{eq:omega_recovery}
\end{equation}
If $\varepsilon_{jt} = 0$, this equals $\omega_{jt}$. Otherwise,
$\hat{\omega}_{jt} = \omega_{jt} + \varepsilon_{jt}$; the ex-post
shock acts as classical measurement error when $\hat{\omega}_{jt}$
is used in subsequent regressions
(Theorem~\ref{thm:asymptotics}).

\paragraph{Practical treatment of the $\Delta(k, l)$ indeterminacy.}
When Block~C is not imposed, $\hat{\omega}_{jt}$ includes a location
shift $c(k_{jt}, l_{jt})$ (Theorem~\ref{thm:obs_equiv}). Since $c$
depends only on $(k, l)$, flexible controls in $(k, l)$ absorb this
shift in regression analysis; in difference-in-differences designs
with parallel $(k, l)$ trends, $c$ is automatically differenced out.
When the identifying restrictions of Section~\ref{sec:closing_gap}
are imposed, $\Delta(k, l)$ reduces to a constant absorbed by fixed
effects. The proposed estimator therefore supports event studies and
productivity regressions without requiring Block~C: the
$\Delta(k,l)$ component is controlled via polynomial $(k,l)$
regressors in all subsequent regressions
(Section~\ref{sec:empirical}).
The formal justification is provided by
Proposition~\ref{prop:omega_hat_D} and the ATT identification
result in Appendix~\ref{app:proof_omega_hat_D}.

\subsubsection{Asymptotic Properties}
\label{sec:gmm_asymptotics_revised}

Under standard regularity conditions
(Appendix~\ref{app:asymptotic_proof}), the GMM estimator satisfies:

\begin{thm}[Asymptotic Properties of the GMM Estimator]
\label{thm:asymptotics}
As $N \to \infty$ with $T$ fixed:
(a)~$\hat{\Theta} \xrightarrow{p} \Theta_0$; and
(b)~$\sqrt{N}\,(\hat{\Theta} - \Theta_0)
\xrightarrow{d} N(0, V)$, where
\begin{equation}
  V = (G'WG)^{-1}\, G'W\Sigma WG\, (G'WG)^{-1}
  \label{eq:asymptotic_variance}
\end{equation}
with $\Sigma = \mathbb{E}[\bar{g}_j(\Theta_0)\,
\bar{g}_j(\Theta_0)']$ and
$G = \mathbb{E}[\nabla_\Theta \bar{g}_j(\Theta_0)]$.
\end{thm}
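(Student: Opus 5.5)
The plan is to treat $\hat{\Theta}$ as a standard two-step GMM estimator computed from $N$ i.i.d.\ firm-level clusters. Each cluster contributes $\bar{g}_j(\Theta)=T^{-1}\sum_t g_{jt}(\Theta)$, and $T$ is fixed. With this framing, the within-firm serial dependence of $(\omega_{jt},\tau_{jt},\nu_{jt},\eta_{jt},\varepsilon_{jt})$ is absorbed into $\Sigma$. No dynamic restriction is needed, which matches the static identification strategy. I will verify the conditions of the standard consistency and asymptotic normality theorems for extremum estimators (Newey and McFadden, Theorems 2.6 and 3.4). The regularity conditions from Appendix~\ref{app:asymptotic_proof} supply:
\begin{enumerate}
\item[(R1)] i.i.d.\ sampling across $j$;
\item[(R2)] compactness of the parameter space $\Theta$, with $\Theta_0$ in its interior;
\item[(R3)] finite fourth moments of $(y,k,l,z,m,e,w)$, so that the products in Blocks~A--C have a square-integrable envelope;
\item[(R4)] $\hat{W}\xrightarrow{p}W$, positive definite.
\end{enumerate}

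The first and substantive step is global identification: $\mathbb{E}[\bar{g}_j(\Theta)]=0$ if and only if $\Theta=\Theta_0$.
\begin{enumerate}
\item[(i)] The moments hold at $\Theta_0$. Blocks~A and~B hold by Assumption~\ref{ass:gmm_uncorrelated}, which follows from Assumptions~\ref{ass:additive_error}--\ref{ass:cond_indep}. Block~C holds by \eqref{eq:cond_exp_ytilde}.
\item[(ii)] Uniqueness for $\theta_1$. Blocks~A+B pin down $\theta_1$ modulo the $\Delta(k,l)=c_kk+c_ll$ class, by Theorem~\ref{thm:obs_equiv} specialized as in \eqref{eq:linear_ambiguity}. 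The loadings $\psi_\omega$ are fixed by the covariance ratios $\mathrm{Cov}(\tilde h_1,\tilde h_2)=\phi_{h_1}\phi_{h_2}\mathrm{Var}(\omega)$, which requires $\mathrm{Var}(\omega)>0$ and nonzero loadings. $(\beta_m,\beta_e,\beta_w)$ are then fixed by the Block~A IV moments, provided the instrument relevance rank condition holds.
\item[(iii)] Uniqueness for $\theta_2$. Theorem~\ref{thm:homothetic_id} excludes nonzero $(c_k,c_l)$ under Assumption~\ref{ass:homothetic}, which requires $\rho_2\neq0$ or $\rho_3\neq0$ and $\rho_v\neq0$.
\item[(iv)] Passing to the finite instrument set. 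The population identification argument uses the full conditional mean restriction, so I must also check that the polynomial instruments $Z_2$ in \eqref{eq:Z2} suffice. I will do this by assuming (as a stated regularity condition) that $\mathbb{E}[Z_2\,u(\Theta)]=0$ only at $\Theta_0$ for the parametric family. A local rank condition, $G$ of full column rank, covers the local version.
\end{enumerate}
This is the main obstacle. Identification is genuinely global and nonlinear through $\alpha$ and $\rho$, and the finite instrument set may not inherit it. I would handle this by imposing the instrument-level identification as an explicit assumption, justified by Theorem~\ref{thm:homothetic_id}, rather than deriving it.

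Consistency, part (a), then follows from three facts:
\begin{enumerate}
\item[(a1)] $g_N$ is continuous in $\Theta$; it is polynomial in $\theta_1$ and smooth in $\theta_2$ for $\alpha\in(0,1)$, $\rho_v\neq0$.
\item[(a2)] $g_N\to\mathbb{E}[\bar g_j]$ uniformly in probability, by a uniform law of large numbers. This uses compactness and a dominating function built from the moment bounds in (R3).
\item[(a3)] $\hat{W}\xrightarrow{p}W$. The first-step estimator is consistent by the same argument with identity weighting, so the estimated optimal weight converges.
\end{enumerate}

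For asymptotic normality, part (b), I will proceed as follows.
\begin{enumerate}
\item[(b1)] Take a mean-value expansion of the first-order condition $G_N(\hat\Theta)'\hat W g_N(\hat\Theta)=0$ around $\Theta_0$.
\item[(b2)] Apply the CLT to $\sqrt N g_N(\Theta_0)$, giving the limit $N(0,\Sigma)$. This needs finite $\Sigma$, which the fourth moments in (R3) provide.
\item[(b3)] Use a ULLN for the Jacobian, so $G_N(\bar\Theta)\xrightarrow{p}G$.
\item[(b4)] Invoke full column rank of $G$ and interiority of $\Theta_0$.
\end{enumerate}
These yield the sandwich form \eqref{eq:asymptotic_variance}.

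Two remarks complete the argument. First, $\varepsilon_{jt}$ enters only $\tilde y$, and it does so additively and mean-zero given regressors. It therefore affects $\Sigma$ but not the moment validity, which justifies the claim that it only inflates variance. Second, the nuisance polynomial coefficients of $h_m,h_e,h_w$ are treated as finite-dimensional parameters estimated jointly. This avoids any generated-regressor correction. Pre-estimation de-meaning is handled by stacking the sample means as additional just-identified moments, or by noting that its effect is captured in $G$.
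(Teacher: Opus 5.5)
Your overall route is the paper's. Its proof just invokes Theorems~2.6 and~3.4 of Newey and McFadden, treats each firm's time-averaged moment $\bar g_j$ as one independent observation with $T$ fixed (clustering absorbs within-firm dependence), and simply \emph{assumes} global identification and full column rank of $G$ (Assumption~\ref{ass:asymptotics_app}(4) and~(6)). That is exactly where your step~(iv) ends up, so your discussion of Theorems~\ref{thm:obs_equiv} and~\ref{thm:homothetic_id} motivates that assumption rather than deriving it.

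One step fails as written. Finite fourth moments in (R3) do give a square-integrable envelope for Blocks~A and~B, whose moments are products of two linear forms in the data. They do not for Block~C.
\begin{itemize}
\item $Z_2$ contains cubic monomials such as $k^3$, and $u(\Theta)$ contains $\rho_3 v^3$ with $|v(k,l)|\le\max(|k|,|l|)$. So $Z_2 u$ has entries of order $|(k,l)|^6$.
\item The dominating functions for the uniform LLN in (a2) and for the Jacobian in (b3) (e.g.\ $\partial(Z_2u)/\partial\rho_3=-Z_2v^3$) already need sixth moments of $(k,l)$.
\item Finiteness of $\Sigma$ in (b2) needs $\mathbb{E}[k^6u(\Theta_0)^2]<\infty$.
\item You need $\hat W\xrightarrow{p}W$ in (a3), hence consistency of $\hat\Sigma$ at the first-step estimate. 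That requires an envelope for $\|Z_2u(\Theta)\|^2$ over a neighborhood of $\Theta_0$, i.e.\ terms like $\mathbb{E}[k^6v^6]$, roughly twelfth moments.
\end{itemize}
This is why the paper states only ``finite moments of sufficiently high order''; strengthen (R3) accordingly.

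Two smaller corrections. First, validity of Blocks~A--B at $\Theta_0$ needs the zero-conditional-mean condition on $(\tau,\nu,\eta)$ given $(\omega,x)$, on top of Assumptions~\ref{ass:additive_error}--\ref{ass:cond_indep}. That is how the paper obtains Assumption~\ref{ass:gmm_uncorrelated}. Second, the effect of de-meaning is not ``captured in $G$'', because $G$ differentiates only in $\Theta$, while the sampling error of the means enters $\sqrt N g_N(\Theta_0)$. Its contribution is $o_p(1)$ for the bilinear Blocks~A--B but not in general for Block~C, so your stacking option is the one to use.
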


Standard errors are clustered at the firm level to accommodate
arbitrary within-firm serial dependence. The proof and regularity
conditions are in Appendix~\ref{app:asymptotic_proof}.

Computational details are in Appendix~\ref{app:computation}.

\subsubsection{Specification Testing and Diagnostics}
\label{sec:diagnostics}

\paragraph{Identification count.}
The combined Block~A+B system is just-identified: Block~A
contributes 10 moment conditions, and Block~B contributes exactly
two independent moment conditions beyond Block~A (four of the six
Block~B covariance relations are algebraically redundant with Block~A;
Section~\ref{sec:gmm_approach}), giving 12 moment conditions matching
the 12 free parameters in $\theta_1$. The scale parameters
$(\gamma_\omega, \delta_\omega, \zeta_\omega)$ are estimated via
closed-form covariance ratios for computational efficiency.

\paragraph{Strength of identification for $\beta_k, \beta_l$.}
As discussed in Section~\ref{sec:blockC}, the identification of 
$\beta_k$ and $\beta_l$ relies on the nonlinearity of $h$ 
($\rho_2 \neq 0$ or $\rho_3 \neq 0$). I report the estimates and
$t$-statistics of $\hat{\rho}_2$ and $\hat{\rho}_3$ as diagnostics. 
If both are insignificant, the identification of primary input 
elasticities may be weak, and the researcher should interpret 
$\beta_k$ and $\beta_l$ with caution or consider exclusion 
restrictions (Corollary~\ref{thm:exclusion}) as an alternative 
identification strategy.

\paragraph{Reduced-form check of Assumption~\ref{ass:homothetic}.}
As a pre-estimation diagnostic, one may estimate $\theta_1$ from 
Blocks~A and~B alone (which does not require 
Assumption~\ref{ass:homothetic}), construct 
$\tilde{y}_{jt}(\hat{\theta}_1)$, and examine whether 
$\mathbb{E}[\tilde{y} \mid k, l]$ exhibits a homothetic structure 
via nonparametric regression. A visual departure from homotheticity
would indicate a violation of the identifying assumption.

\paragraph{Polynomial degree selection.}
The cubic specification of $h$ can be extended to higher-order
polynomials. I recommend reporting results for polynomial
orders~3 through~5 and selecting via information criteria.

Appendix~\ref{subsec:blockc_results} reports the full-sample
Block~C recovery results for $(\beta_k, \beta_l)$ across all
502 industries, comparing the homothetic approach with the exclusion
restriction and ACF estimators.

\section{Monte Carlo Simulation}

\label{sec:simulation}

The identification results in
Section~\ref{sec:model_identification} show that the Markov
assumption is unnecessary; this section asks whether removing it
matters quantitatively. I use Monte Carlo simulations to measure
the bias that the Markov assumption introduces in the materials
elasticity and to trace its propagation into downstream objects.
The primary comparison is between the proposed estimator, which
imposes no restriction on productivity dynamics, and the standard
ACF estimator, which requires first-order Markov.

\subsection{Data Generating Process (DGP)}

All DGPs share a common structure for the production function, demand
functions, and dynamic input decisions, differing only in the
productivity process. Detailed parameter settings are in
Appendix~\ref{app:dgp}.

\subsubsection{Basic Structure}

The firm's production function is Cobb--Douglas in all inputs:\footnote{The
Cobb--Douglas specification is standard in Monte Carlo studies of
production function estimators
\parencite{ackerberg2015identification,gandhi2020onthe}. Evaluating
the proposed method under more flexible production functions (e.g.,
translog) is left for future work; the identification results
(Theorems~\ref{thm:density_id}--\ref{thm:homothetic_id}) do not
require Cobb--Douglas.}
\begin{equation}
  y_{jt} = \beta_0 + \beta_k k_{jt} + \beta_l l_{jt}
  + \beta_m m_{jt} + \beta_e e_{jt}
  + \beta_w w_{jt} + \omega_{jt} + \varepsilon_{jt},
  \label{eq:sim_prod_func}
\end{equation}
with true parameter values
$(\beta_0, \beta_k, \beta_l, \beta_m, \beta_e, \beta_w)
= (0.1, 0.2, 0.3, 0.3, 0.15, 0.1)$ and
$\varepsilon_{jt} \sim \text{i.i.d.}\ N(0, 0.05^2)$.

Intermediate input demands are log-linear in $(k, l, \omega)$ with
input-specific demand shocks following independent AR(1) processes
($\rho = 0.5$, $\sigma = 0.15$). The demand function coefficients
are calibrated from the first-order conditions of cost minimization
under input-specific markdowns (Appendix~\ref{sec:micro_fnd});
the productivity loading coefficients
$(\gamma_\omega, \delta_\omega, \zeta_\omega) = (2.2, 2.0, 1.8)$
differ across inputs, reflecting heterogeneous
markdowns.\footnote{Under perfect competition with a Cobb--Douglas
production function, the first-order condition implies
$\gamma_\omega = 1/(1 - \beta_m) \approx 1.43$; the larger values
incorporate input-specific markdowns and procurement frictions
(see Appendix~\ref{sec:micro_fnd}).} Conditional independence
(Assumption~\ref{ass:cond_indep}) is a cross-sectional condition
requiring mutual independence across inputs at each point in time;
it is unaffected by the serial correlation of individual shocks,
since each AR(1) has mutually independent innovations.

Primary inputs are endogenously determined. Capital accumulates
through dynamic investment, and labor is chosen based on forecasted productivity from an AR(1)
model. The labor demand function
is structured so that Assumption~\ref{ass:homothetic} holds: the
conditional expectation $\mathbb{E}[\omega_{jt} \mid k_{jt},
l_{jt}]$ is a function of a CES aggregator with
$(\alpha, \rho_v) = (0.4, 0.3)$. Full parameter details are provided
in Appendix~\ref{app:dgp}.

To test the robustness of the proposed method, I generate
productivity under three scenarios:
\begin{enumerate}
\item \textbf{DGP1: AR(1) Markov Process (Baseline).} The standard
  case where existing methods are correctly specified.
  $\omega_{jt} = 0.8\,\omega_{j,t-1} + \xi_{jt}$, with
  $\sigma_\xi = 0.2$.

\item \textbf{DGP2: AR(2) Process.} Productivity depends on its own
  two-period history, as when R\&D investments require two years to
  affect efficiency. The first-order Markov assumption is violated.
  $\omega_{jt} = 0.6\,\omega_{j,t-1} + 0.3\,\omega_{j,t-2}
  + \xi_{jt}$, $\sigma_\xi = 0.15$.

\item \textbf{DGP3: Potential Outcome Model.} A firm's realized
  productivity is determined by an endogenous binary treatment
  $D_{jt}$, generating a potential outcome process incompatible
  with the first-order Markov assumption.
  Following the diagonal reference model of
  \textcite{chen2024identifying}, untreated productivity follows
  $\omega^0_{jt} = 0.8\,\omega^0_{j,t-1} + \xi_{0,jt}$
  ($\sigma_{\xi 0} = 0.2$) and treated productivity follows
  $\omega^1_{jt} = 0.5\,\omega^1_{j,t-1} + 0.15 + \xi_{1,jt}$
  ($\sigma_{\xi 1} = 0.25$). Observed productivity is
  $\omega_{jt} = (1-D_{jt})\omega^0_{jt} + D_{jt}\omega^1_{jt}$.
  Treatment is reversible and endogenous:
  $D_{jt} = \mathbb{I}(\omega^0_{jt} > 0)$, so firms enter and
  exit treatment as their untreated potential productivity
  fluctuates above and below zero.
  Full parameter settings, including the capital accumulation
  and labor decision rules common to all DGPs, are provided
  in Appendix~\ref{app:dgp}.

\item \textbf{DGP4: Conditional Independence Violation.}
  The productivity process is AR(1) as in DGP1, but the
  electricity demand shock $\nu_{jt}$ and the water demand shock
  $\eta_{jt}$ are correlated via a common factor:
  $\nu_{jt} = \sqrt{1-\rho_{ew}^2}\,\sigma_\nu\,\epsilon_\nu
  + \rho_{ew}\,\sigma_\nu\,\epsilon_{\text{common}}$ and similarly for
  $\eta_{jt}$, where $\epsilon_{\text{common}} \sim \mathcal{N}(0,1)$
  is independent of $\omega_{jt}$; the materials shock $\tau_{jt}$
  remains independent.
  This generates $\mathrm{Corr}(\nu_{jt}, \eta_{jt}) = \rho_{ew} \in
  \{0, 0.05, 0.10, 0.20, 0.30\}$, directly violating
  Assumption~\ref{ass:cond_indep} when $\rho_{ew} > 0$.
  A common energy price shock or seasonal supply constraint that
  simultaneously raises both electricity and water costs is one
  economic interpretation, arguably the most salient threat to
  conditional independence, since both are utility services subject
  to common regulatory and infrastructure conditions.
  This DGP tests the robustness of the proposed method to
  violations of the conditional independence assumption.
\end{enumerate}

\subsection{Estimation Methods Compared}

Using the generated data, I organize the estimation into two
parts to isolate the contributions of each block of moment conditions.

\paragraph{Part~1: Flexible input parameters.}
I estimate the intermediate input elasticities
$(\beta_m, \beta_e, \beta_w)$ and compare four estimators.\footnote{The main text figures report two of the four estimators (ACF and Proposed). ACF-Mod results are in Appendix~\ref{sec:mc_additional_results}; GNR results are also reported there.}
The four estimators are:
\begin{enumerate}
\item \textbf{Proposed (Block~A+B):} The GMM estimator of
  Section~\ref{sec:gmm_approach}, using the intermediate input
  moment conditions (Block~A) and the covariance moments
  (Block~B). This part does not identify $\beta_k$ and $\beta_l$,
  which remain subject to the $\Delta(k,l)$ indeterminacy
  (Theorem~\ref{thm:obs_equiv}).

\item \textbf{Standard ACF:} The two-step GMM estimator of
  \textcite{ackerberg2015identification}, assuming a first-order
  Markov process for productivity.

\item \textbf{Modified ACF (ACF-Mod):} A variant of the ACF estimator
  in which the demand shocks $(\tau_{jt}, \nu_{jt}, \eta_{jt})$ are
  treated as observed and included as controls in the first stage.
  This ensures scalar unobservability by construction. Any remaining
  bias in ACF-Mod can therefore be attributed solely to the violation
  of the Markov assumption, isolating the dynamic misspecification
  channel.

\item \textbf{GNR:} The estimator of
  \textcite{gandhi2020onthe}, implemented with a polynomial
  share regression of degree~2 and a degree-3 polynomial for
  $g(\omega_{t-1})$, with common prices ($P = r = 1$).
  In the DGP, persistent input-specific demand shocks
  ($\rho = 0.5$) violate both the FOC premise and the
  non-persistence condition of GNR (their Appendix~O6,
  Assumption~7), so GNR tests the share regression approach
  under persistent input market imperfections.
  GNR is included in Part~1 only, as its second stage is
  structurally identical to ACF.
\end{enumerate}

\paragraph{Part~2: Fixed input parameters.}
I additionally estimate $(\beta_k, \beta_l)$ by adding the
homothetic regularity condition (Block~C) to the proposed estimator:
\begin{enumerate}
\item \textbf{Proposed (Block~A+B+C):} The full GMM estimator using
  all three blocks, with the CES aggregator
  $v(k,l) = \frac{1}{\rho_v}\log\bigl(\alpha e^{\rho_v k}
  + (1-\alpha) e^{\rho_v l}\bigr)$ evaluated at the true DGP
  values $(\rho_v, \alpha) = (0.3, 0.4)$.\footnote{These parameters
  are known by construction in the simulation; the empirical
  application treats them as unknown and estimates them by profile
  GMM (Section~\ref{subsec:specification}).} The comparison with
  Part~1 isolates the contribution of Block~C.

\item \textbf{Standard ACF} and \textbf{ACF-Mod:} Same as above,
  now evaluated on $(\beta_k, \beta_l)$ as well.
\end{enumerate}

\subsection{Evaluation Metrics}

I report bias,
$\text{Bias}(\hat{\beta})=\mathbb{E}_{R}[\hat{\beta}^{(r)}]-\beta_{\text{true}}$,
and RMSE,
$\text{RMSE}(\hat{\beta})=\sqrt{\mathbb{E}_{R}[(\hat{\beta}^{(r)}-\beta_{\text{true}})^{2}]}$,
averaged over $R$ Monte Carlo repetitions.

\subsection{Simulation Execution}

For Part~1, I run $R = 100$ replications for each combination
of DGP and estimation method, varying the number of firms
$N \in \{50, 200, 500\}$ and the observation period
$T \in \{10, 20, 50\}$ to examine the impact of sample size.
For Part~2, I run $R = 100$ replications at $(N, T) = (200, 50)$.
The parameter estimates obtained in each repetition are collected,
and mean bias and RMSE are calculated for comparison. With $R = 100$,
the simulation standard error of the estimated bias is approximately
$\text{SD}/\sqrt{R}$; for the typical standard deviation of
$\hat{\beta}_m$ ($\approx 0.005$), this yields a simulation
uncertainty of $\approx 0.0005$, which is small relative to the
reported biases.

\subsection{Results}

I report the Part~1 results in Figures~\ref{fig:bias_convergence} and~\ref{fig:boxplot_part1}
and the Part~2 results in Figure~\ref{fig:blockc_comparison}.\footnote{Additional summary tables, including GNR results, are provided in Appendix~\ref{sec:mc_additional_results}. RMSE convergence plots are in Figure~\ref{fig:rmse_convergence_app}.}
These results confirm that the proposed estimator performs well under
all DGPs considered and illustrate the sensitivity of the ACF framework
to violations of the Markov assumption.

\paragraph{Remark on GNR.}
GNR shares the static identification strategy of the proposed
method (both recover $\beta_m$ from within-period variation
without a Markov assumption) but requires competitive input
markets with non-persistent demand shocks
(their Appendix~O6, Assumption~7). The present DGP, which
features persistent input-specific shocks ($\rho_\tau = 0.5$),
is therefore outside GNR's maintained assumptions by design:
the DGP is calibrated to the proposed method's setting, not
GNR's. Under GNR's own assumptions ($\tau = \nu = \eta = 0$),
the share regression recovers $\beta_m$ consistently regardless
of the productivity process. The simulation results for GNR
(Appendix~\ref{sec:mc_additional_results}) should accordingly
be read as illustrating the sensitivity of the FOC-based approach
to input market imperfections, not as a general performance
comparison.

\paragraph{DGP1 (AR(1) Baseline):}

Under DGP1, where the Markov assumption holds, all three estimators
(ACF, ACF-Mod, and Proposed) are consistent.
The bias for each method decays toward zero as $T$ increases
(Figure~\ref{fig:bias_convergence}).
The boxplots in Figure~\ref{fig:boxplot_part1} corroborate this finding;
the proposed method remains centered on the true values. The ACF
and ACF-Mod estimators show small positive finite-sample bias that
diminishes with sample size (see Appendix~\ref{sec:mc_additional_results}
for detailed tables). However, the proposed estimator exhibits
larger variance than the ACF estimator under DGP1, resulting in
higher RMSE when the Markov assumption is correctly specified
(Appendix Table~\ref{tab:mc_part1_dgp1ar1}). This is the
efficiency cost of the static approach: the proposed method
trades time-series information for robustness to dynamic
misspecification.
Under DGP2 and DGP3, this ranking reverses: ACF's bias dominates
its variance advantage, yielding larger mean squared error.
The static identification strategy is also the
only approach in this literature that permits event study and
difference-in-differences designs, where the treatment itself
violates the Markov assumption
(Section~\ref{subsec:event_study_body}).

\paragraph{DGP2 (AR(2)) and DGP3 (Potential Outcome):}

Under DGP2 and DGP3, where the first-order Markov assumption does
not hold, Figure~\ref{fig:bias_convergence} reveals a clear
divergence. The ACF estimator exhibits positive bias in
$\hat{\beta}_m$ that does not vanish with increasing $T$. Under
DGP2, this reflects standard omitted-variable inconsistency: the
AR(2) component of productivity persistence is not captured by the
first-order transition equation. Under DGP3, the issue is more
fundamental: the Markov transition equation is structurally
incompatible with the potential outcome process
\parencite{chen2024identifying}, so the ACF moment condition lacks
a structural interpretation and the resulting estimate does not
converge to the true $\beta_m$. An infeasible oracle benchmark
(ACF-Mod) that removes scalar unobservability by treating demand
shocks as observed shows comparable bias under both DGPs,
confirming that the source is Markov misspecification rather than
demand shock heterogeneity
(Appendix~\ref{sec:mc_additional_results}).

The proposed method, by contrast, exhibits negligible bias across these
specifications. The bias remains close to zero for all values of $T$
under both DGP2 and DGP3. Because the estimator relies solely on
static conditional independence, it remains invariant to the underlying
productivity dynamics.
The main text figures report results for $N = 500$; increasing $N$
reduces variance for all estimators but does not mitigate ACF's
asymptotic bias under DGP2 or DGP3
(Appendix~Figure~\ref{fig:bias_convergence_byN}), confirming that
the bias is asymptotic rather than finite-sample.

\paragraph{Block~A+B vs.\ Block~A+B+C (Part~2):}

Part~2 supplements Part~1 by adding Block~C to recover
$(\beta_k, \beta_l)$. I use the design point $(N, T) = (200, 50)$, which matches
the Part~1 baseline, to examine whether Block~C disturbs
the Block~A+B parameters.
Figure~\ref{fig:blockc_comparison} presents the results, where
Block~C is added to identify $(\beta_k, \beta_l)$. In the baseline
DGP1, the proposed method recovers both parameters with negligible bias.
Under DGP3, where ACF estimates of $\beta_k$ and $\beta_l$ collapse
toward zero (RMSE $\approx 0.20$--$0.30$), the proposed method
achieves substantially lower error
(RMSE $\approx 0.02$). The intermediate input
elasticities $(\beta_m, \beta_e, \beta_w)$ remain stable between
Part~1 and Part~2, confirming that the addition of Block~C moments
does not contaminate the well-identified flexible input parameters.
This stability shows in finite samples that the joint GMM
system does not transmit Block~C misspecification into the
flexible input estimates: the intermediate input elasticities are identified
by Blocks~A and~B alone
(Theorem~\ref{thm:density_id}, specialized to the
Cobb--Douglas parametric model of Section~\ref{sec:gmm_approach}),
so any misspecification in Block~C affects only $(\beta_k, \beta_l)$. Because markups depend solely on
$\beta_m$ (equation~\eqref{eq:markup_cd}), the primary
empirical application is insulated from Block~C specification.

\paragraph{DGP4 (Conditional Independence Violation):}

DGP4 examines the cost of violating Assumption~\ref{ass:cond_indep} by
introducing correlation between the electricity demand shock
$\nu_{jt}$ and the water demand shock $\eta_{jt}$, arguably the
most economically salient threat to conditional independence, since
both are utility services subject to common energy prices and
infrastructure constraints. The materials shock $\tau_{jt}$ remains
independent. The correlation
$\rho_{ew} \equiv \mathrm{Corr}(\nu_{jt}, \eta_{jt})$ varies from 0 to
0.30.

The bias mechanism operates through the scale parameter
$\zeta_\omega$. Positive $\mathrm{Cov}(\nu, \eta)$ inflates
$\mathrm{Cov}(\tilde{e}, \tilde{w})$, causing the concentrated
scale estimator $\hat{\zeta}_\omega
= \mathrm{Cov}(\tilde{e}, \tilde{w}) / \mathrm{Cov}(\tilde{y},
\tilde{e})$ to overestimate $\zeta_\omega$
(Appendix~\ref{app:ci_violation}). The overestimated
$\hat{\zeta}_\omega$ introduces a positive productivity component
into the Block~A residual $u_2 = \hat{\zeta}_\omega \tilde{m}
- \hat{\gamma}_\omega \tilde{w}$, which the GMM compensates by
\emph{increasing} $\hat{\beta}_m$, yielding an \emph{upward} bias.

Table~\ref{tab:mc_dgp4_ci}
(Appendix~Figure~\ref{fig:dgp4_bias_vs_rho}) reports the results.
When $\rho_{ew} = 0$, the proposed method is approximately unbiased.
As $\rho_{ew}$ increases, $\hat{\beta}_m$ exhibits increasing upward
bias. The magnitudes suggest that the estimator is robust to
moderate violations.
The bias direction is the \emph{same} as the Markov
misspecification bias documented in DGPs~2 and~3 for ACF: both push
$\hat{\beta}_m$ upward. Therefore, the empirical finding that the
proposed estimator yields \emph{lower} $\hat{\beta}_m$ than ACF
(Section~\ref{subsec:main_results}) cannot be attributed to CI
violation; it must reflect Markov misspecification bias in
ACF.\footnote{ACF uses only the materials demand proxy and does not
exploit cross-shock variation, so it is unaffected by
$\mathrm{Corr}(\nu, \eta)$.}

Table~\ref{tab:mc_summary} summarizes the bias properties across
DGPs 1--3. The proposed method is unbiased across all three
specifications, while ACF exhibits positive bias under Markov misspecification.

\begin{table}[htbp]
\centering
\caption{Monte Carlo Summary: Bias Properties across DGPs ($N=200$, $T=50$)}
\label{tab:mc_summary}
\small
\begin{tabular}{lccc}
\toprule
 & DGP~1 (AR1) & DGP~2 (AR2) & DGP~3 (PO) \\
\midrule
Proposed & Unbiased ($+0.002$) & Unbiased ($-0.001$) & Unbiased ($+0.000$) \\
ACF      & Unbiased ($+0.001$) & Biased ($+0.026$) & Biased ($+0.266$) \\
GNR      & Biased ($+0.589$) & Biased ($+0.589$) & Biased ($+0.592$) \\
\bottomrule
\end{tabular}

\medskip
\footnotesize\raggedright \textit{Notes: ``Biased ($+$)'' indicates positive
asymptotic bias in $\hat{\beta}_m$ that does not diminish with
sample size. See Figures~\ref{fig:bias_convergence}--\ref{fig:blockc_comparison} for detailed convergence plots and
Tables~\ref{tab:mc_part1_dgp1ar1}--\ref{tab:mc_part2_dgp3potential} (Appendix~\ref{sec:mc_additional_results}) for
full RMSE and SD by method and DGP.}
\end{table}

Taken together, the Monte Carlo simulations confirm that the proposed
method recovers production function parameters without
imposing restrictions on the productivity process. In contrast, standard
methods exhibit substantial positive bias in $\hat{\beta}_m$ when the
assumed law of motion for productivity does not match the true data
generating process. The simulations establish that Markov misspecification generates a
detectable and economically meaningful bias. The empirical
application then examines whether these patterns hold in
Japanese manufacturing data.

\begin{figure}[htbp]
\centering
\includegraphics[width=\textwidth]{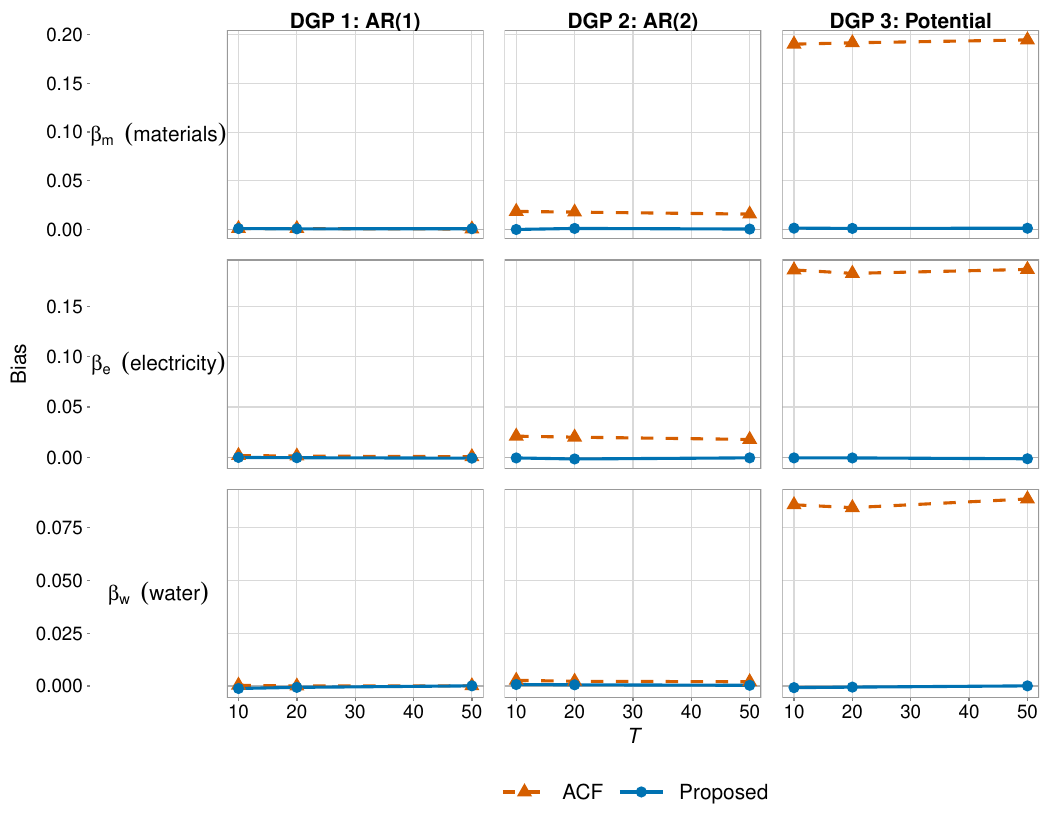}
\caption{Part~1: Mean Bias Convergence ($N = 500$)}
\label{fig:bias_convergence}

\small\raggedright \textit{Notes: Mean bias of $(\hat{\beta}_m, \hat{\beta}_e, \hat{\beta}_w)$ as a function of $T$ for three DGPs ($N = 500$, $R = 100$). Under DGP~1 (baseline AR(1)), both methods are approximately unbiased. Under DGPs~2 and~3, where the first-order Markov assumption is violated, ACF exhibits persistent bias while the proposed method remains centered at zero. Three-method comparison including ACF-Mod is in Figure~\ref{fig:bias_convergence_3methods}; GNR results are in Appendix~\ref{sec:mc_additional_results}.}
\end{figure}

\begin{figure}[htbp]
\centering
\includegraphics[width=\textwidth]{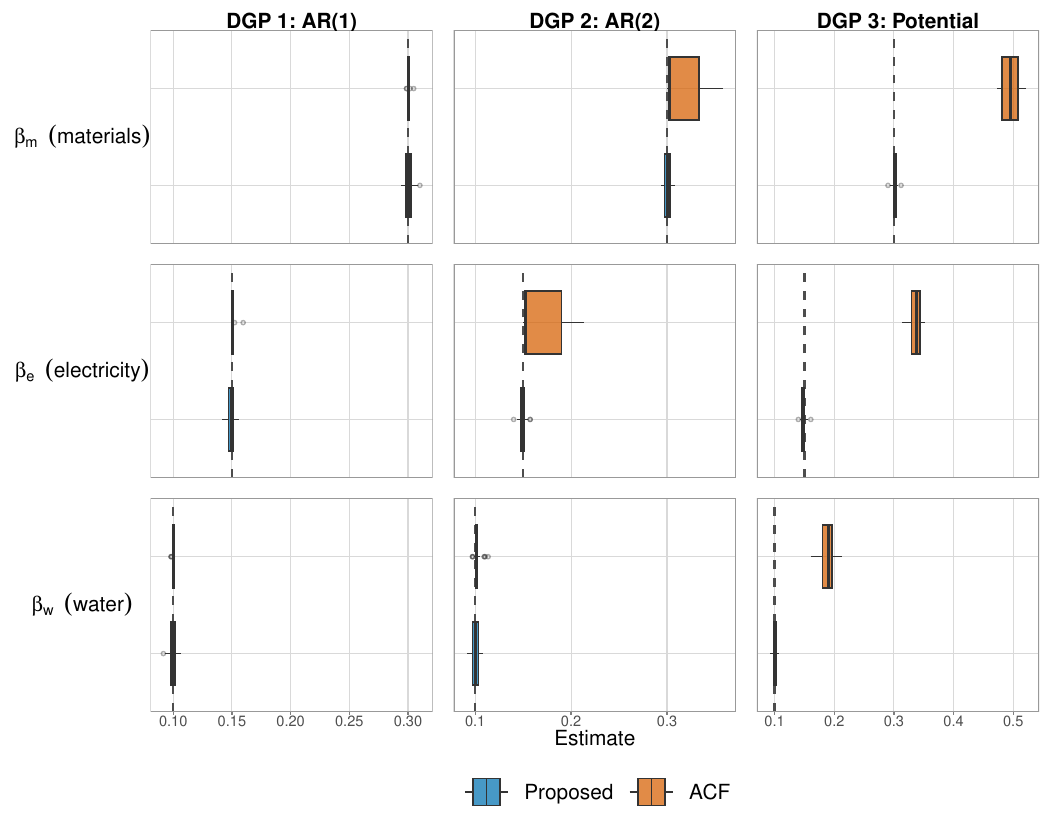}
\caption{Part~1: Distribution of Estimates ($N=500$, $T=50$)}
\label{fig:boxplot_part1}

\small\raggedright \textit{Notes: Distribution of $(\hat{\beta}_m, \hat{\beta}_e, \hat{\beta}_w)$ across $R = 100$ replications for $N = 500$, $T = 50$. Dashed lines indicate true values. The proposed method remains centered on the true values across all DGPs. Under DGP~2 and DGP~3, ACF distributions are shifted rightward, consistent with the positive Markov misspecification bias. A four-method comparison including ACF-Mod and GNR is in Figure~\ref{fig:boxplot_all4_app}.}
\end{figure}

\begin{figure}[htbp]
\centering
\includegraphics[width=\textwidth]{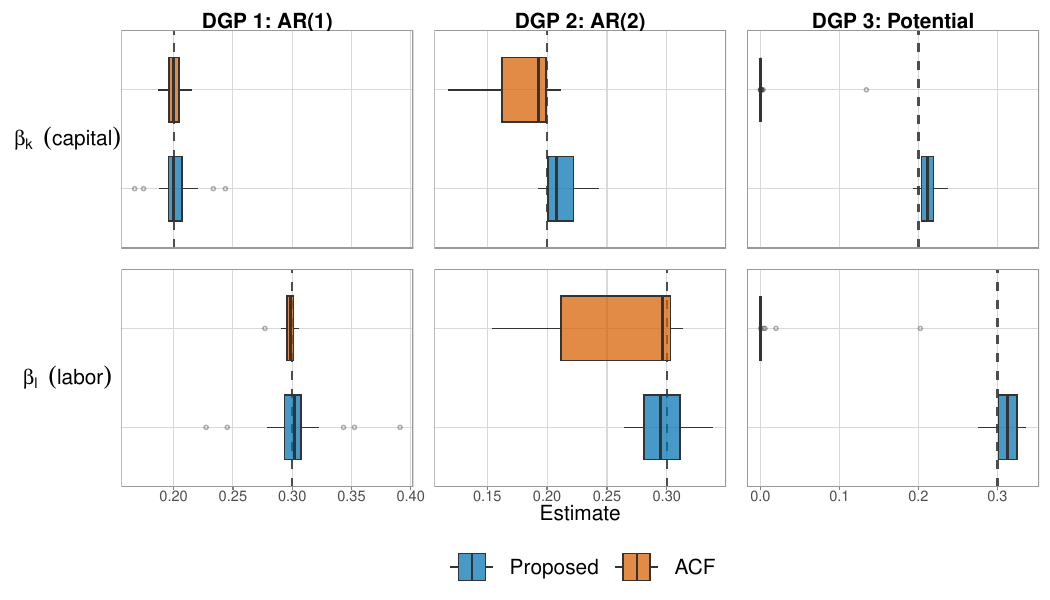}
\caption{Part~2: Distribution of $\hat{\beta}_k$ and $\hat{\beta}_l$ ($N=200$, $T=50$)}
\label{fig:blockc_comparison}

\small\raggedright \textit{Notes: Distribution of $(\hat{\beta}_k, \hat{\beta}_l)$ from Block~A+B+C estimation ($R = 100$). The proposed method identifies $(\beta_k, \beta_l)$ with moderate accuracy across all DGPs. Under DGP~3, the proposed method achieves substantially lower RMSE ($\approx 0.02$). ACF estimates of $\beta_k$ collapse to near zero under DGP~3 (mean $\hat{\beta}_k \approx 0.003$, true value $0.20$).}
\end{figure}

\section{Empirical Analysis}

\label{sec:empirical}

The empirical analysis has two objectives: to test whether the
conditional independence framework produces economically plausible
estimates across the manufacturing sector, and to assess the
relative plausibility of the static and dynamic identifying
assumptions through the convergence diagnostic of
Remark~\ref{rem:testable_exclusion}. I estimate the production
function for all 502 manufacturing industries using Block~A+B,
reporting analytical standard errors.
A practical consequence of this block structure: the markup
estimates, productivity determinants, and convergence
diagnostics reported below require only Blocks~A and~B.
These results do not depend on the resolution of the
$\Delta(k,l)$ indeterminacy and are available for all 502
industries. Block~A+B+C is used for a subset of industries
where $(\beta_k, \beta_l)$ recovery is needed for
productivity level analysis.

The section is organized as follows.
Sections~\ref{subsec:data_framework}
and~\ref{subsec:specification} describe the data and
estimation specifications.
Section~\ref{subsec:spec_diagnostics} presents the
exclusion restriction diagnostic.
Section~\ref{subsec:main_results} reports production function
parameters and markup estimates.
Section~\ref{subsec:determinants} examines productivity determinants.
Section~\ref{subsec:event_study_body} presents an event study
application exploiting the non-Markov validity of the estimator.
Section~\ref{subsec:capital_labor} reports estimates of
$(\beta_k, \beta_l)$ from two independent identification routes.

\subsection{Data and Analytical Framework}

\label{subsec:data_framework}

I apply the proposed method to the Japanese Census of
Manufactures and the Economic Census for Business Activity.
I estimate the production function for all manufacturing
industries with at least 50 firm-year observations in the
extended panel (2003--2020), yielding Block~A+B estimates
for 502 industries covering 559{,}381 firm-year observations.%
\footnote{The identification results of
Section~\ref{sec:model_identification} require only the joint
distribution of $(m_{jt}, e_{jt}, w_{jt}, k_{jt}, l_{jt})$ at a
single point in time; no assumption on the time-series dynamics
of $\omega_{jt}$ is needed (Appendix~\ref{subsec:annual_params}).
The panel dimension is exploited solely to improve estimation
efficiency by time-averaging the sample moment conditions,
$\bar{g}_j(\Theta) = T^{-1}\sum_{t} g_{jt}(\Theta)$,
which reduces finite-sample variance without affecting consistency.}
These estimates provide markup distributions and productivity
determinants at the level of the entire manufacturing sector.
Four industries (food processing [Bread, industry code~971],
paper products (Corrugated board boxes, code~1453),
chemicals (Plastic film, code~1821), and
machinery [Industrial robots, code~2694]) serve as representative
cases for the time-varying parameter analysis in
Appendix~\ref{subsec:annual_params}, covering major manufacturing
sectors (food, paper, chemicals, machinery).
Analytical standard errors from the GMM sandwich formula are
reported for both the proposed method and the ACF benchmark.

The core variables include the logarithm of real output, $y_{jt}$,
the logarithm of real capital stock, $k_{jt}$, and the logarithm
of labor input, $l_{jt}$.

I map the theoretical input triplet
to observable data as follows. I designate the real value of primary raw
materials as $m_{jt}$, the quantity of electricity as $e_{jt}$,
and the quantity of industrial water as $w_{jt}$. This selection
exploits the fact that industrial water and electricity prices are
typically regulated, limiting firm-specific bargaining. This institutional feature reduces the risk
of unobserved common price shocks inducing correlation between
$\nu_{jt}$ and $\eta_{jt}$, thereby supporting the validity of the
conditional independence assumption
($\tau_{jt} \perp \nu_{jt} \perp \eta_{jt} \mid
(\omega_{jt}, x_{jt})$). The principal remaining threat is
commodity price shocks that jointly affect raw materials costs
and electricity generation costs. Two features mitigate this
concern: (i)~industrial electricity prices exhibit less
high-frequency variation than raw materials procurement costs,
as the fuel cost adjustment mechanism smooths commodity price
pass-through on a quarterly basis; and (ii)~even if a residual
common utility shock induces positive
$\mathrm{Corr}(\nu_{jt}, \eta_{jt})$, the resulting bias in
$\hat{\beta}_m$ is \emph{upward} (the same direction as ACF's
Markov bias), so the empirical gap between methods cannot be
attributed to CI violation (Section~\ref{sec:simulation},
Appendix~\ref{app:ci_violation}).

I augment $x_{jt}$ with control variables $z_{jt}$ consisting of
beginning-of-period total inventory ($z_{1,jt}$), its square
($z_{2,jt} \equiv z_{1,jt}^2$), plant fixed effects, and year fixed
effects.
These controls directly implement the conditioning strategy of
Section~\ref{subsec:relax_indep}, where common shocks are absorbed
by $z_{jt}$ so that the residual shock terms $\tau_{jt}, \nu_{jt},
\eta_{jt}$ satisfy the conditional independence assumption.
Inventory proxies for unobserved product demand fluctuations
\parencite{kumar2019productivity}: a firm anticipating high demand
accumulates more stock in advance, so inventory captures the
common demand component that would otherwise enter all three input
demands simultaneously (Section~\ref{subsec:relax_indep}).
The quadratic term $z_{2,jt}$ accommodates a nonlinear relationship
between inventory and unobserved demand, consistent with the
structural decomposition in equation~\eqref{eq:structural_decomposition}
where demand-related terms enter input prices nonlinearly.
Year fixed effects absorb common input price shocks (e.g., energy
price movements) that affect all inputs simultaneously, as
discussed in Section~\ref{subsec:relax_indep}.
Plant fixed effects absorb time-invariant plant-level heterogeneity
in input prices and buyer-supplier relationships, capturing the
firm-attribute component of input market power noted in
Section~\ref{subsec:relax_indep}.
The use of fixed effects exploits the panel dimension for
efficiency and enriches the conditioning set for the
conditional independence assumption, but does not impose any
restriction on the time-series dynamics of $\omega_{jt}$.
Standard proxy variable estimators use the Markov transition
equation to address exit-driven selection
\parencite{olley1996thedynamics}. The proposed method does not
require this correction: because identification conditions on
$\omega_{jt}$, endogenous exit based on $(\omega_{jt}, k_{jt})$
is absorbed by the conditioning and the moment conditions hold
on the surviving population without a survival probability
correction (Remark~\ref{rem:selection}). Plant fixed effects
further reduce the influence of systematic level differences
across plants.

\subsection{Specification of Estimation Methods}

\label{subsec:specification}

I contrast the results of my approach with those obtained from the
standard ACF framework.

First, I implement the \textbf{proposed method} using the GMM
estimator derived in Section~\ref{sec:gmm_approach}. The estimator
jointly recovers the production function and demand parameters as
described in Section~\ref{sec:gmm_approach}. The CES aggregator
parameters $(\rho_v, \alpha)$ are selected via profile GMM: for a
grid of $(\rho_v, \alpha)$ values, the remaining parameters are
estimated by minimizing the GMM objective, and the pair yielding the
smallest $J$-statistic is selected.\footnote{Under strong
identification of $(\rho_v, \alpha)$, the profile GMM procedure
yields a $J$-statistic with the standard $\chi^2$ distribution
asymptotically \parencite{newey1994chapter}. When identification of
these parameters is weak, the minimum-$J$ selection may bias the
test toward under-rejection, making the test conservative. The
block bootstrap standard errors reported below account for the
uncertainty in $(\rho_v, \alpha)$ selection by re-running the
profile grid search within each bootstrap replication.}
The nuisance functions $h_m, h_e, h_w$ are approximated by
second-degree polynomials in $(z_{1,jt}, z_{2,jt})$, giving a
polynomial basis of dimension $d_z = 2$ and thus $\dim\Theta = 24$
where Block~A+B is just-identified (Section~\ref{sec:diagnostics}).
I report the estimates of
$\hat{\rho}_2$ and $\hat{\rho}_3$ as diagnostics for the strength of
identification of $\beta_k$ and $\beta_l$
(Section~\ref{sec:diagnostics}).

As a benchmark, I estimate the ACF two-step GMM with the same
control variables $z_{jt}$ to ensure comparability. Analytical
standard errors from the GMM sandwich formula are reported for
both methods.

\paragraph{Identifying assumptions in practice.}
The ACF framework requires scalar unobservability (productivity as
the sole unobservable in input demand) and a first-order Markov
process for productivity. GNR requires scalar unobservability and
competitive input markets. The proposed method requires conditional
independence of input-specific demand shocks. Scalar
unobservability rules out procurement relationships, supply
contracts, and input-specific markdowns; the proposed method
permits these. The GNR competitive input market assumption
precludes markup estimation, since the identifying condition
coincides with the object of interest. The conditional part of
the independence assumption depends on the adequacy of the control
variables $z_{jt}$, but this dependence is shared by the ACF
proxy equation.

\subsection{Specification Diagnostics}
\label{subsec:spec_diagnostics}
\label{subsec:exclusion_test}

Two diagnostics probe different layers of the identification
strategy before any structural results are interpreted:
\begin{enumerate}[label=(\roman*)]
  \item \textbf{Exclusion restriction diagnostic}: tests whether
    the pairwise discrepancy $d_k = d_l = 0$
    (equation~\eqref{eq:overid_diff}), a necessary condition for
    the exclusion restriction of Corollary~\ref{thm:exclusion} that
    resolves the $\Delta(k,l)$ indeterminacy nonparametrically.
  \item \textbf{Block~C diagnostic}: assesses the strength of the
    CES curvature ($\rho_v \neq 0$), the identifying condition for
    separate recovery of $(\beta_k, \beta_l)$ via
    Theorem~\ref{thm:homothetic_id}
    (Appendix~\ref{sec:appendix_empirical},
    Table~\ref{tab:blockc_diagnostics}).
\end{enumerate}
Table~\ref{tab:spec_diagnostics} summarizes these two diagnostics
and their empirical outcomes.

\begin{table}[htbp]
\centering
\caption{Identification Roadmap and Specification Diagnostics}
\label{tab:spec_diagnostics}
{\small
\begin{tabular}{lllll}
\toprule
Diagnostic & Assumption tested & Enables & Null hypothesis & Outcome \\
\midrule
Exclusion diagnostic
  & Excl.\ restriction (Corollary~\ref{thm:exclusion})
  & Check on $\hat{\beta}_k$
  & $d_k = d_l = 0$
  & Capital only \\
Block~C diagnostic
  & Homotheticity + CES curvature (Theorem~\ref{thm:homothetic_id})
  & $\hat{\beta}_k$, $\hat{\beta}_l$
  & $\rho_v \neq 0$
  & Section~\ref{subsec:capital_labor} \\
\bottomrule
\end{tabular}
}
\smallskip

\small\raggedright \textit{Notes:
The two diagnostics probe successive layers of the identification
strategy. Row~1 is tested using Blocks~A and~B alone and
requires only Cobb--Douglas and conditional independence; results
are reported in Sections~\ref{subsec:spec_diagnostics}--\ref{subsec:main_results}.
Row~2 additionally invokes Assumption~\ref{ass:homothetic} (Homothetic
Weak Separability); results are reported in Section~\ref{subsec:capital_labor}.
The exclusion diagnostic ($d_k = d_l = 0$) provides a Wald test with
2 degrees of freedom.
Block~C diagnostic details are in Table~\ref{tab:blockc_diagnostics}.}
\end{table}

\paragraph{Exclusion restriction diagnostic.}
Figure~\ref{fig:recovery} applies the exclusion-based OLS recovery
of Proposition~\ref{prop:excl_ols} to all 502 manufacturing
industries, plotting $\hat{\beta}_k^{(m)}$ against
$\hat{\beta}_k^{(e)}$ (panel~a) and $\hat{\beta}_l^{(m)}$ against
$\hat{\beta}_l^{(e)}$ (panel~b).
Under the exclusion restriction, both panels should cluster along
the 45-degree line.
Panel~(a) confirms this for capital: points concentrate tightly
around the diagonal, consistent with $a_{k}^{h} = 0$ across industries.
Panel~(b) reveals the opposite for labor: points scatter widely,
indicating that different proxy equations yield systematically
different $\hat{\beta}_l$ values.

The asymmetry between capital and labor is the central diagnostic
finding. Capital is quasi-fixed within the production period and
does not directly influence short-run intermediate input procurement,
so $a_{k}^{h} = 0$ is economically plausible.
The systematic failure for labor is consistent with labor affecting
production scheduling, shift patterns, and input utilization through
input-specific channels
(Appendix~\ref{sec:micro_fnd}).
The formal Wald test of $d_k = d_l = 0$ is rejected for 37\% of
industries at the 5\% level, while the labor-only Wald test ($d_l = 0$)
is rejected for 28\% of industries, confirming that the labor exclusion
restriction is violated for a substantial share of the sample while capital
passes in most cases.

\begin{figure}[htbp]
\centering
\includegraphics[width=\textwidth]{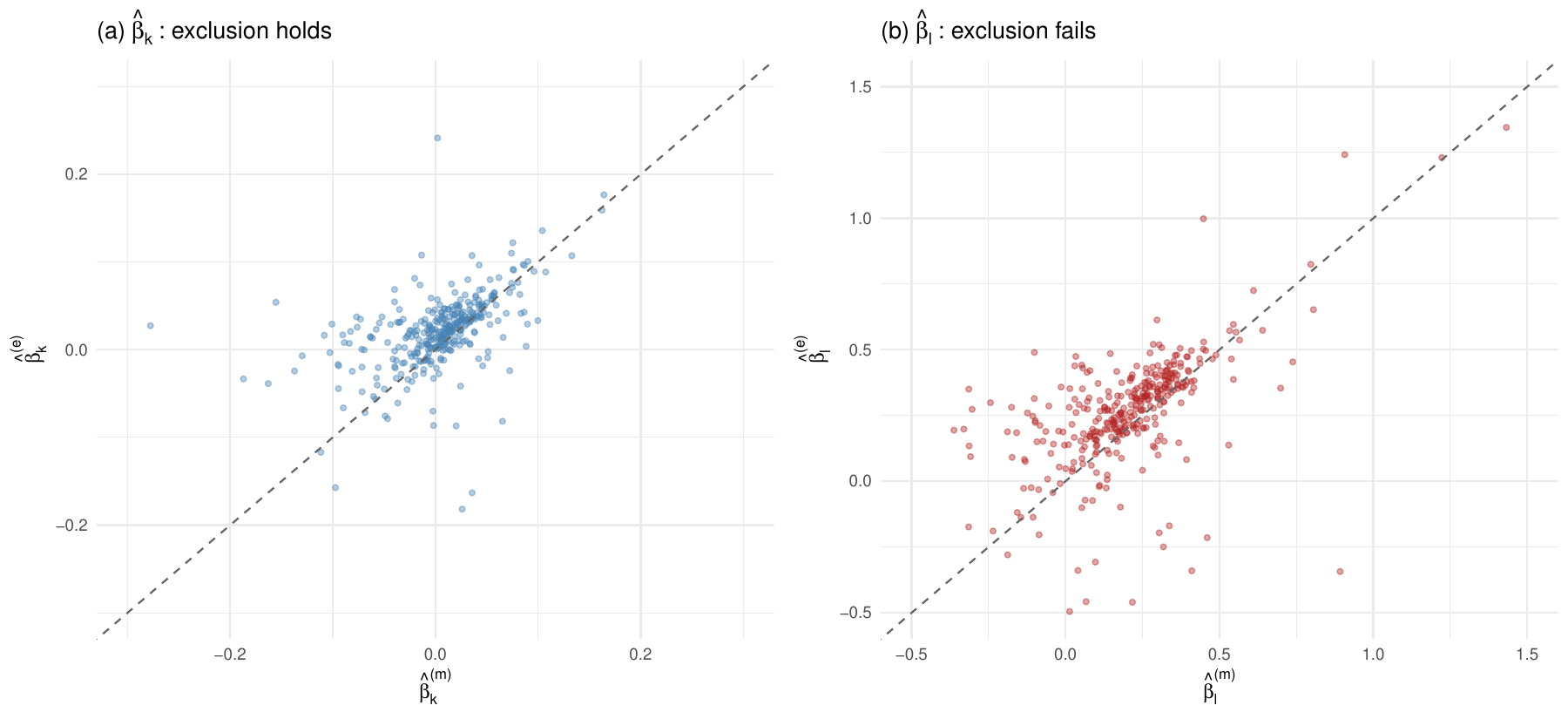}
\caption{Recovery of $(\beta_k, \beta_l)$ via the Exclusion Restriction}
\label{fig:recovery}

\small\raggedright \textit{Notes: Each industry's $\beta_k$ and
$\beta_l$ are recovered via OLS from each proxy equation
using Proposition~\ref{prop:excl_ols}. Panel~(a):
$\hat{\beta}_k^{(m)}$ versus $\hat{\beta}_k^{(e)}$.
Panel~(b): $\hat{\beta}_l^{(m)}$ versus
$\hat{\beta}_l^{(e)}$. Dashed lines are the 45-degree
reference. Under the exclusion restriction, both panels
should cluster along the diagonal. Outliers $|\hat{\beta}| > 2$ are
trimmed for readability; the full distribution is
reported in Table~\ref{tab:beta_kl_summary}.}
\end{figure}

\subsection{Production Function Parameters and Markups}
\label{subsec:main_results}

The intermediate input elasticities $(\beta_m, \beta_e, \beta_w)$
are identified by Blocks~A and~B alone (Theorem~\ref{thm:density_id},
specialized to the Cobb--Douglas parametric model of
Section~\ref{sec:gmm_approach}),
without requiring Block~C or Assumption~\ref{ass:homothetic}.
The ACF estimates of $\hat{\beta}_m$ are systematically higher
than those from the proposed method
(Table~\ref{tab:all_industry_beta_dist} in
Appendix~\ref{sec:appendix_empirical}),
consistent with the Markov misspecification bias documented in the
Monte Carlo simulations (Section~\ref{sec:simulation}).
The cross-industry distribution of all Block~A+B and Block~C
parameter estimates is reported in
Table~\ref{tab:all_industry_beta_dist} in
Appendix~\ref{sec:appendix_empirical}.

Electricity and water elasticities are small across industries
(median $\hat{\beta}_e = 0.001$ and $\hat{\beta}_w = 0.006$,
respectively), consistent with these inputs serving auxiliary
rather than central production roles in Japanese manufacturing.
Their demand shocks nevertheless remain valid exclusion
restrictions for identifying $\beta_m$ in the proposed GMM.

Under perfect competition, $\beta_m$ equals the revenue share,
which is the basis of GNR's share regression.
My estimator identifies $\beta_m$ independently of the
first-order condition, permitting imperfect competition in both
product and input markets.

\paragraph{Markups.}
Markups are computed following \textcite{deloecker2012markups}.
Under the Cobb-Douglas specification maintained throughout,
the markup formula simplifies to
\begin{equation}
  \hat{\mu}_{jt} = \frac{\hat{\beta}_m}{s_{m,jt}},
  \label{eq:markup_cd}
\end{equation}
where $s_{m,jt}$ denotes the expenditure share of raw materials
in total revenue.
Unlike the standard production approach, in which Hicks-neutral
productivity and scalar unobservability jointly imply
$\hat{\beta}_h/s_{h,jt} = \mu_{jt}$ for every variable input
$h$ (so that materials, labor, and energy serve as interchangeable
markup proxies), this paper allows input-specific markdowns
$\psi_{h,jt}$ for each static input $h \in \{m,e,w\}$, captured by
the demand shocks $(\tau_{jt},\nu_{jt},\eta_{jt})$
(Appendix~\ref{sec:micro_fnd}).
Consequently, $\hat{\beta}_h/s_{h,jt}$ will generally differ across
inputs by design; this divergence reflects the richer structure of
the framework, not an overidentification failure.
Raw materials are selected for markup computation because competitive
commodity markets support the absence of buyer-side market power
($\psi_{m,jt}\approx 1$; \textcite{avignon2025markups}),
giving $\hat{\beta}_m/s_{m,jt}\approx\mu_{jt}$;
this is a maintained assumption.%
\footnote{The empirical specification imposes Hicks-neutral
Cobb-Douglas production; if factor-augmenting productivities
differ across inputs, $\hat{\beta}_m$ may absorb non-neutral
components and bias the markup estimate
\parencite{raval2023testing}.
The identification theory accommodates non-Hicks-neutral production
(Appendix~\ref{app:prod_recovery}), but the implemented GMM
does not exploit this generality.}
These estimates require only Blocks~A and~B and are invariant to
the $\Delta(k,l)$ indeterminacy (Theorem~\ref{thm:obs_equiv}),
since equation~\eqref{eq:markup_cd} depends only on
$\hat{\beta}_m$ and the observable cost share.
I restrict the comparison to industries with at least 50 firms
($N_{\text{firms}} \geq 50$), which removes industries where the
lower bound $\hat{\beta}_m \approx 0$ reflects identification failure
rather than true low input elasticities.%
\footnote{The value-added markup
$\mu^{VA} = \beta_l^{VA}/s_l^{VA}$ can differ substantially
from the gross output markup when the materials share is large.
\textcite{gandhihowheterogeneous} document that gross output and
value-added specifications yield fundamentally different
productivity estimates. I report gross output markups throughout.}

\paragraph{Comparison with ACF.}
Figure~\ref{fig:markup_cdf} plots the empirical CDF of industry-level
median markups under the proposed method and the ACF benchmark
for the $N_{\text{firms}} \geq 50$ subsample.
The two distributions are stochastically ordered: the ACF CDF lies
strictly to the right of the proposed CDF at every percentile
(Table~\ref{tab:markup_acf_comparison}).
The proposed method yields a median markup of 0.926, while ACF
yields 1.027, a gap of 0.101 at the median.
At the 90th percentile the gap widens to approximately 0.15.
Under the proposed method, 37\% of industries show markups above
unity, compared with 54\% under ACF.

The Monte Carlo evidence in Section~\ref{sec:simulation} provides
a structural interpretation.
Under DGP~3 (potential-outcome dynamics, Table~\ref{tab:mc_part1_dgp3potential}),
ACF incurs a bias of $+0.19$ in $\hat{\beta}_m$ (true value $0.30$),
a 63\% relative overestimate, while the proposed estimator is essentially
unbiased ($\text{bias} = 0.001$).
The empirical gap of $+0.10$ at the median corresponds to a relative
overestimate of roughly 11\% in $\hat{\beta}_m$, well within the range
predicted by the DGP~3 calibration.
The evidence is therefore consistent with the theoretical prediction
that ACF overestimates $\hat{\beta}_m$ when productivity dynamics
deviate from the Markov assumption. Because markups recovered from
production functions are widely used to assess the evolution of
market power \parencite{deloecker2020rise}, the systematic gap
documented here raises the question of whether existing markup
estimates are sensitive to the choice of identifying assumption.

\begin{figure}[htbp]
\centering
\includegraphics[width=0.78\textwidth]{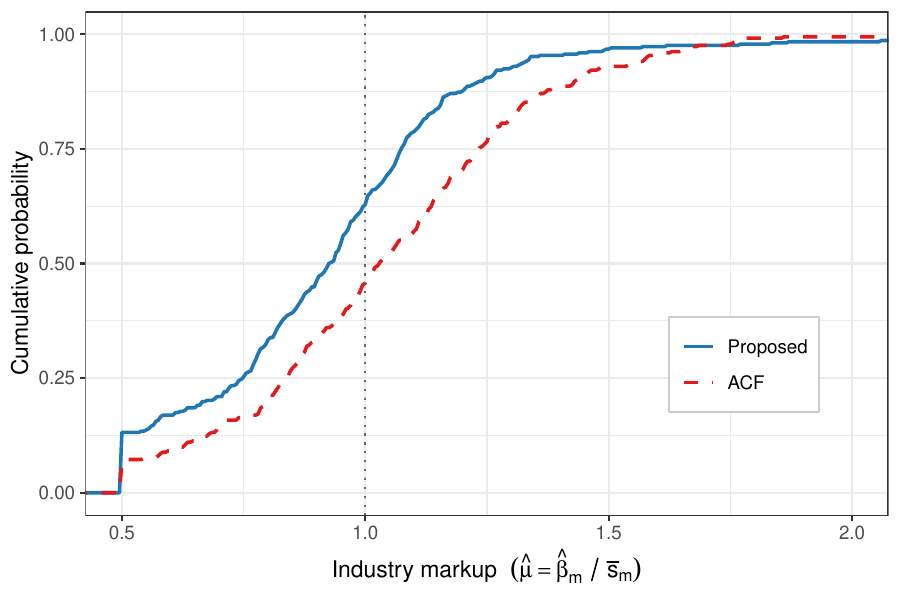}
\caption{Empirical CDF of Industry Markups: Proposed vs.\ ACF}
\label{fig:markup_cdf}
\small\raggedright \textit{Notes:} Empirical CDFs of industry-level median
markups $\hat{\mu} = \hat{\beta}_m / \bar{s}_m$ under the proposed method
(solid, blue) and ACF (dashed, red).
Sample restricted to industries with $N_{\text{firms}} \geq 50$
($N = 372$ industries).
Vertical dotted line at $\hat{\mu} = 1$.
Summary statistics in Table~\ref{tab:markup_acf_comparison}.
\end{figure}

\begin{table}[htbp]
\centering
\caption{\label{tab:markup_acf_comparison}Markup Distribution: Proposed vs.\ ACF ($N_{\text{firms}}\geq 50$)}
\begin{tabular}{lcc}
\toprule
 & \textbf{Proposed} & \textbf{ACF} \\
\midrule
$N$ (industries) & 372 & 372 \\
Mean & 0.880 & 1.026 \\
Std.\ dev. & 0.396 & 0.360 \\
p10 & 0.296 & 0.627 \\
p25 & 0.748 & 0.841 \\
Median & 0.926 & 1.027 \\
p75 & 1.074 & 1.233 \\
p90 & 1.240 & 1.431 \\
Fraction $\geq 1$ & 0.371 & 0.543 \\
Mean gap (ACF $-$ Proposed) & 0.146 & \multicolumn{1}{c}{---} \\
\bottomrule
\multicolumn{3}{p{0.75\textwidth}}{\small\textit{Notes:} 
Industry-level median markups $\hat{\mu} = \hat{\beta}_m / \bar{s}_m$,
where $\bar{s}_m$ is the industry median materials share.
Sample: $N_{\text{firms}}\geq 50$.
ACF estimates from \textcite{ackerberg2015identification}; convergence code~0 for 495 of 502 industries.
}
\end{tabular}
\end{table}

\subsection{Productivity Determinants}
\label{subsec:determinants}

The following analysis requires only Blocks~A and~B. Because the
$\Delta(k,l)$ indeterminacy (Theorem~\ref{thm:obs_equiv}) varies
only through $(k_{jt}, l_{jt})$, it is absorbed by the cubic
polynomial controls in $(k_{jt}, l_{jt})$ included in the
regression. The same argument applies to proportional common shocks
(Section~\ref{subsec:relax_indep}): if an unobserved common
component $\xi_{jt}$ loads proportionally on all intermediate
input demands, it is absorbed into the recovered productivity
$\hat{\omega}_{jt}$, and its $(k_{jt}, l_{jt})$-dependent
component is absorbed by firm fixed effects. The determinants
regression therefore identifies the association between
covariates and the total latent efficiency measure that drives
input allocation decisions, regardless of whether this measure
coincides with physical productivity.

As a validation of the recovered productivity measures, I examine
their association with observable economic fundamentals.
I regress the productivity residual jointly on three firm-level
covariates (log investment, exporter status, and log wages) with
firm and year fixed effects:
\[
\hat{\omega}_{jt} = \phi_j + \gamma_t
+ \mathbf{x}_{jt}'\boldsymbol{\beta} + u_{jt},
\]
clustering standard errors at the firm level.
For the proposed method, I additionally include a cubic polynomial
in $(k_{jt}, l_{jt})$ as nonparametric controls, since
the $\Delta(k,l)$ indeterminacy enters through capital and labor.
The ACF regression omits these controls, as the ACF residual already
subtracts $\hat{\beta}_k k + \hat{\beta}_l l$.
Log wages is included as a correlate of productivity; a maintained
caveat is that wages may be endogenous, as high-productivity firms
can share rents with workers, so the coefficient captures association
rather than a causal effect.

Table~\ref{tab:main_regressions} reports the results.
Under the proposed method, log wages are strongly positively
associated with estimated productivity
($\hat{\beta} \approx 0.139$, $p < 0.01$), and log investment
is positive but small ($\hat{\beta} \approx 0.001$, $p < 0.01$),
while exporter status is negligible and imprecisely estimated.
The ACF regression yields a smaller wage coefficient
($\hat{\beta} \approx 0.109$). The mechanism is as follows:
under ACF, the upward bias in $\hat{\beta}_m$ propagates into
$\hat{\omega}^{\text{ACF}} = y - \hat{\beta}_m^{\text{ACF}} m
- \hat{\beta}_k k - \hat{\beta}_l l$, subtracting too large a
materials component and systematically depressing the recovered
productivity level for materials-intensive firms. This distortion
attenuates the association between $\hat{\omega}$ and economic
fundamentals that covary with input intensity. The magnitude of
the improvement depends on the relative variance of demand shocks
and productivity; whether the pattern generalizes beyond this
application requires further investigation.

\begin{table}[htbp]
   \caption{\label{tab:main_regressions} Productivity Determinants: Proposed vs.\ ACF}
   \bigskip
   \centering
   \begin{tabular}{lcc}
      \toprule
                      & (1) Proposed   & (2) ACF \\   
      
      \midrule 
      log(Investment) & 0.0011$^{***}$ & 0.0003$^{**}$\\   
                      & (0.0003)       & (0.0001)\\   
      Exporter Status & 0.0131         & -0.0010\\   
                      & (0.0174)       & (0.0068)\\   
      log(Wage)       & 0.1388$^{***}$ & 0.1085$^{***}$\\   
                      & (0.0185)       & (0.0127)\\   
       \\
      Observations    & 433,425        & 433,308\\  
      R$^2$           & 0.86968        & 0.84551\\  
       \\
      Firm FE         & $\checkmark$   & $\checkmark$\\   
      Time FE         & $\checkmark$   & $\checkmark$\\   
      \bottomrule
   \end{tabular}
   
   \par \raggedright 
   All three covariates enter jointly in a single specification. Firm and year fixed effects included. Standard errors clustered at the firm level in parentheses.\\
   Exporter Status is a binary indicator equal to one if the firm exported in that year, consistent with the learning-by-exporting literature. log(Wage) is included as a correlate of productivity; wage endogeneity is a maintained caveat, as high-productivity firms may pay higher wages through rent-sharing.\\
   Column~(1): proposed method with $\text{poly}(k,l,\text{degree}=3)$ nonparametric controls (coefficients suppressed).\\
   Column~(2): ACF residual $\hat{\omega}^{\text{ACF}} = y - \hat{\beta}_k k - \hat{\beta}_l l - \hat{\beta}_m m - \hat{\beta}_e e - \hat{\beta}_w w$.\\
   Significance: $^{*}$ $p<0.10$, $^{**}$ $p<0.05$, $^{***}$ $p<0.01$.
\end{table}

\subsection{Event Study: 2011 Tohoku Earthquake}
\label{subsec:event_study_body}

Because the proposed estimator recovers productivity from static
covariances alone, its estimates are valid under any productivity
dynamics, Markov or otherwise
(Proposition~\ref{prop:omega_hat_D}).
Standard proxy variable estimators embed a Markov transition equation
that is structurally incompatible with the potential outcomes
framework when a treatment alters the transition path of
productivity \parencite{chen2024identifying}; the moment condition
that identifies the production function parameters has no structural
interpretation under treatment, so the resulting estimates lack
economic meaning for policy evaluation. Neither problem arises
here, since estimation does not employ a transition equation.

As an illustration, I examine the 2011 T\={o}hoku earthquake using a
difference-in-differences design.
The treatment group consists of plants in the three core prefectures
directly struck by the earthquake and tsunami (Iwate, Miyagi, and
Fukushima; seismic intensity $\geq$ 6-strong), where physical
destruction and the nuclear disaster caused severe and sustained
disruption to production.
The control group consists of plants in Kinki and western prefectures
(prefectures 25--47).
Supply chain contamination of the control group is mitigated by the
industry$\times$year fixed effects, which absorb any
industry-level aggregate shocks that propagate nationally.
Pre-treatment coefficients are flat
(max$|\hat{\delta}_t| < 0.013$ for the proposed method,
$< 0.020$ for ACF);
the full event-study figure is in
Appendix~\ref{subsec:earthquake_es}.

Table~\ref{tab:earthquake_did} reports the difference-in-differences
estimates under both methods.
For the proposed method, cubic polynomial controls in $(k,l)$ are
included to absorb the $\Delta(k,l)$ indeterminacy in the residual
$\hat{\omega}$; the ACF method requires no such controls, as
$\hat{\omega}^{\mathrm{ACF}}$ already subtracts $\hat{\beta}_k k
+ \hat{\beta}_l l$.
Both methods detect a negative and statistically significant
post-treatment effect on productivity.
Under the proposed method, the DiD estimate is $-1.28$ percent
(s.e.\ $0.52$, $p < 0.05$); under ACF, the corresponding
estimate is $-1.68$ percent (s.e.\ $0.41$, $p < 0.01$).
The gap between the two estimates is approximately
$0.40$ percentage points. To illustrate the potential economic
magnitude: if a bias of this order applied to the aggregate
manufacturing sector, it would correspond to roughly \$3.6 billion
(\textyen{}400 billion) per year, given Japan's manufacturing value
added of approximately \$0.9 trillion (\textyen{}100 trillion) at
the 2003--2020 average exchange rate (National Accounts, Cabinet
Office of Japan).\footnote{This back-of-envelope
calculation extrapolates the local DiD gap to the national level
under the assumption that the Markov misspecification bias is of
comparable magnitude across industries. The cross-industry markup
comparison (Table~\ref{tab:markup_acf_comparison}) shows that ACF
yields systematically higher $\hat{\beta}_m$ at every percentile,
consistent with the assumption, but the magnitude varies by
industry. The figure should be interpreted as indicative of the
scale at stake, not as a structural estimate of aggregate
mismeasurement.}
Proposition~\ref{prop:omega_hat_D} guarantees that the proposed
estimates recover $\mathbb{E}[\omega_{jt}\mid D_{jt}]$ under
Conditions~(i)--(ii) of that proposition.
Condition~(ii) is satisfied by construction: the earthquake is a
natural disaster whose occurrence is orthogonal to firm-level
input demand shocks $(\tau,\nu,\eta)$.
Condition~(i) requires that the earthquake does not alter the
functional form of the demand functions $g_m, g_e, g_w$.
Difference-in-differences estimates of the post-treatment change
in intermediate input shares show no significant shift in the
materials share ($t = 0.87$) or water share ($t = 1.14$).
The electricity share shows a small post-treatment increase
($t = 8.91$, $\Delta s_e \approx 0.002$); this is a mechanical
compositional effect of the simultaneous contraction in materials
usage ($t = -4.53$), which raises the electricity expenditure
share $s_e$ without altering the structural demand function
$g_e$.\footnote{The level of electricity consumption does not
show a significant post-treatment increase when measured in
physical units (kWh) rather than expenditure shares, supporting
the compositional interpretation.}
The ACF estimator does not carry this guarantee.
Its residual subtracts $\hat{\beta}_k k + \hat{\beta}_l l$, so
\begin{equation}
  \mathbb{E}[\hat{\omega}^{\mathrm{ACF}}_{jt} \mid D_{jt}]
  = \mathbb{E}[\omega_{jt} \mid D_{jt}]
  + (\beta_k^{\mathrm{true}} - \hat{\beta}_k^{\mathrm{ACF}})
    \,\mathbb{E}[k_{jt} \mid D_{jt}]
  + (\beta_l^{\mathrm{true}} - \hat{\beta}_l^{\mathrm{ACF}})
    \,\mathbb{E}[l_{jt} \mid D_{jt}].
  \label{eq:acf_bias_es}
\end{equation}
The bias terms vanish only if
$\hat{\beta}^{\mathrm{ACF}} = \beta^{\mathrm{true}}$
(exact identification) or if treatment is orthogonal to $(k,l)$.
Monte Carlo evidence (Section~\ref{sec:simulation}) shows that ACF incurs
positive bias in $\hat{\beta}_m$ under Markov misspecification;
the condition of orthogonality also fails here (DiD$(l) = -0.029$, $t = -7.86$).
The proposed estimates, resting on the theoretical guarantee of
Proposition~\ref{prop:omega_hat_D}, provide a theoretically
justified point of comparison.

\begin{table}[htbp]
   \caption{\label{tab:earthquake_did} 2011 T\={o}hoku Earthquake: Difference-in-Differences}
   \bigskip
   \centering
   \begin{tabular}{lcc}
      \toprule
                                      & (1) Proposed   & (2) ACF \\   
      
      \midrule 
      Treated $\times$ Post           & -0.0128$^{**}$ & -0.0168$^{***}$\\   
                                      & (0.0052)       & (0.0041)\\   
       \\
      Observations                    & 219,573        & 219,573\\  
      R$^2$                           & 0.98729        & 0.94606\\  
      $\text{poly}(k,\ell)$ control   & $\checkmark$   & \\  
       \\
      Firm FE                         & $\checkmark$   & $\checkmark$\\
      Ind.$\times$Year FE             & $\checkmark$   & $\checkmark$\\
      \bottomrule
   \end{tabular}
   
   \par \raggedright 
   Treatment: Iwate, Miyagi, Fukushima (seismic intensity $\geq$ 6-strong). Control: West Japan (prefectures 25--47).\\
   Firm and industry$\times$year fixed effects included. Heteroskedasticity-robust standard errors in parentheses.\\
   Pre-treatment coefficients are flat (max$|\hat{\delta}_t| < 0.013$ for proposed, $< 0.020$ for ACF); year-by-year coefficient estimates in Table~\ref{tab:earthquake_es} (Appendix~\ref{subsec:earthquake_es}).\\
   Column~(1): proposed method with $\text{poly}(k,\ell,\text{degree}=3)$ nonparametric controls for $\Delta(k,\ell)$ (coefficients suppressed).\\
   Column~(2): ACF residual already subtracts $\hat{\beta}_k k + \hat{\beta}_l l$; no polynomial control.\\
   Significance: $^{*}$ $p<0.10$, $^{**}$ $p<0.05$, $^{***}$ $p<0.01$.
\end{table}

\subsection{Capital and Labor Inputs}
\label{subsec:capital_labor}

Identifying $(\beta_k, \beta_l)$ requires closing the
$\Delta(k,l)$ indeterminacy documented in
Theorem~\ref{thm:obs_equiv}.
The paper provides two independent routes:
the exclusion restriction (Corollary~\ref{thm:exclusion}) and the
homothetic regularity condition (Theorem~\ref{thm:homothetic_id}).

The exclusion-based OLS recovery
(Proposition~\ref{prop:excl_ols}) is applied to all 502
manufacturing industries and produces mutually consistent estimates
of $\beta_k$ across the three proxy equations (materials,
electricity, water), while $\beta_l$ estimates diverge
systematically, confirming the diagnostic pattern in
Figure~\ref{fig:recovery} that the exclusion restriction holds for
capital but not labor.

To identify $(\beta_k, \beta_l)$ jointly, I apply the Block~C
homothetic CES approach (Section~\ref{sec:homothetic}).
Block~C is the primary identification route for both $\beta_k$ and
$\beta_l$; the exclusion restriction provides an independent check
on $\beta_k$ only, since the restriction fails for labor
(Figure~\ref{fig:recovery}, Panel~b).
The two strategies yield mutually consistent estimates of $\beta_k$
for the \NconsGroup{} industries where the exclusion restriction
is validated for capital
(Table~\ref{tab:4group},
Appendix~\ref{sec:appendix_empirical}).

Table~\ref{tab:beta_kl_summary} summarizes the cross-industry
distributions of $\hat{\beta}_k$ and $\hat{\beta}_l$ across
three approaches: exclusion restriction (broken out by proxy input),
Block~C (homothetic CES), and ACF.

\begin{table}[htbp]
\centering
\caption{Cross-Industry Distribution of $\hat{\beta}_k$ and $\hat{\beta}_l$: Exclusion Restriction, Block~C, and ACF}
\label{tab:beta_kl_summary}
\small
\begin{tabular}{l r r r r r}
\toprule
 & Excl.\ ($m$) & Excl.\ ($e$) & Excl.\ ($w$) & Block~C & ACF \\
\midrule
$\hat{\beta}_k$ Median & 0.0086 & 0.0220 & 0.0106 & 0.0350 & 0.0297 \\
$\hat{\beta}_k$ Mean & 0.0121 & 0.0010 & -0.0077 & 0.0481 & 0.0528 \\
$\hat{\beta}_k$ SD & 0.2018 & 0.2724 & 0.1935 & 0.0544 & 0.0921 \\
\midrule
$\hat{\beta}_l$ Median & 0.2106 & 0.2600 & 0.2010 & 0.3316 & 0.2753 \\
$\hat{\beta}_l$ Mean & -0.1625 & -0.2815 & -0.2317 & 0.3357 & 0.3217 \\
$\hat{\beta}_l$ SD & 3.8991 & 5.1635 & 3.7362 & 0.2239 & 0.2657 \\
\midrule
$N$ & 389 & 389 & 389 & 502 & 499 \\
\bottomrule
\end{tabular}
\vspace{0.5em}
\begin{minipage}{0.95\textwidth}\footnotesize
\textit{Notes:} Industries with $|\hat{\beta}|>2$ excluded. Excl.\ ($m$/$e$/$w$): exclusion restriction OLS using each proxy (Proposition~\ref{prop:excl_ols}). Block~C: homothetic CES approach (Theorem~\ref{thm:homothetic_id}). ACF: \textcite{ackerberg2015identification}.
\end{minipage}
\end{table}

Estimates of $\hat{\beta}_k$ are broadly consistent across all three
approaches (median $\approx 0.01$--$0.04$), corroborating the
identification cross-check in Table~\ref{tab:4group}
(Appendix~\ref{sec:appendix_empirical}).
Labor elasticity estimates diverge more substantially:
Block~C yields a median $\hat{\beta}_l = 0.33$, while ACF produces
a higher median of $0.50$.
The Monte Carlo simulations (Table~\ref{tab:mc_part2_dgp3potential})
show that under DGP~3, ACF $\hat{\beta}_l$ collapses to near zero
(bias $\approx -0.30$), while the proposed method recovers the true
value accurately (bias $\approx +0.01$).
The empirical ACF estimate lies above the proposed estimate,
which is the opposite direction from the MC collapse.
Both patterns reflect the same fragility: ACF labor elasticity
identification breaks down when the Markov assumption is violated,
with the direction of the deviation depending on the specific
dynamics of the data-generating process.
Figure~\ref{fig:beta_kl_3methods} shows the full cross-industry
density distributions for all three methods.
A four-group comparison across identification strategies is
reported in Table~\ref{tab:4group}
(Appendix~\ref{sec:appendix_empirical}).

\begin{figure}[htbp]
\centering
\includegraphics[width=\textwidth]{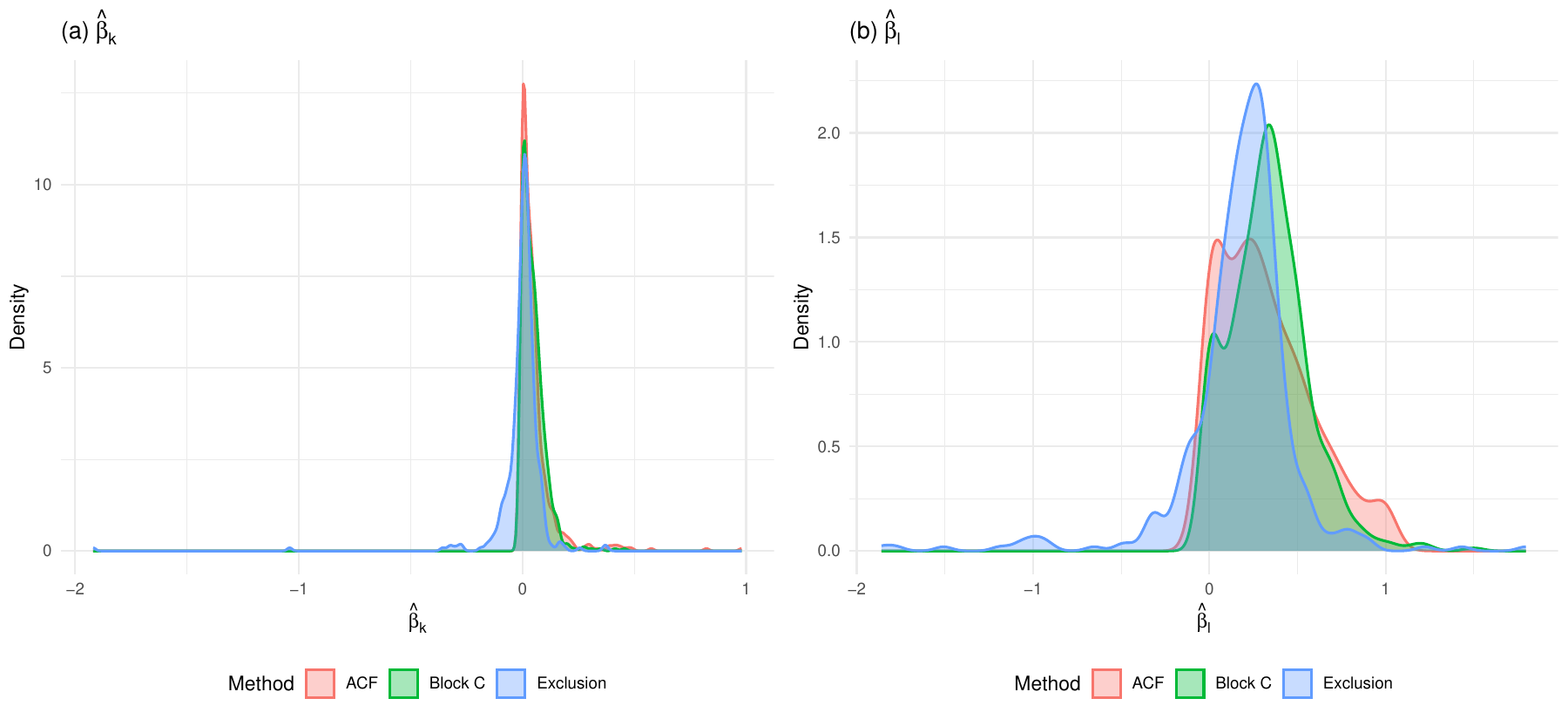}
\caption{Cross-Industry Distribution of $\hat{\beta}_k$ and $\hat{\beta}_l$: Three Methods}
\label{fig:beta_kl_3methods}

\small\raggedright \textit{Notes: Panel~(a) shows the density of
$\hat{\beta}_k$ from Exclusion, Homothetic (Block~C), and
ACF. Panel~(b) shows $\hat{\beta}_l$ for all three
methods; Exclusion estimates use the materials proxy
(Proposition~\ref{prop:excl_ols}).
Industries with $|\hat{\beta}| > 2$ are excluded.
Summary statistics in Table~\ref{tab:beta_kl_summary};
four-group identification cross-check in
Table~\ref{tab:4group} (Appendix~\ref{subsec:crosscheck_4group}).}
\end{figure}

\section{Conclusion}
\label{sec:conclusion}

Can the production function be identified without restricting how
productivity evolves over time? This paper answers in the
affirmative. Replacing the Markov assumption with conditional
independence across three intermediate inputs, the paper shows that
the production function and the distribution of productivity are
nonparametrically identified from a single cross-section. No
assumption on the law of motion for $\omega_{jt}$ is required at
any stage of estimation. The empirical analysis, covering 502
Japanese manufacturing industries, confirms that the choice between
the two identification strategies has quantitative consequences for
every downstream object: input elasticities, markups, allocative
efficiency, and the measured response of productivity to economic
shocks.

The consequences are economically large. The proposed method yields
systematically lower markups than the standard proxy variable
estimator across the entire distribution (median 0.93 vs.\ 1.03;
the share of industries above unity falls from 54 to 37 percent),
shifting the measured degree of market power in the manufacturing
sector. In the earthquake event
study (Section~\ref{subsec:event_study_body}), the
difference-in-differences estimate of the productivity effect on
plants in the three most severely affected prefectures is
$-1.28\%$ under the proposed method and $-1.68\%$ under the
standard method; the 0.40 percentage point gap corresponds to
roughly \$3.6 billion (\textyen{}400 billion) per year in
mismeasured productivity when scaled to aggregate manufacturing
output. The
\textcite{olley1996thedynamics} decomposition and the productivity
determinant regressions reinforce the same pattern: the log-wage
coefficient is roughly 25\% larger under the proposed method
(Table~\ref{tab:main_regressions}), consistent with a higher
signal-to-noise ratio in the recovered productivity measure once
input-specific demand shocks are separated out. The Monte Carlo
simulations, the convergence diagnostic, and the determinant
regressions all point in the same direction, and the underlying
mechanism is general: in any setting where a policy, shock, or
institutional change alters the transition path of productivity,
the Markov transition equation is structurally misspecified and
the resulting production function parameters lack a consistent
interpretation. Trade liberalization, R\&D subsidies, natural
disasters, and mergers all generate such dynamics. The proposed
method accommodates these settings because it imposes no
restriction on how productivity evolves. The Cobb--Douglas
functional form is shared by both the proposed method and the ACF
benchmark, so the gap between estimates reflects the difference in
identifying assumptions, not in functional form.

These findings connect to two broader debates. First, the recent
literature on rising global markups
\parencite{deloecker2020rise} relies on production function
estimates that impose the Markov assumption. The present results
suggest that markup levels, and potentially trends, are sensitive to
this assumption; replication of the global markup finding under
conditional independence identification is a natural next step.
Second, \textcite{chen2024identifying} show that standard proxy
variable estimators are structurally incompatible with a potential
outcomes framework: the Markov transition equation has no structural
interpretation when a treatment alters the productivity process, so
the resulting estimates lack economic meaning under policy
evaluation. The proposed method avoids this problem because it uses
no transition equation; Proposition~\ref{prop:omega_hat_D}
establishes that the recovered productivity measure retains a
causal interpretation under treatment assignment mechanisms
satisfying conditions~(i)--(ii) of that proposition (the treatment
does not alter demand function structure and is orthogonal to
input-specific demand shocks). The same static structure accommodates time-varying
parameters without additional assumptions, since no intertemporal
link is imposed.

A broader implication concerns the nature of identifying assumptions
in production function estimation. The Markov restriction is a
constraint on the time-series behavior of an unobservable; the
conditional independence restriction is a constraint on the
structure of input markets. The latter is grounded in economic
primitives (separate suppliers, distinct procurement channels,
independent regulatory regimes), and the researcher can specify
which observable controls restore the assumption when a particular
threat is identified. This transparency provides a basis for
evaluating the credibility of the estimates that has no analogue
under the Markov framework.

On the theoretical side, Theorem~\ref{thm:obs_equiv} characterizes
the residual indeterminacy that arises once the Markov assumption is
dropped. Two routes close this indeterminacy: an exclusion
restriction (Corollary~\ref{thm:exclusion}) with a testable
necessary condition (Remark~\ref{rem:testable_exclusion}), and a
homothetic regularity condition
(Theorem~\ref{thm:homothetic_id}). The two routes yield mutually
consistent estimates in industries where both apply.

Three limitations and corresponding directions for future work
deserve mention. First, the identification strategy requires at
least three intermediate inputs with separately observable quantity
data, though this requirement is met in several settings beyond the
Japanese Census of Manufactures, including the U.S.\ EIA Form~923
\parencite{fabrizio2007dothey,cicala2015when} and emissions data
in environmental economics.\footnote{%
  Additional datasets satisfying this requirement include
  India's Annual Survey of Industries (ASI), which reports
  firm-level electricity and fuel consumption alongside materials;
  Canada's Annual Survey of Manufacturing and Logging (ASML),
  which covers electricity and water use at the establishment level;
  and the World Bank Enterprise Survey (WBES), which collects
  firm-level electricity expenditure and water source data across
  over 100 countries. These datasets enable direct application
  of the proposed estimator in diverse institutional settings.}
When labor adjustment is rapid, labor itself serves as an additional
productivity signal, reducing the required number of intermediate
inputs from three to two (footnote~\ref{rem:static_labor});
extending the framework to such settings is a natural direction.
Second, no targeted test of the conditional independence assumption
alone exists; the convergence diagnostic of
Remark~\ref{rem:testable_exclusion} provides a necessary condition
for the exclusion restriction. Extending the moment system to
achieve overidentification (for instance via Block~C structural
constraints or cross-equation demand restrictions under
Cobb--Douglas) would enable formal specification testing. Third,
Block~C identification of $(\beta_k, \beta_l)$ requires
non-negligible curvature in $h(v)$; when the capital-labor ratio
varies little, the exclusion restriction route becomes preferable,
and combining the static identification of flexible input
elasticities with semiparametric methods for the capital-labor
component is left for future research.

\clearpage
\printbibliography

\clearpage
\thispagestyle{empty}
\vspace*{\fill}
\begin{center}
{\LARGE\bfseries Appendix}\\[1.5em]
{\large Nonparametric Identification and Estimation of Production Functions\\
Invariant to Productivity Dynamics}\\[1em]
{\normalsize Rentaro Utamaru}
\end{center}
\vspace*{\fill}
\clearpage

\begin{appendices}
\numberwithin{thm}{section}
\numberwithin{assumption}{section}
\numberwithin{remark}{section}
\numberwithin{proposition}{section}
\numberwithin{corollary}{section}
\setcounter{thm}{0}
\setcounter{assumption}{0}
\setcounter{remark}{0}
\setcounter{proposition}{0}
\setcounter{corollary}{0}


\section{Identification Details}
\label{app:identification_details}

This appendix collects the regularity conditions
(Assumptions~\ref{ass:injectivity}--\ref{ass:labeling}),
the density identification theorem (Theorem~\ref{thm:density_id}),
and selected propositions that supplement the main identification
results in Section~\ref{sec:model_identification}.

\subsection{Regularity Conditions}

\begin{assumption}[Injectivity]
\label{ass:injectivity}
The integral operator $L_{e_{jt} \mid \omega_{jt}, x_{jt}}$ with
kernel $f_{e_{jt} \mid \omega_{jt}, x_{jt}}$ and the integral
operator $L_{\omega_{jt} \mid w_{jt}, x_{jt}}$ with kernel
$f_{\omega_{jt} \mid w_{jt}, x_{jt}}$ are both injective.
\end{assumption}

\textit{Role and economic content.} Injectivity requires that distinct
productivity levels generate distinct conditional distributions of
$e_{jt}$ and $w_{jt}$: a firm cannot be more productive without
systematically altering its input demand. This is weaker than
the strict monotonicity plus scalar unobservability required by
\textcite{ackerberg2015identification}: strict monotonicity of
the input demand function in productivity is one sufficient
condition for injectivity, but injectivity holds more broadly in
the presence of idiosyncratic shocks $(\tau, \nu, \eta)$ that
would violate scalar unobservability. The condition could fail in
settings where input allocation is determined by administrative
rules rather than optimization; for example, publicly operated
utilities where electricity consumption follows fixed schedules
regardless of productivity. In competitive manufacturing, the
condition is generically satisfied.

\textit{Weak identification and eigenvalue decay.}
Injectivity is a point-identification condition, not a strength condition.
Even when the operators $L_{e \mid \omega, x}$ and $L_{\omega \mid w, x}$ are
injective, identification can be weak in finite samples if the
eigenvalues of the associated integral operators decay rapidly to
zero. Rapid eigenvalue decay arises when the demand shocks
$(\tau_{jt}, \nu_{jt}, \eta_{jt})$ have large variance relative to
the productivity signal: a high noise-to-signal ratio compresses the
spectrum of $L_{e \mid \omega, x}$, making the inversion ill-conditioned.
Formally, let $\{\lambda_s\}_{s=1}^{\infty}$ denote the eigenvalues of
$L_{e \mid \omega, x}$ in descending order. Rapid decay
$\lambda_s / \lambda_1 \to 0$ as $s \to \infty$ implies that the
conditional density of $e_{jt}$ given $\omega_{jt}$ concentrates its
informational content in a low-dimensional subspace, reducing effective
identification to that subspace.  In the current application, the use
of three independent intermediate inputs mitigates this concern: the
mutual independence of $(\tau, \nu, \eta)$ (Assumption~\ref{ass:cond_indep})
ensures that the information in $(e_{jt}, w_{jt})$ about $\omega_{jt}$
is not co-linear, and the empirical diagnostics in
Section~\ref{subsec:blockc_diagnostics} provide indirect evidence on
identification strength through the significance of $\hat{\rho}_2$ and
$\hat{\rho}_3$. Industries where these parameters are insignificant
should be interpreted with caution as potentially subject to weak
identification of the Block~C moments.

\begin{assumption}[Distinct Eigenvalues]
\label{ass:distinct_eigen}
For any $\omega_{jt} \neq \bar{\omega}_{jt}$, the conditional
densities $f_{w_{jt} \mid \omega_{jt}, x_{jt}}$ and
$f_{w_{jt} \mid \bar{\omega}_{jt}, x_{jt}}$ are not identical
as functions of $w_{jt}$.
\end{assumption}

\textit{Role and economic content.} This condition ensures that the
eigenvalues in the spectral decomposition are distinct, which is
required for unique identification of the eigenfunctions. It
requires that no two distinct productivity levels produce
identical distributions of water demand, conditional on
$(k, l, z)$. The condition fails if water demand is degenerate
or if the conditioning set $x_{jt}$ contains information that
renders $w_{jt}$ uninformative about $\omega_{jt}$. As with
Assumption~\ref{ass:injectivity}, this could be violated in
regulated industries where water allocation is rationed (e.g.,
public irrigation systems), but is generically satisfied in
manufacturing where water consumption responds to the scale of
production.

\begin{assumption}[Productivity Labeling]
\label{ass:labeling}
For each fixed $(k_{jt}, l_{jt})$, the location (labeling) of
$\omega_{jt}$ is fixed by a normalization corresponding to
Assumption~5 in HS08. Specifically, there exists a location
functional $M$ (e.g., the conditional mean) such that
\[
  M\!\bigl[f_{m_{jt} \mid \omega_{jt}, x_{jt}}
  (\cdot \mid \omega_{jt})\bigr] = \omega_{jt}
  \quad \text{for all } \omega_{jt} \in \Omega
\]
holds for each $(k_{jt}, l_{jt})$ separately.
\end{assumption}

\textit{Role and economic content.} Assumption~\ref{ass:labeling}
fixes the scale and location of the latent productivity index within
each capital-labor cell. The HS08 spectral decomposition identifies
the densities up to a relabeling of the latent variable;
Assumption~\ref{ass:labeling} resolves this by requiring that the
conditional mean of material demand (as a function of $\omega$)
equals $\omega$ itself, normalizing productivity to the metric
of material demand. The choice of $M$ as the conditional mean is
conventional; other location functionals that are equivariant
under location shifts yield equivalent identification results,
and the linear GMM in Section~\ref{sec:estimation}
is invariant to this choice.

Critically, the normalization is applied independently at each
$(k_{jt}, l_{jt})$: for each capital-labor cell, the HS08
decomposition solves a separate measurement error model instance
with its own latent variable. The consistency of $\omega$ levels
across different values of $(k, l)$ is not guaranteed by this
assumption alone; see Section~\ref{sec:obs_equiv}.

The analysis further requires that the conditional densities
$f_{m \mid \omega, k, l}$, $f_{e \mid \omega, k, l}$, and
$f_{w \mid \omega, k, l}$ depend continuously on $(k, l)$ in the
$L^1$ norm for each fixed $\omega$. This regularity condition is
satisfied whenever the demand functions $g_m, g_e, g_w$ are
continuous in $(k, l)$ and the demand shocks have smooth densities.

\subsection{Proof of Theorem~\ref{thm:density_id}}
\label{app:density_id_proof}

\begin{proof}[Proof of Theorem~\ref{thm:density_id}]
The proof applies HS08 to the present setting. Consider the joint density
$f_{m_{jt}, e_{jt} \mid x_{jt}, w_{jt}}$, which is directly
identifiable from the data. Using the law of total probability, I
introduce unobserved productivity $\omega_{jt}$ as a latent variable
of integration:
\[
  f_{m_{jt}, e_{jt} \mid x_{jt}, w_{jt}}
  = \int f_{m_{jt}, e_{jt}, \omega_{jt} \mid x_{jt}, w_{jt}}\,
  d\omega_{jt}.
\]
Since $w_{jt}$ is a function of $(\omega_{jt}, x_{jt}, \eta_{jt})$
and $\eta_{jt}$ is independent of $(\tau_{jt}, \nu_{jt})$
conditional on $(\omega_{jt}, x_{jt})$
(Assumption~\ref{ass:cond_indep}), conditioning on $w_{jt}$ does
not alter the conditional distribution of $m_{jt}$ or $e_{jt}$
given $(\omega_{jt}, x_{jt})$. Applying the chain rule and the
conditional independence assumption, I decompose the integrand
into the product of three unknown conditional densities:
\begin{equation}
  f_{m_{jt}, e_{jt} \mid x_{jt}, w_{jt}}
  = \int f_{e_{jt} \mid \omega_{jt}, x_{jt}}
  \cdot f_{m_{jt} \mid \omega_{jt}, x_{jt}}
  \cdot f_{\omega_{jt} \mid x_{jt}, w_{jt}}\, d\omega_{jt}.
  \label{eq:integral_eq}
\end{equation}
This equation has the structure of Equation~(5) in HS08, whose
Theorem~1 establishes uniqueness of the three unknown densities
under the maintained assumptions.

The conditional distributions of input demand and productivity
are therefore nonparametrically identified from static data alone.
\end{proof}

The roles of $m$, $e$, and $w$ in Theorem~\ref{thm:density_id}
are interchangeable: any permutation of the three inputs yields
the same identification result. The asymmetric instrument strategy
in Section~\ref{sec:gmm_approach} breaks this symmetry at the
estimation stage for efficiency, but the underlying identification
is symmetric.

As a consequence of Theorem~\ref{thm:density_id}, the density
$f_{w_{jt} \mid \omega_{jt}, x_{jt}}$ is also identified. This
follows from Bayes' rule:
\begin{equation}
  f_{w_{jt} \mid \omega_{jt}, x_{jt}}(w \mid \omega, x)
  = \frac{
    f_{\omega_{jt} \mid x_{jt}, w_{jt}}(\omega \mid x, w)
    \cdot f_{w_{jt} \mid x_{jt}}(w \mid x)
  }{
    f_{\omega_{jt} \mid x_{jt}}(\omega \mid x)
  },
  \label{eq:bayes_mpp}
\end{equation}
where the numerator's first factor is identified by
Theorem~\ref{thm:density_id}, $f_{w_{jt} \mid x_{jt}}$ is
directly computable from the data, and the denominator is obtained by
marginalizing over $w_{jt}$.

By applying Bayes' rule, the full posterior density of
productivity given all observable inputs is identified:
\begin{equation}
  f_{\omega_{jt} \mid x_{jt}, m_{jt}, e_{jt}, w_{jt}}
  (\omega \mid \cdot)
  = \frac{
    f_{m_{jt} \mid \omega_{jt}, x_{jt}}(m \mid \omega, \cdot)\;
    f_{e_{jt} \mid \omega_{jt}, x_{jt}}(e \mid \omega, \cdot)\;
    f_{\omega_{jt} \mid x_{jt}, w_{jt}}(\omega \mid \cdot)
  }{
    f_{m_{jt}, e_{jt} \mid x_{jt}, w_{jt}}(m, e \mid \cdot)
  },
  \label{eq:bayes_omega}
\end{equation}
where all densities in the numerator are identified by
Theorem~\ref{thm:density_id} and the denominator is directly
computable from the data. This conditional density is used in
Section~\ref{sec:prod_func_id} for production function
identification.

\subsection{Identification up to $\mathbb{E}[\omega \mid k, l]$}

\begin{thm}[{Identification up to $\mathbb{E}[\omega \mid k, l]$}]
\label{thm:id_up_to_E}
Under Assumptions~\ref{ass:additive_error}--\ref{ass:cond_indep}
and Assumptions~\ref{ass:injectivity}--\ref{ass:labeling}, the
production function $f_t$ is identified up to the specification of
$\mathbb{E}[\omega \mid k, l]$. Specifically:
\begin{enumerate}
\item[(a)] If an additional restriction specifying the functional
  form of $\mathbb{E}[\omega \mid k, l]$ is introduced, $f_t$ is
  point-identified.
\item[(b)] Without any restriction on
  $\mathbb{E}[\omega \mid k, l]$, $f_t$ is not point-identified.
\end{enumerate}
\end{thm}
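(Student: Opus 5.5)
Both parts follow from Theorem~\ref{thm:obs_equiv}, which describes the full observational equivalence class: any two structures generating the same observables are related by $\tilde\omega = \omega + \Delta(k,l)$ and $\tilde f_t(k,l,m,e,w,\tilde\omega) = f_t(k,l,m,e,w,\tilde\omega-\Delta(k,l))$ for some continuous $\Delta$. The plan is to show that knowing $\mathbb{E}[\omega\mid k,l]$ pins down $\Delta$, and that leaving it free leaves $\Delta$ free.

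For part (a), suppose the additional restriction specifies $\mathbb{E}[\omega\mid k,l]=\bar\omega(k,l)$ for a known function $\bar\omega$. Take any observationally equivalent alternative $(\tilde f_t,\tilde\omega)$ that also satisfies the restriction, so $\mathbb{E}[\tilde\omega\mid k,l]=\bar\omega(k,l)$. Since $\Delta$ depends only on $(k,l)$, taking conditional expectations of $\tilde\omega=\omega+\Delta(k,l)$ gives
\[
\mathbb{E}[\tilde\omega\mid k,l]=\mathbb{E}[\omega\mid k,l]+\Delta(k,l).
\]
Hence $\Delta(k,l)=0$ on the support of $(k,l)$, and by continuity on its closure. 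Substituting $\Delta=0$ into \eqref{eq:obs_equiv} gives $\tilde f_t=f_t$, so $f_t$ is point-identified.

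The only subtlety is what ``specifying the functional form'' means. If the restriction fixes the function itself, the argument above is complete. If it only fixes a parametric family, one must additionally check that the family is not closed under adding the relevant $\Delta$. That is the situation handled by Corollary~\ref{thm:exclusion} and Theorem~\ref{thm:homothetic_id}, so I would read (a) as the case of a fully specified function and mention this caveat.

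For part (b), I construct an explicit non-identified pair. Start from a true structure $(f_t,\omega)$ and pick any continuous, non-constant $\Delta(k,l)$, for example $\Delta(k,l)=ck$ with $c\neq 0$. Define $\tilde\omega=\omega+\Delta(k,l)$ and $\tilde f_t$ by \eqref{eq:obs_equiv}.

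I need to check two things.
\begin{itemize}
\item \emph{Same observables.} This is the sufficiency direction of Theorem~\ref{thm:obs_equiv}. Directly: the output $y$ is unchanged because $\tilde f_t(\cdot,\tilde\omega)=f_t(\cdot,\omega)$. The demand functions are redefined as $\tilde g_h(x,\tilde\omega,\cdot)=g_h(x,\tilde\omega-\Delta,\cdot)$, so the demands $(m,e,w)$ are also unchanged.
\item \emph{All assumptions still hold.} Assumption~\ref{ass:cond_indep} holds because conditioning on $(\tilde\omega,x)$ is the same as conditioning on $(\omega,x)$. Assumption~\ref{ass:additive_error} holds for the same reason. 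Injectivity and distinct eigenvalues are preserved under this bijective relabeling at fixed $(k,l)$. The labeling in Assumption~\ref{ass:labeling} must be re-imposed cellwise, which the theorem's construction already allows. This is exactly why $\Delta$ is unrestricted.
\end{itemize}

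Then $\tilde f_t\neq f_t$ as functions. For instance, in the Hicks-neutral case the $(k,l)$-component shifts by $-\Delta(k,l)$, which is non-constant, so the shift cannot be absorbed by the normalization $\mathbb{E}[\omega]=0$. Hence $f_t$ is not point-identified.

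The main obstacle is the assumption-preservation check in (b), specifically the labeling step. I would handle it by noting that Assumption~\ref{ass:labeling} is imposed per $(k,l)$ cell with a possibly cell-dependent reference point, which is the same fact used to derive Theorem~\ref{thm:obs_equiv}. I would therefore invoke the sufficiency half of that theorem rather than re-verify each assumption from scratch.
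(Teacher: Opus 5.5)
Your proposal is correct and follows essentially the same route as the paper: in part (a) you take conditional expectations of $\tilde\omega = \omega + \Delta(k,l)$ to force $\Delta \equiv 0$ once $\mathbb{E}[\omega \mid k,l]$ is fixed, and in part (b) you use the freedom in $\Delta$ given by Theorem~\ref{thm:obs_equiv}. Your explicit non-constant $\Delta$, the assumption-preservation check, and the caveat that (a) means fixing the function $\mathbb{E}[\omega \mid k,l]$ itself rather than a parametric family (which is how the paper's proof implicitly reads it) only spell out what the paper's two-line argument leaves implicit.
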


\begin{proof}
(a) By Theorem~\ref{thm:obs_equiv}, observationally equivalent
structures are indexed by $\Delta(k, l)$. Since
$\mathbb{E}[\tilde{\omega} \mid k, l]
= \mathbb{E}[\omega \mid k, l] + \Delta(k, l)$, fixing
$\mathbb{E}[\omega \mid k, l]$ uniquely pins down
$\Delta(k, l) = 0$, and hence $f_t$ is point-identified.

\noindent (b) Without restrictions, $\Delta(k, l)$ can be any
continuous function, and Theorem~\ref{thm:obs_equiv} implies that
point identification cannot be achieved.
\end{proof}

\subsection{Identification via Exclusion Restriction (Proposition)}

\begin{proposition}[Identification via Exclusion Restriction]
\label{prop:excl_ols}
Under Assumptions~\ref{ass:additive_error}--\ref{ass:cond_indep},
Assumptions~\ref{ass:injectivity}--\ref{ass:labeling}, and the
linear demand specification, let
$\tilde{y}_{jt} \equiv y_{jt} - \hat{\beta}_m m_{jt}
- \hat{\beta}_e e_{jt} - \hat{\beta}_w w_{jt}$ denote the
partially identified output using Block~A+B estimates.

\medskip
\noindent\textbf{Case~1} (Joint exclusion).
If $a_{k}^{h} = a_{l}^{h} = 0$ for some input $h$, construct the
productivity proxy
\begin{equation}
  \hat{\omega}_{jt}^h
  = \frac{h_{jt} - \hat{a}_{z}^{h\prime}\,z_{jt}}
         {\hat{a}_{\omega}^{h}}.
  \label{eq:omega_proxy_joint}
\end{equation}
Then OLS regression of
$(\tilde{y}_{jt} - \hat{\omega}_{jt}^h)$ on
$(1, k_{jt}, l_{jt})$ consistently estimates
$(\beta_0, \beta_k, \beta_l)$.

\medskip
\noindent\textbf{Case~2} (Marginal exclusion).
If $a_{k}^{h_1} = 0$ for input $h_1$ and $a_{l}^{h_2} = 0$ for
input $h_2$ ($h_1 \neq h_2$), construct the proxies
\begin{align}
  \hat{\omega}_{jt}^{h_1}
  &= \frac{h_{1,jt} - \hat{a}_{l}^{h_1 *}\,l_{jt}
    - \hat{a}_{z}^{h_1\prime}\,z_{jt}}{\hat{a}_{\omega}^{h_1}},
  \label{eq:proxy_h1} \\
  \hat{\omega}_{jt}^{h_2}
  &= \frac{h_{2,jt} - \hat{a}_{k}^{h_2 *}\,k_{jt}
    - \hat{a}_{z}^{h_2\prime}\,z_{jt}}{\hat{a}_{\omega}^{h_2}},
  \label{eq:proxy_h2}
\end{align}
where $\hat{a}_{l}^{h_1 *}$ and $\hat{a}_{k}^{h_2 *}$ denote
the Block~A+B estimates (which include the
$\Delta(k,l)$ indeterminacy). Then:
\begin{enumerate}
\item[(i)] the coefficient on $k$ in the OLS regression of
  $(\tilde{y}_{jt} - \hat{\omega}_{jt}^{h_1})$ on
  $(1, k_{jt}, l_{jt})$ consistently estimates $\beta_k$;
\item[(ii)] the coefficient on $l$ in the OLS regression of
  $(\tilde{y}_{jt} - \hat{\omega}_{jt}^{h_2})$ on
  $(1, k_{jt}, l_{jt})$ consistently estimates $\beta_l$.
\end{enumerate}

This procedure requires neither the Markov assumption nor the
homothetic regularity condition
(Assumption~\ref{ass:homothetic}).
\end{proposition}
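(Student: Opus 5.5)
The plan is to work entirely within the linear (Cobb--Douglas) specification \eqref{eq:gmm_prod}--\eqref{eq:gmm_demand_mpp}. First I write down what the Block~A+B estimates converge to under the normalization $\beta_k=\beta_l=0$, and then compute the probability limit of each OLS regression. By Theorem~\ref{thm:obs_equiv} specialized to the linear class, the A+B system identifies the structure only up to $\tilde\omega=\omega+c_k k+c_l l$. The normalization selects $(c_k,c_l)=(\beta_k,\beta_l)$, which absorbs $\beta_k k+\beta_l l$ into the latent variable. Substituting $\omega=\tilde\omega-\beta_k k-\beta_l l$ into the demand equation for input $h$ gives
\[
\hat a_k^{h*}\xrightarrow{p} a_k^h-a_\omega^h\beta_k,\qquad
\hat a_l^{h*}\xrightarrow{p} a_l^h-a_\omega^h\beta_l .
\]
The quantities $\hat a_\omega^h$, $\hat a_z^h$ and $(\hat\beta_m,\hat\beta_e,\hat\beta_w)$ are invariant to the shift and are consistent. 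I take this consistency from Theorem~\ref{thm:density_id} together with the invariance of Blocks~A and~B to $\Delta(k,l)$, and Theorem~\ref{thm:asymptotics}(a).

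For Case~1, I substitute the true demand equation into \eqref{eq:omega_proxy_joint}. With $a_k^h=a_l^h=0$ and consistent $(\hat a_z^h,\hat a_\omega^h)$,
\[
\hat\omega^h_{jt}=\omega_{jt}+h\text{-shock}/a_\omega^h+o_p(1)\cdot(\text{terms in }z,\text{ shock}).
\]
Since $\tilde y_{jt}=\beta_0+\beta_k k+\beta_l l+\omega+\varepsilon+o_p(1)\cdot(m,e,w)$, the dependent variable equals
\[
\beta_0+\beta_k k+\beta_l l+\varepsilon-\text{shock}/a_\omega^h+o_p(1).
\]
OLS on $(1,k,l)$ is consistent provided the composite error is uncorrelated with $(k,l)$. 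This holds because $\mathbb{E}[\varepsilon\mid k,l]=0$ by Assumption~\ref{ass:additive_error}, and the demand shock is mean-independent of $(k,l)$ given the state by Assumption~\ref{ass:gmm_uncorrelated}. I then handle the estimated-parameter terms with a standard uniform law of large numbers and Slutsky argument, using finite second moments of $(k,l,z,m,e,w)$. Here $\hat a_z^h$ enters only through the $z$ part, which is not a regressor; the centered-$z$/de-meaning convention makes its limiting contribution vanish.

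For Case~2(i), the proxy \eqref{eq:proxy_h1} subtracts $\hat a_l^{h_1*}l$. It is therefore asymptotically equal to
\[
\omega+\frac{a_l^{h_1}-\hat a_l^{h_1*}}{a_\omega^{h_1}}\,l+\text{shock}/a_\omega^{h_1}
=\omega+\beta_l l+\text{shock}/a_\omega^{h_1},
\]
using $a_k^{h_1}=0$ so that no $k$ term remains. The dependent variable becomes $\beta_0+\beta_k k+0\cdot l+\text{error}$, so the $k$ coefficient converges to $\beta_k$. The $l$ coefficient converges to $0$, which is why only (i) is claimed. Case~2(ii) is symmetric: subtracting $\hat a_k^{h_2*}k$ reintroduces $\beta_k k$, so $k$'s coefficient is $0$ and $l$'s converges to $\beta_l$.

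The main obstacle is the first step, namely pinning down the precise probability limits of the A+B estimates of the slope coefficients. These are not identified structurally, so I must argue that the GMM estimator under the normalization $\beta_k=\beta_l=0$ selects exactly the element of the equivalence class with $(c_k,c_l)=(\beta_k,\beta_l)$. I would do this by showing that the Block~A+B population moments hold at the transformed parameter vector; this uses the invariance of Blocks~A and~B to $\Delta$. Just-identification and Theorem~\ref{thm:obs_equiv} then give uniqueness within the normalized class. The remaining steps are algebraic substitution plus routine consistency arguments.
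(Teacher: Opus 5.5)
Your proposal is correct and follows the paper's proof: you substitute the linear demand equation into each proxy, take the probability limit of $\tilde y_{jt}-\hat\omega^h_{jt}$, and use orthogonality of $\varepsilon-\eta^h/a^h_\omega$ with $(1,k,l)$. The only difference is that you fix the element of the equivalence class through the normalization $\beta_k=\beta_l=0$, so the nuisance $l$-coefficient in Case~2(i) tends to $0$, whereas the paper keeps generic constants $(c_k,c_l)$ and obtains $\beta_l+c_l$. The ``main obstacle'' you flag is unnecessary for (i) and (ii): $\hat a^{h_1*}_l\,l_{jt}/\hat a^{h_1}_\omega$ is a single scalar times a regressor, so by linearity of OLS it changes only the $l$-coefficient, exactly in every sample, and its limit (or even its convergence) is irrelevant to the claimed $k$-coefficient, and symmetrically for $\hat a^{h_2*}_k$ in (ii); you only need consistency of $\hat a^h_z$, $\hat a^h_\omega$ and $(\hat\beta_m,\hat\beta_e,\hat\beta_w)$.
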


\begin{proof}[Proof sketch (Case~1; full proof in
Appendix~\ref{app:proof_excl_ols})]
Under the linear specification, the observational equivalence
of Theorem~\ref{thm:obs_equiv} implies that Block~A+B
estimates satisfy
$\hat{a}_{k}^{h *} \xrightarrow{p} a_{k}^{h} + a_{\omega}^{h}\,c_k$
for constants $(c_k, c_l)$.
The proxy~\eqref{eq:omega_proxy_joint} uses only the
invariant estimates $\hat{a}_{z}^{h}$ and
$\hat{a}_{\omega}^{h}$, yielding
$\hat{\omega}^h \xrightarrow{p} \omega + \eta^h/a_{\omega}^{h}$.
Subtracting from $\tilde{y}$ leaves a regression with
error $\varepsilon - \eta^h/a_{\omega}^{h}$, which is
mean-independent of $(k,l)$ by iterated expectations and the
exclusion restriction $a_{k}^{h} = a_{l}^{h} = 0$. Case~2
follows by a symmetric argument using separate proxies for
$\beta_k$ and $\beta_l$.
\end{proof}

Replacing the linear subtraction of $\hat{\omega}^h$ in
Proposition~\ref{prop:excl_ols} with a polynomial regression
is not consistent in general; see
Appendix~\ref{app:nonlinear_specs} for details.

\subsection{Validity of Recovered Productivity for Policy Evaluation}

\begin{proposition}[Validity of Recovered Productivity for Policy Evaluation]
\label{prop:omega_hat_D}
Under Assumptions~\ref{ass:additive_error}--\ref{ass:cond_indep}
and Assumptions~\ref{ass:injectivity}--\ref{ass:labeling}, suppose
additionally that:
\begin{enumerate}
\item[(i)] $D$ does not alter the functional form of the intermediate
  input demand functions $g_m, g_e, g_w$; that is, $D$ may
  affect the \textit{level} of $\omega$, but the mapping from
  $\omega$ to $(m, e, w)$ is structurally invariant to $D$.
\item[(ii)] The determination of $D$ may depend on $\omega$ and the
  state variables $(k, l, z)$, but does not depend on the
  input-specific demand shocks $(\tau, \nu, \eta)$.
\end{enumerate}
Then $\mathbb{E}[\hat{\omega}_{jt} \mid D_{jt}]
= \mathbb{E}[\omega_{jt} \mid D_{jt}]$.
\end{proposition}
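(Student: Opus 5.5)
The plan is to work with the posterior-mean definition of $\hat{\omega}_{jt}$ in \eqref{eq:omega_hat}, $\hat{\omega}=\mathbb{E}[\omega\mid x,m,e,w]$, and to show that the treatment indicator can be added to the conditioning set without changing this object. The tower property then delivers the claim. The argument has three steps.

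\textbf{Step 1: the identification argument survives when $D$ is added.} I will show that Theorem~\ref{thm:density_id} and the posterior \eqref{eq:bayes_omega} are unchanged when $D$ is added to the conditioning set.

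\begin{itemize}
\item By condition~(ii), $D$ is a function of $(\omega,x)$ and possibly of other variables that are independent of $(\tau,\nu,\eta)$ given $(\omega,x)$. Hence $(\tau,\nu,\eta)\perp D\mid(\omega,x)$, and Assumption~\ref{ass:cond_indep} continues to hold conditional on $(\omega,x,D)$.
\item By condition~(i), the demand maps $g_m,g_e,g_w$ do not depend on $D$. Therefore $f_{m\mid\omega,x,D}=f_{m\mid\omega,x}$, and likewise for $e$ and $w$.
\end{itemize}

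\textbf{Step 2: $D$ is redundant given the inputs.} Write the joint density of $(m,e,w)$ and $\omega$ given $(x,D)$ as
\[
f_{m\mid\omega,x}\,f_{e\mid\omega,x}\,f_{w\mid\omega,x}\,f_{\omega\mid x,D}.
\]
Dividing by its integral over $\omega$ gives a posterior $f_{\omega\mid x,m,e,w,D}$. The three demand kernels in this posterior are the same as in \eqref{eq:bayes_omega}. The prior, however, now conditions on $D$.

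This is the main obstacle. In general $\mathbb{E}[\omega\mid x,m,e,w,D]\neq\mathbb{E}[\omega\mid x,m,e,w]$ pointwise, because $D$ carries information about $\omega$ beyond the inputs. So I will not try to prove pointwise equality. I will route the argument through the tower property instead.

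\textbf{Step 3: conditioning on $D$ and applying the tower property.} I need to handle the fact that $\hat{\omega}$ conditions on $(x,m,e,w)$ but not on $D$. My first attempt is to interpret $\hat{\omega}$ in the proposition as the measure computed from the structure identified by Theorem~\ref{thm:density_id}. Under condition~(i), that structure's demand kernels are common to the treated and untreated populations, so the identified posterior can be computed with $D$ included among the state variables $z$. This inclusion is legitimate: condition~(ii) keeps Assumption~\ref{ass:cond_indep} valid, and Assumptions~\ref{ass:injectivity}--\ref{ass:labeling} are unaffected because the kernels do not change. With $D\in z$, we have $\hat{\omega}=\mathbb{E}[\omega\mid x,m,e,w]$ where $x$ already contains $D$. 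The law of iterated expectations, using $\sigma(D)\subset\sigma(x,m,e,w)$, then gives
\[
\mathbb{E}[\hat{\omega}\mid D]=\mathbb{E}[\omega\mid D].
\]

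I will also check the GMM version \eqref{eq:omega_recovery} directly. Consistency of $\hat\beta$ under conditions (i)--(ii) holds because Blocks~A--B use only the conditional independence and the invariant demand coefficients. Then $\hat{\omega}=\omega+\varepsilon+o_p(1)$. Assumption~\ref{ass:additive_error} together with condition~(ii) gives $\mathbb{E}[\varepsilon\mid D]=0$, which yields the same conclusion. If the $\Delta(k,l)$ indeterminacy of Theorem~\ref{thm:obs_equiv} is present, it appears as a shift depending only on $(k,l)$. I will note that this shift is removed by the $(k,l)$ controls, as in Section~\ref{sec:omega_recovery_gmm}, or by the normalization of Corollary~\ref{thm:exclusion}.
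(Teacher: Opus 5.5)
You are right to reject the pointwise identity, and this matters more than you suggest. The paper's own proof rests on exactly that step: it asserts \eqref{eq:suff_stat}, $\mathbb{E}[\omega\mid D,x,m,e,w]=\mathbb{E}[\omega\mid x,m,e,w]$, and then applies the tower property. Conditions~(i)--(ii) deliver $(m,e,w)\perp D\mid(\omega,x)$, not $\omega\perp D\mid(x,m,e,w)$, so that step fails whenever $D$ depends on $\omega$, which (ii) explicitly allows.

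Your repair, however, does not prove the statement as written. Putting $D$ into $z$ replaces the $\hat{\omega}$ of \eqref{eq:omega_hat}, which conditions on $x=(k,l,z)$ without $D$, by a different estimator $\mathbb{E}[\omega\mid x,D,m,e,w]$. For that estimator the conclusion is just the tower property. For the $\hat{\omega}$ in the statement the conclusion is false, so this gap cannot be closed. Take $x$ degenerate, $\omega\sim N(0,\sigma_\omega^2)$, linear demands with independent Gaussian shocks, and $D=\mathbb{I}(\omega>0)$, which satisfies (i) and (ii). Then $\hat{\omega}=\lambda(\omega+u)$ with $u$ independent of $\omega$ and $\lambda\in(0,1)$ the reliability ratio. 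Hence $\mathbb{E}[\hat{\omega}\mid D=1]=\lambda\,\mathbb{E}[\omega\mid\omega>0]<\mathbb{E}[\omega\mid D=1]$. The same shrinkage occurs even if $D$ is randomly assigned and only shifts the level of $\omega$. You should state plainly that the proposition must be reformulated, rather than present your redefinition as an interpretation of it.

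Each of your two reformulations also needs a correction.
\begin{itemize}
\item In the $D\in z$ route, the demand kernels are unchanged, but $f_{\omega\mid w,x,D}$ is not. Injectivity of $L_{\omega\mid w,x,D}$, i.e.\ completeness of $\{f_{w\mid\omega,x}(\cdot\mid\omega)\}$ over the support of $\omega$ given $D$, is therefore an added requirement. It is not implied by Assumptions~\ref{ass:injectivity}--\ref{ass:labeling} as you claim.
\item In that same route, what condition~(i) actually buys is a labeling normalization common to both $D$-cells. Without it, a $D$-dependent location shift analogous to $\Delta(k,l)$ would contaminate the treated--untreated comparison.
\item In the residual route \eqref{eq:omega_recovery}, the identity holds at the true parameters only if $(\beta_k,\beta_l)$ are recovered through Block~C or Corollary~\ref{thm:exclusion}. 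Under the Block~A+B normalization the residual is $\omega+\beta_k k+\beta_l l+\varepsilon$. Adding $(k,l)$ controls to a downstream regression changes the estimand; it does not restore $\mathbb{E}[\hat{\omega}\mid D]=\mathbb{E}[\omega\mid D]$.
\item Also in the residual route, $\mathbb{E}[\varepsilon\mid D]=0$ does not follow from Assumption~\ref{ass:additive_error} and (ii). The former does not condition on $z$, and the latter restricts only $(\tau,\nu,\eta)$. You need $\mathbb{E}[\varepsilon\mid\omega,x]=0$, the zero-conditional-mean condition invoked after Assumption~\ref{ass:gmm_uncorrelated}.
\end{itemize}
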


The proof is given in Appendix~\ref{app:proof_omega_hat_D}.
The idea is that under conditions~(i) and~(ii), the
intermediate inputs serve as sufficient statistics for $\omega$:
once $(x, m, e, w)$ are observed, $D$ provides no additional
information about $\omega$, so the law of iterated expectations
yields the result.\footnote{%
Condition~(i) may fail if the policy fundamentally changes how firms
use intermediate inputs. In such cases,
Theorem~\ref{thm:density_id} should be applied separately to
subpopulations defined by $D = 0$ and $D = 1$. Condition~(ii) may
fail if, for example, a specific input demand shock influences the
firm's decision to participate in the policy.}

\paragraph{Implication for ATT identification in event studies.}
Proposition~\ref{prop:omega_hat_D} provides the measurement guarantee
needed to identify the average treatment effect on the treated (ATT)
using $\hat{\omega}_{jt}$ as the outcome variable.
Suppose that the standard parallel trends assumption holds for the
latent productivity $\omega_{jt}$: for all $t \geq t_0$,
\begin{equation}
  \mathbb{E}[\omega_{jt}(0) \mid D_j = 1]
  - \mathbb{E}[\omega_{jt_0 - 1}(0) \mid D_j = 1]
  =
  \mathbb{E}[\omega_{jt}(0) \mid D_j = 0]
  - \mathbb{E}[\omega_{jt_0 - 1}(0) \mid D_j = 0],
  \label{eq:parallel_trends}
\end{equation}
where $\omega_{jt}(0)$ denotes the potential outcome under no treatment.
Under this assumption, the ATT at period $t$,
\[
  \mathrm{ATT}_t
  \;=\;
  \mathbb{E}[\omega_{jt}(1) - \omega_{jt}(0) \mid D_j = 1],
\]
is identified by the difference-in-differences estimand
\begin{align}
  \hat{\mathrm{ATT}}_t
  &=
  \bigl(\mathbb{E}[\hat{\omega}_{jt} \mid D_j = 1]
        - \mathbb{E}[\hat{\omega}_{jt_0-1} \mid D_j = 1]\bigr)
  \notag\\
  &\quad
  -
  \bigl(\mathbb{E}[\hat{\omega}_{jt} \mid D_j = 0]
        - \mathbb{E}[\hat{\omega}_{jt_0-1} \mid D_j = 0]\bigr).
  \label{eq:att_did}
\end{align}
To see that~\eqref{eq:att_did} converges to $\mathrm{ATT}_t$,
apply Proposition~\ref{prop:omega_hat_D} to replace each
$\mathbb{E}[\hat{\omega}_{jt}\mid D_j]$ with
$\mathbb{E}[\omega_{jt}\mid D_j]$, and then invoke
parallel trends~\eqref{eq:parallel_trends}.
No assumption on the time-series dynamics of $\omega$ is required
beyond~\eqref{eq:parallel_trends} itself.

Standard proxy variable estimators do not share this property.
Their residual satisfies
\[
  \hat{\omega}^{\mathrm{proxy}}_{jt}
  = \omega_{jt}
  + (\beta_k^{\mathrm{true}} - \hat{\beta}_k^{\mathrm{proxy}})\,k_{jt}
  + (\beta_l^{\mathrm{true}} - \hat{\beta}_l^{\mathrm{proxy}})\,l_{jt}
  + o_p(1),
\]
so the DiD estimand~\eqref{eq:att_did} applied to
$\hat{\omega}^{\mathrm{proxy}}$ converges to
\[
  \mathrm{ATT}_t
  +
  (\beta_k^{\mathrm{true}} - \hat{\beta}_k^{\mathrm{proxy}})
  \,\Delta\mathbb{E}[k_{jt} \mid D_j]
  +
  (\beta_l^{\mathrm{true}} - \hat{\beta}_l^{\mathrm{proxy}})
  \,\Delta\mathbb{E}[l_{jt} \mid D_j],
\]
where $\Delta\mathbb{E}[k_{jt}\mid D_j]$ denotes the DiD in
capital between treatment and control groups.
The bias vanishes only if
$\hat{\beta}^{\mathrm{proxy}} = \beta^{\mathrm{true}}$
or if the treatment is orthogonal to $(k, l)$.
In settings where the treatment induces capital or labor
adjustment---as in the earthquake application of
Section~\ref{subsec:event_study_body}---neither condition is
guaranteed.

\subsection{GMM Moment Conditions}

\begin{assumption}[Moment Conditions for GMM]
\label{ass:gmm_uncorrelated}
Let $\Xi_{jt} = (\tau_{jt}, \nu_{jt}, \eta_{jt}, \varepsilon_{jt})$
denote the vector of all structural shocks, and define
$W_{jt} = (1, \omega_{jt}, k_{jt}, l_{jt}, z_{jt})$
as the vector of relevant state variables and functions thereof. The following conditions are maintained:
\begin{enumerate}
\item \textit{Zero Mean Shocks}:
  $\mathbb{E}[\Xi^n_{jt}] = 0$ for all $n$.
\item \textit{State Exogeneity}: All shocks are uncorrelated with
  productivity $\omega_{jt}$ and primary inputs $k_{jt}, l_{jt}$
  (and functions thereof):
  \[
    \mathbb{E}[\Xi^n_{jt} \cdot W^p_{jt}] = 0
    \quad \text{for all } n, p.
  \]
\item \textit{Mutual Exogeneity of Shocks}: Different structural
  shocks are mutually uncorrelated:
  \[
    \mathbb{E}[\Xi^n_{jt} \cdot \Xi^p_{jt}] = 0
    \quad \text{for all } n \neq p.
  \]
\end{enumerate}
\end{assumption}

Assumption~\ref{ass:gmm_uncorrelated} is implied by the zero
conditional mean condition
$\mathbb{E}[\Xi^n_{jt} \mid \omega_{jt}, x_{jt}] = 0$ together
with Assumption~\ref{ass:cond_indep} (conditional
independence).\footnote{Strictly speaking,
Assumption~\ref{ass:gmm_uncorrelated} is weaker than imposing zero
conditional mean on top of Assumption~\ref{ass:cond_indep}. The zero
conditional mean condition implies uncorrelatedness with
\textit{any} measurable function of $(\omega, x)$, whereas
Assumption~\ref{ass:gmm_uncorrelated}(2) requires uncorrelatedness
only with the specific functions in $W_{jt}$.}

\section{Microfoundations of Intermediate Input Factor Demand}
\label{sec:micro_fnd}

In this appendix, I provide microfoundations for the factor demand
functions of the three intermediate inputs introduced in 
Section~\ref{sec:model_identification} 
(Equations~\eqref{eq:demand_m}--\eqref{eq:demand_mpp}). 
Specifically, I demonstrate that the unobserved shock terms
$\tau_{jt}, \nu_{jt}, \eta_{jt}$ included in each demand function 
are structurally derived from the firm's optimization behavior and 
market frictions. I thereby offer a theoretical rationale for the
independence conditions required for identification.

\subsection{Primitives}

The production technology of firm $j$ at time $t$ is described by 
the following general nonparametric production function:
\begin{equation}
  Y_{jt} = F_t(K_{jt}, L_{jt}, M_{jt}, E_{jt}, W_{jt},
  \Omega_{jt}) \exp(\varepsilon_{jt})
  \label{eq:production_tech}
\end{equation}
where $\Omega_{jt}$ is productivity observed by the firm but 
unobserved by the econometrician, and $\varepsilon_{jt}$ is an 
ex-post production shock realized after input decisions are made. 
The firm faces an inverse demand function 
$P_{jt}(Y_{jt}, A_{jt})$ in the product market, where $A_{jt}$ 
denotes a demand shock. Additionally, the firm faces an inverse 
supply function $P_{h,jt}(h_{jt})$ in the market for each 
intermediate input 
$h \in \{M_{jt}, E_{jt}, W_{jt}\}$.

I allow for deviations from perfect competition. I define the
markup $\mu_{jt}$ in the product market and the markdown 
$\psi_{h,jt}$ for input $h$ as follows:
\begin{equation}
  \mu_{jt} \equiv \frac{P_{jt}}{MC_{jt}}, \quad 
  \psi_{h,jt} \equiv \frac{ME_{h,jt}}{P_{h,jt}}
  \label{eq:markup_markdown_def}
\end{equation}
where $MC_{jt}$ denotes marginal cost and $ME_{h,jt}$ denotes 
marginal expenditure. By definition, $\mu_{jt} \geq 1$ and 
$\psi_{h,jt} \geq 1$ hold.

Following \textcite{hsieh2009misallocation}, I define
a wedge $\Upsilon_{h,jt}$ representing exogenous distortions 
specific to each input $h$ (e.g., taxes, adjustment costs, or 
optimization errors).

\subsection{Expected Cost Minimization}

The firm minimizes the cost of achieving a target expected output 
$\bar{Y}_{jt}$:
\begin{align}
  \min_{M_{jt}, E_{jt}, W_{jt}} \quad
  & \sum_{h \in \{M, E, W\}} 
  P_{h,jt}(h_{jt})\, \Upsilon_{h,jt}\, h \\
  \text{s.t.} \quad 
  & F_t(K_{jt}, L_{jt}, M_{jt}, E_{jt}, W_{jt}, \Omega_{jt}) 
  \geq \bar{Y}_{jt}
  \label{eq:constraint_expected}
\end{align}
The Lagrange multiplier $\lambda_{jt}$ is interpreted as the 
marginal cost ($MC_{jt}$). The first-order condition with respect to 
input $h$ is:
\begin{equation}
  ME_{h,jt}\, \Upsilon_{h,jt} = MC_{jt} 
  \frac{\partial F_t}{\partial h_{jt}}
  \label{eq:foc_raw}
\end{equation}
Rewriting using~\eqref{eq:markup_markdown_def}, I obtain
$\psi_{h,jt} P_{h,jt} \Upsilon_{h,jt} 
= \frac{P_{jt}}{\mu_{jt}} (\partial F_t / \partial h_{jt})$. 
Taking logarithms:
\begin{equation}
  \ln\left(\frac{\partial F_t}{\partial h}\right) 
  = \underbrace{\ln(\psi_{h,jt}\, P_{h,jt}\, \Upsilon_{h,jt})}_{%
  \text{Idiosyncratic Input Cost}} 
  - \underbrace{\ln\frac{P_{jt}}{\mu_{jt}}}_{%
  \text{Common Market Factor}}
  \label{eq:structural_decomposition}
\end{equation}
The second term, $\ln(P_{jt}/\mu_{jt})$, is a ``Common Market 
Factor'' that affects the demand for all variable inputs 
symmetrically. This term aggregates the effects of product market 
demand shocks and markups. The markdown $\psi_{h,jt}$
is input-specific: different intermediate inputs may face different 
degrees of buyer power. This heterogeneity in markdowns across 
inputs generates different productivity loading coefficients 
$(\gamma_\omega, \delta_\omega, \zeta_\omega)$ in the reduced-form 
demand functions, even when the underlying production technology is 
common.

\subsection{Correspondence with Factor Demand Functions}

From equation~\eqref{eq:structural_decomposition}, the optimal 
input $M_{jt}$ is a function of state variables, productivity, the 
input price and wedge, and the common factor 
$\ln(P_{jt}/\mu_{jt})$. Assuming strict concavity of the production 
function and applying the Implicit Function Theorem yields:
\begin{equation}
  m_{jt} = g_m\left(k_{jt}, l_{jt}, \omega_{jt}, 
  \ln(\psi_{m,jt}\, P_{m,jt}\, \Upsilon_{m,jt}) 
  - \ln\frac{P_{jt}}{\mu_{jt}}\right)
  \label{eq:structural_demand_general}
\end{equation}

The challenge for identification is that the Common Market Factor 
$\ln(P_{jt}/\mu_{jt})$ may induce correlation among the error terms 
across inputs $(\tau_{jt}, \nu_{jt}, \eta_{jt})$, threatening the 
conditional independence assumption 
(Assumption~\ref{ass:cond_indep}). To address this, I include
additional control variables $z_{jt}$ in the observable state 
variables $x_{jt}$.

The econometric error term $\tau_{jt}$ is defined as the 
``orthogonal residual'' obtained by projecting the combined term 
onto $(\omega_{jt}, k_{jt}, l_{jt}, z_{jt})$:
\begin{equation}
  \tau_{jt} \equiv \left[\ln(\psi_{m,jt}\, P_{m,jt}\, 
  \Upsilon_{m,jt}) - \ln\frac{P_{jt}}{\mu_{jt}}\right] 
  - \mathbb{E}\left[\ln(\psi_{m,jt}\, P_{m,jt}\, \Upsilon_{m,jt}) 
  - \ln\frac{P_{jt}}{\mu_{jt}} \;\middle|\; 
  \omega_{jt}, k_{jt}, l_{jt}, z_{jt}\right]
  \label{eq:tau_definition_projected}
\end{equation}
Since the common factor $\ln(P_{jt}/\mu_{jt})$ is spanned by 
$z_{jt}$, it is removed from the residual $\tau_{jt}$. The residual 
reflects only idiosyncratic input costs: input-specific price 
fluctuations, the outcomes of negotiations with individual 
suppliers, procurement frictions, or optimization errors. It is 
economically reasonable to assume that specific supply shocks in the 
markets for raw materials, industrial water, and electricity are 
mutually independent. This definition provides the microfoundations 
for $\tau_{jt} \perp \nu_{jt} \perp \eta_{jt} \mid 
(\omega_{jt}, x_{jt})$.

Applying this definition to 
equation~\eqref{eq:structural_demand_general} yields the form of 
equation~\eqref{eq:demand_m} in the main text:
\[
  m_{jt} = g_m(x_{jt}, \omega_{jt}, \tau_{jt}).
\]


\section{Production Function Recovery}
\label{app:prod_recovery}

This appendix shows that, for each fixed $(k_0, l_0)$, the 
production function $f_t(k_0, l_0, m, e, w, \omega)$ is 
identified as a function of $(m, e, w, \omega)$. I treat three 
cases of increasing generality.

\begin{thm}[Hicks-Neutral Case]
\label{thm:hicks_recovery}
Under Assumptions~\ref{ass:cond_indep}--\ref{ass:labeling}, in the 
additively separable model 
$y = g_t(k, l, m, e, w) + \omega + \varepsilon$, for each fixed 
$(k_0, l_0)$, $g_t(k_0, l_0, m, e, w)$ is nonparametrically 
identified as a function of $(m, e, w)$.
\end{thm}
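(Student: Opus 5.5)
The plan is to mirror the argument sketched around equation~\eqref{eq:hicks_neutral_id}: write $g_t$ as the difference of two conditional expectations, one observable and one computable from the identified posterior of $\omega$.

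Fix $(k_0,l_0)$ and condition throughout on $x=(k_0,l_0,z)$. By Theorem~\ref{thm:density_id}, the conditional densities $f_{m\mid\omega,x}$, $f_{e\mid\omega,x}$ and $f_{\omega\mid x,w}$ are identified. Then \eqref{eq:bayes_mpp} identifies $f_{w\mid\omega,x}$, and \eqref{eq:bayes_omega} identifies the full posterior $f_{\omega\mid x,m,e,w}$. Everything here uses only Assumptions~\ref{ass:cond_indep}--\ref{ass:labeling}, which is why the theorem can drop Assumption~\ref{ass:additive_error} from the density step. The labeling in Assumption~\ref{ass:labeling} fixes the location of $\omega$ inside the cell, so the posterior mean
\[
  \bar{\omega}(x,m,e,w)\equiv\int \omega\, f_{\omega\mid x,m,e,w}(\omega\mid x,m,e,w)\,d\omega
\]
is a well-defined identified function on the support of $(m,e,w)$ given $x$. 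I will impose a mild integrability condition, namely that $\omega$ has a finite first moment given $x$, so that this integral exists.

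Next, take conditional expectations in $y=g_t(k,l,m,e,w)+\omega+\varepsilon$ given $(x,m,e,w)$. The term $g_t(k_0,l_0,m,e,w)$ is measurable with respect to the conditioning set. The key step is to show that $\mathbb{E}[\varepsilon\mid x,m,e,w]=0$. Assumption~\ref{ass:additive_error} gives $\mathbb{E}[\varepsilon\mid k,l,m,e,w,\omega]=0$, but the conditioning set there omits $z$. So I need $\varepsilon$ to be mean-independent of $z$ given $(k,l,m,e,w,\omega)$. This holds because $\varepsilon$ is an ex-post shock realized after input choices and is independent of the state $z$. I will state this explicitly as part of the timing convention, or equivalently read Assumption~\ref{ass:additive_error} as conditioning on $x$. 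The law of iterated expectations over $\omega$ then kills the $\varepsilon$ term, and equation~\eqref{eq:hicks_neutral_id} follows:
\[
  g_t(k_0,l_0,m,e,w)=\mathbb{E}[y\mid x,m,e,w]-\bar{\omega}(x,m,e,w).
\]
The first term is a conditional expectation of observables, so the right-hand side is identified.

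Two bookkeeping points finish the argument. First, the left-hand side does not depend on $z$, so the right-hand side is invariant in $z$. This gives identification of $g_t$ at any $z$ in the support, and the invariance is an overidentifying implication. Second, $g_t$ is identified only on the conditional support of $(m,e,w)$ given $(k_0,l_0)$, and its level is tied to the within-cell labeling of $\omega$. That level is exactly the $\Delta(k_0,l_0)$ ambiguity of Theorem~\ref{thm:obs_equiv}, and it is harmless for identification "as a function of $(m,e,w)$" at fixed $(k_0,l_0)$.

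I expect the main obstacle to be the step $\mathbb{E}[\varepsilon\mid x,m,e,w]=0$ together with the existence of the posterior mean. Identification of the densities is inherited directly from HS08. By contrast, the conditioning on $z$ and the integrability of $\omega$ are not literally contained in the stated assumptions, so I will state them as the minimal added regularity.
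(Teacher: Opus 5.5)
Your proposal is correct and follows the paper's proof. Both write $g_t(k_0,l_0,m,e,w)=\mathbb{E}[y\mid x,m,e,w]-\mathbb{E}[\omega\mid x,m,e,w]$ and identify the second term from the posterior \eqref{eq:bayes_omega}. Your caveat about $z$ is well taken, since the paper's appeal to iterated expectations implicitly puts $z$ in the conditioning set of Assumption~\ref{ass:additive_error}. You could also drop that extra condition: condition only on $(k_0,l_0,m,e,w)$, where iterated expectations applies to Assumption~\ref{ass:additive_error} as stated, and average the identified $\mathbb{E}[\omega\mid x,m,e,w]$ over the observed conditional law of $z$.
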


\begin{proof}
By Assumption~\ref{ass:additive_error},
\[
  \mathbb{E}[y \mid x, m, e, w] 
  = g_t(k_0, l_0, m, e, w) 
  + \mathbb{E}[\omega \mid x, m, e, w],
\]
where $\mathbb{E}[\varepsilon \mid x, m, e, w] = 0$ follows 
from the law of iterated expectations. Therefore
\[
  g_t(k_0, l_0, m, e, w) 
  = \mathbb{E}[y \mid x, m, e, w] 
  - \mathbb{E}[\omega \mid x, m, e, w].
\]
The first term is identified from the data, and the second is 
computable from $f_{\omega \mid x, m, e, w}$ 
(equation~\eqref{eq:bayes_omega}).
\end{proof}

\begin{thm}[Non-Hicks-Neutral Case: No Ex-Post Shock]
\label{thm:non_hicks_no_eps}
Under Assumptions~\ref{ass:cond_indep}--\ref{ass:distinct_eigen} 
and~\ref{ass:labeling}, in the model 
$y = f_t(k, l, m, e, w, \omega)$ with no ex-post shock, suppose 
$f_t(k_0, l_0, m, e, w, \omega)$ is strictly monotone in 
$\omega$ for each fixed $(m, e, w)$. Then, for each fixed 
$(k_0, l_0)$, $f_t$ is nonparametrically identified as a function 
of $(m, e, w, \omega)$.
\end{thm}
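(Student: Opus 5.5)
The approach is to exploit the fact that, with no ex-post shock, output is a deterministic function of $(m,e,w,\omega)$ at fixed $(k_0,l_0)$. Strict monotonicity in $\omega$ then turns the identified posterior of $\omega$ into an identified conditional distribution of $y$. The function $f_t$ can be read off by matching quantiles.

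\textbf{Step 1 (posterior of productivity).} Fix $(k_0,l_0)$ and $z$, so $x$ is fixed. By Theorem~\ref{thm:density_id} together with equation~\eqref{eq:bayes_omega}, the posterior density $f_{\omega\mid x,m,e,w}$ is identified for every $(m,e,w)$ in the support. Its conditional CDF $F_{\omega\mid x,m,e,w}(\cdot)$ is therefore known. This uses only Assumptions~\ref{ass:cond_indep}--\ref{ass:distinct_eigen} and the labeling normalization of Assumption~\ref{ass:labeling}, which fixes the location of $\omega$ within the cell.

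\textbf{Step 2 (conditional law of output).} Since $y=f_t(k_0,l_0,m,e,w,\omega)$ exactly, the conditional CDF $F_{y\mid x,m,e,w}$ is directly identified from the data. Suppose $f_t$ is strictly increasing in $\omega$; the decreasing case is symmetric, and the direction is constant in $\omega$ for each fixed $(m,e,w)$ by monotonicity. Then, conditional on $(x,m,e,w)$,
\[
F_{y\mid x,m,e,w}\bigl(f_t(k_0,l_0,m,e,w,\omega)\bigr)=F_{\omega\mid x,m,e,w}(\omega).
\]
The direction of monotonicity is itself identified. Because the labeling ties $\omega$ to material demand, one can check whether $y$ and the $\omega$-posterior co-move. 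Alternatively, one can treat the direction as part of the maintained assumption.

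\textbf{Step 3 (inversion).} Inverting the display with the generalized inverse of $F_{y\mid x,m,e,w}$ gives
\[
f_t(k_0,l_0,m,e,w,\omega)=F^{-1}_{y\mid x,m,e,w}\bigl(F_{\omega\mid x,m,e,w}(\omega)\bigr)
\]
for every $\omega$ in the conditional support of $\omega$ given $(x,m,e,w)$. Both objects on the right-hand side are identified, so $f_t$ is identified there. This is the standard Matzkin-type quantile inversion, with the latent-variable distribution supplied by HS08 instead of by an independence assumption.

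\textbf{Main obstacle.} The delicate step is support and continuity. The inversion identifies $f_t$ only on pairs $(m,e,w,\omega)$ lying in the joint support. Conditional on $(m,e,w)$, $\omega$ need not range over all of $\Omega$. In fact it is typically nondegenerate precisely because the demand shocks $(\tau,\nu,\eta)$ are non-degenerate.

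To handle this, I would state the result on the joint support of $(m,e,w,\omega)$ given $x$, noting that this support has full dimension when the shocks have full-support smooth densities. I would also need $F_{\omega\mid x,m,e,w}$ to be continuous and strictly increasing on that support, so that the composition is well defined. This follows because the identified posterior is a density, hence has no atoms. Strict monotonicity of $f_t$ then makes $F_{y\mid x,m,e,w}$ continuous and strictly increasing on the image, so the generalized inverse coincides with the true inverse.
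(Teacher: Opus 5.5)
Your argument is correct and is essentially the paper's proof. You identify the posterior $F_{\omega\mid x,m,e,w}$ via Theorem~\ref{thm:density_id} and~\eqref{eq:bayes_omega}, match it to the observed $F_{y\mid x,m,e,w}$ using strict monotonicity, and invert to obtain $f_t(k_0,l_0,m,e,w,\omega)=F^{-1}_{y\mid x,m,e,w}\bigl(F_{\omega\mid x,m,e,w}(\omega)\bigr)$; your support and continuity caveats tighten what the paper leaves implicit. Drop the claim that the labeling identifies the direction of monotonicity: within a cell, $F^{-1}_{y\mid x,m,e,w}\bigl(1-F_{\omega\mid x,m,e,w}(\omega)\bigr)$ reproduces the same conditional law of $y$, so maintain that $f_t$ is increasing in $\omega$, as the paper implicitly does.
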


\begin{proof}
Fix $(k_0, l_0)$ and $(m_0, e_0, w_0)$. With $\varepsilon = 0$, 
$y = f_t(k_0, l_0, m_0, e_0, w_0, \omega)$, and strict 
monotonicity in $\omega$ yields
\[
  F_{y \mid x, m_0, e_0, w_0}(y \mid \cdot) 
  = F_{\omega \mid x, m_0, e_0, w_0}
  \bigl(f_t^{-1}(\cdot, y) \,\big|\, \cdot\bigr).
\]
Both sides are identified (the left from data, the right from 
Theorem~\ref{thm:density_id} and 
equation~\eqref{eq:bayes_omega}). By quantile matching:
\[
  f_t(k_0, l_0, m_0, e_0, w_0, \omega) 
  = F_{y \mid \cdot}^{-1}\!\bigl(
  F_{\omega \mid \cdot}(\omega) \,\big|\, \cdot\bigr).
\]
\end{proof}

\begin{thm}[Non-Hicks-Neutral Case: Known $f_\varepsilon$]
\label{thm:non_hicks_known_eps}
Under Assumptions~\ref{ass:cond_indep}--\ref{ass:labeling}, suppose 
additionally that $\varepsilon$ is independent of 
$(\omega, x, m, e, w)$, $f_\varepsilon$ is known, its 
characteristic function has no zeros on $\mathbb{R}$, and 
$f_t(k_0, l_0, m, e, w, \omega)$ is strictly monotone in 
$\omega$ for each fixed $(m, e, w)$. Then, for each fixed 
$(k_0, l_0)$, $f_t$ is nonparametrically identified.
\end{thm}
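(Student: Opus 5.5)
The plan is a deconvolution argument carried out pointwise in $(m,e,w)$ within a fixed cell $(k_0,l_0)$, followed by the quantile-matching step from the proof of Theorem~\ref{thm:non_hicks_no_eps}. Fix $(k_0,l_0)$, the controls $z$, and a point $(m_0,e_0,w_0)$. Write $\ddot y \equiv f_t(k_0,l_0,m_0,e_0,w_0,\omega)$ for the noiseless output, so that $y = \ddot y + \varepsilon$.

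By the added assumption that $\varepsilon$ is independent of $(\omega,x,m,e,w)$, the conditional distribution of $y$ given $(x,m_0,e_0,w_0)$ is the convolution of two pieces: the conditional law of $\ddot y$ given the same variables, and the known law $f_\varepsilon$. Hence the conditional characteristic functions satisfy
\[
  \phi_{y\mid x,m_0,e_0,w_0}(s) = \phi_{\ddot y\mid x,m_0,e_0,w_0}(s)\,\phi_\varepsilon(s)
  \quad\text{for all } s\in\mathbb{R}.
\]
The left-hand side is identified from data. $\phi_\varepsilon$ is known and nonvanishing, so dividing recovers $\phi_{\ddot y\mid\cdot}$ for every $s$. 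By the uniqueness theorem for characteristic functions, this identifies the conditional distribution function $F_{\ddot y\mid x,m_0,e_0,w_0}$.

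The second step is to reduce to the no-shock case. Since $\ddot y$ is a strictly monotone function of $\omega$ given $(k_0,l_0,m_0,e_0,w_0)$, we have
\[
  F_{\ddot y\mid\cdot}(\cdot) = F_{\omega\mid\cdot}\bigl(f_t^{-1}(\cdot)\bigr)
\]
in the increasing case, and the reflected version in the decreasing case. The posterior $F_{\omega\mid x,m_0,e_0,w_0}$ is identified by Theorem~\ref{thm:density_id} and equation~\eqref{eq:bayes_omega}; that theorem uses only the input demand system and the conditional independence of $(\tau,\nu,\eta)$, so the presence of $\varepsilon$ is irrelevant there. The quantile-matching formula
\[
  f_t(k_0,l_0,m_0,e_0,w_0,\omega) = F_{\ddot y\mid\cdot}^{-1}\bigl(F_{\omega\mid\cdot}(\omega)\bigr)
\]
then delivers $f_t$ exactly as in Theorem~\ref{thm:non_hicks_no_eps}. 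Letting $(m_0,e_0,w_0)$ range over the support gives identification as a function of $(m,e,w,\omega)$.

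The direction of monotonicity also needs to be pinned down. I would argue that it is identified from the sign of the covariance between $\ddot y$ and $\omega$ under the posterior, or simply absorb it by the convention that $f_t$ is increasing in productivity.

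I expect the main obstacle to be the support and regularity bookkeeping rather than the algebra. Three points need care:
\begin{enumerate}
\item The conditional densities given $(x,m_0,e_0,w_0)$ must be well-defined at each evaluation point; this means working on the support where $f_{m,e,w\mid x}>0$.
\item The posterior of $\omega$ must have a strictly increasing CDF on its support for the quantile inversion to be unambiguous. It has a density by equation~\eqref{eq:bayes_omega}, but if $\omega$'s support has gaps, $F_{\omega\mid\cdot}$ is flat there, so $f_t$ would only be identified on the support of $\omega$ given $(x,m_0,e_0,w_0)$.
\item Independence of $\varepsilon$ from $(m,e,w)$ must be used jointly with $\omega$, so that conditioning on the inputs does not distort the convolution.
\end{enumerate}
I would state the result as identification on the conditional support and note that the no-zeros condition on $\phi_\varepsilon$ is what makes the division valid at every frequency, which is exactly what uniqueness requires.
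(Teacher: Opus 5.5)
Your proposal is correct and follows the paper's proof. You deconvolve the identified conditional law of $y$ given $(x,m_0,e_0,w_0)$ by the known, nonvanishing $\varphi_\varepsilon$ to recover the law of the noiseless output, then quantile-match it against the posterior of $\omega$ from Theorem~\ref{thm:density_id} and equation~\eqref{eq:bayes_omega}. Your support bookkeeping is a sound refinement that the paper omits.

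Drop the suggestion that the direction of monotonicity is identified from the posterior covariance of $\ddot y$ and $\omega$. That covariance depends on the coupling $\ddot y = f_t(\cdot,\omega)$, which is the very object being identified. Within a cell, both the increasing and the decreasing rearrangement of the posterior onto $F_{\ddot y\mid\cdot}$ reproduce the same conditional law of $y$. The direction must therefore be fixed by assumption, as your fallback convention does and as the paper's quantile formula implicitly presumes.
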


\begin{proof}
Fix $(k_0, l_0)$ and $(m_0, e_0, w_0)$. Setting 
$s = f_t(k_0, l_0, m_0, e_0, w_0, \omega)$, the observed 
conditional density becomes a convolution:
$f_{y \mid \cdot}(y) = (f_\varepsilon * \tilde{K})(y)$. Taking 
characteristic functions and using 
$\varphi_\varepsilon(t) \neq 0$, one can recover
$\tilde{K}$ by deconvolution. With 
$K \equiv f_{\omega \mid x, m_0, e_0, w_0}$ identified, 
quantile matching recovers $f_t$.
\end{proof}

\begin{remark}
Theorem~\ref{thm:non_hicks_known_eps} assumes $f_\varepsilon$ is 
fully known. When $f_\varepsilon$ belongs to a parametric family 
(e.g., $N(0, \sigma^2)$) with unknown parameters, these can 
typically be recovered from the decay rate of the characteristic 
function.
\end{remark}


\section{Proof of Theorem~\ref{thm:obs_equiv}}
\label{app:proof_obs_equiv}

\begin{proof}
\textit{Sufficiency.} Take any continuous function $\Delta(k, l)$ 
and define $\tilde{\omega}$ and $\tilde{f}_t$ 
by~\eqref{eq:obs_equiv}. Then 
$\tilde{f}_t(k, l, m, e, w, \tilde{\omega}) + \varepsilon 
= f_t(k, l, m, e, w, \omega) + \varepsilon = y$, so the 
distribution of output is unchanged. For intermediate input demands, 
$f_{m \mid \tilde{\omega}, k, l}(m \mid \tilde{\omega}) 
= f_{m \mid \omega, k, l}
(m \mid \tilde{\omega} - \Delta(k, l))$, so the observable joint 
distribution $f_{y, m, e, w \mid k, l}$ is invariant for each 
fixed $(k, l)$.

\medskip

\textit{Necessity.} Suppose $(\tilde{f}_t, \tilde{\omega})$ is 
observationally equivalent to $(f_t, \omega)$. Fix $(k_0, l_0)$ and 
apply the identification procedure of 
Theorem~\ref{thm:density_id} (HS08, Theorem~1).

The proof of HS08's Theorem~1 proceeds in four stages: 
(1)~uniqueness of the spectral decomposition 
(\textcite{dunford1971linear}, Theorem~XV.4.5); 
(2)~fixing the scale of eigenfunctions by the density integration 
condition; (3)~resolving degenerate eigenvalues via 
Assumption~\ref{ass:distinct_eigen}; and (4)~fixing the indexing 
via the location normalization (Assumption~\ref{ass:labeling}).

Stages~(1)--(3) determine the family of conditional densities 
$\{f_{m \mid \omega, k_0, l_0}(\cdot \mid \omega) : 
\omega \in \Omega\}$ uniquely \textit{as an unordered set}. Since 
$(\tilde{f}_t, \tilde{\omega})$ is observationally equivalent, the 
spectral decomposition based on $\tilde{\omega}$ must produce the 
same unordered set. Therefore, there exists a bijection 
$R_{k_0, l_0} \colon \Omega \to \Omega$ such that
\begin{equation}
  f_{m \mid \tilde{\omega}, k_0, l_0}
  (\cdot \mid \tilde{\omega}) 
  = f_{m \mid \omega, k_0, l_0}
  (\cdot \mid R_{k_0, l_0}^{-1}(\tilde{\omega}))
  \quad \text{for all } \tilde{\omega}.
  \label{eq:relabeling_app}
\end{equation}

Applying Stage~(4): under the $\omega$-normalization 
$M[f_{m \mid \omega}(\cdot \mid \omega)] = \omega$, one obtains
\[
  M\!\bigl[f_{m \mid \tilde{\omega}}
  (\cdot \mid \tilde{\omega})\bigr] 
  = R^{-1}(\tilde{\omega}).
\]
If the $\tilde{\omega}$-normalization is imposed, then 
$R^{-1}(\tilde{\omega}) = \tilde{\omega}$ and $R = \mathrm{id}$.

However, the normalization is defined independently for each 
$(k_0, l_0)$. In the true structure, the normalization functional 
$M[f_{m \mid \omega^{\mathrm{true}}, k, l}
(\cdot \mid \omega^{\mathrm{true}})]$ generally depends on 
$(k, l)$, so the normalizations at different $(k, l)$ fix $\omega$ 
at reference points differing by
\[
  c(k_0, l_0) \equiv 
  M\!\bigl[f_{m \mid \omega^{\mathrm{true}}, k_0, l_0}
  (\cdot \mid \omega^{\mathrm{true}})\bigr] 
  - \omega^{\mathrm{true}}.
\]
It follows that 
$\tilde{\omega} = \omega + \Delta(k, l)$ where 
$\Delta(k, l) = c_{\tilde{\omega}}(k, l) - c_\omega(k, l)$ is
continuous (by the assumed continuous dependence of
$f_{m \mid \omega, k, l}$ on $(k, l)$), and 
$\tilde{f}_t(\cdot, \tilde{\omega}) 
= f_t(\cdot, \tilde{\omega} - \Delta(k, l))$.
\end{proof}


\section{Data Generating Process for Monte Carlo Simulation}
\label{app:dgp}

This appendix describes the detailed parameter settings for the 
Data Generating Process (DGP) of the Monte Carlo simulation 
outlined in Section~\ref{sec:simulation}.

\subsection{Common Parameter Settings}

The following parameters are common to all DGPs.

\paragraph{Production Function 
(Equation~(\ref{eq:sim_prod_func})):}
\begin{itemize}
\item Intercept $\beta_0 = 0.1$
\item Capital $\beta_k = 0.2$
\item Labor $\beta_l = 0.3$
\item Intermediate Input $m$: $\beta_m = 0.3$
\item Intermediate Input $e$: $\beta_e = 0.15$
\item Intermediate Input $w$: $\beta_w = 0.1$
\item Measurement Error 
  $\varepsilon_{jt} \sim N(0, \sigma_\varepsilon^2)$, 
  $\sigma_\varepsilon = 0.05$
\end{itemize}

\paragraph{Intermediate Input Demand Functions.}
The demand function coefficients are derived from the first-order 
conditions of cost minimization under input-specific markdowns 
(Appendix~\ref{sec:micro_fnd}). The theoretical benchmark under 
perfect competition yields 
$\gamma_\omega^{\mathrm{bench}} = 1/(1 - S_m) \approx 2.22$, 
where $S_m = \beta_m + \beta_e + \beta_w = 0.55$. 
Input-specific markdowns $\psi_h$ generate heterogeneity in the 
productivity loading coefficients across inputs.
\begin{itemize}
\item $m_{jt} = \gamma_k k_{jt} + \gamma_l l_{jt} 
  + \gamma_\omega \omega_{jt} + \tau_{jt}$
  \begin{itemize}
  \item $(\gamma_k, \gamma_l, \gamma_\omega) = (0.45, 0.65, 2.2)$
  \item $\tau_{jt} = \rho_\tau \tau_{j,t-1} + e_{\tau,jt}$, 
    $\rho_\tau = 0.5$, 
    $\mathrm{Var}(\tau_{jt}) = \sigma_\tau^2 = 0.15^2$
  \end{itemize}
\item $e_{jt} = \delta_k k_{jt} + \delta_l l_{jt} 
  + \delta_\omega \omega_{jt} + \nu_{jt}$
  \begin{itemize}
  \item $(\delta_k, \delta_l, \delta_\omega) = (0.40, 0.60, 2.0)$
  \item $\nu_{jt} = \rho_\nu \nu_{j,t-1} + e_{\nu,jt}$, 
    $\rho_\nu = 0.5$, 
    $\mathrm{Var}(\nu_{jt}) = \sigma_\nu^2 = 0.15^2$
  \end{itemize}
\item $w_{jt} = \zeta_k k_{jt} + \zeta_l l_{jt} 
  + \zeta_\omega \omega_{jt} + \eta_{jt}$
  \begin{itemize}
  \item $(\zeta_k, \zeta_l, \zeta_\omega) = (0.50, 0.70, 1.8)$
  \item $\eta_{jt} = \rho_\eta \eta_{j,t-1} + e_{\eta,jt}$, 
    $\rho_\eta = 0.5$, 
    $\mathrm{Var}(\eta_{jt}) = \sigma_\eta^2 = 0.15^2$
  \end{itemize}
\end{itemize}
The demand shock innovations $e_{\tau,jt}, e_{\nu,jt}, e_{\eta,jt}$ 
follow mutually independent normal distributions. The AR(1) 
structure preserves the conditional independence assumption 
(Assumption~\ref{ass:cond_indep}) at each time point, since each 
shock is generated from its own independent chain.

\paragraph{Dynamic Decisions and Firms' Beliefs:}
\begin{itemize}
\item \textbf{Firms' Beliefs} (Assumed AR(1) in all DGPs):
  $\omega_{jt} = \rho_{\mathrm{belief}}\, \omega_{j,t-1} 
  + \xi_{\mathrm{belief},jt}$, 
  $\rho_{\mathrm{belief}} = 0.8$, stationary variance 
  $\sigma_{\omega,\mathrm{belief}}^2 
  = 0.2^2 / (1 - 0.8^2) \approx 0.111$.
\item \textbf{Capital Accumulation:} 
  $k_{j,t+1} = \log((1 - \delta_{\mathrm{capital}}) 
  \exp(k_{jt}) + i_{jt})$, $\delta_{\mathrm{capital}} = 0.2$.
\item \textbf{Investment Function:} $i_{jt}$ is determined to 
  maximize expected returns under the AR(1) belief, following the 
  mechanism in the Monte Carlo simulation of 
  \textcite{ackerberg2015identification}.
\item \textbf{Labor Decision:} $l_{j,t+1}$ is determined based on 
  the productivity forecast under the AR(1) belief, the 
  predetermined $k_{j,t+1}$, and the exogenous wage 
  $\ln w_{j,t+1}$.
\item \textbf{Wage Process (AR(1)):} 
  $\ln w_{jt} = \rho_{\ln w}\, \ln w_{j,t-1} + \xi_{\ln w,jt}$, 
  $\rho_{\ln w} = 0.3$, $\sigma_{\ln w} = 0.1$.
\item \textbf{Others:} Discount factor $0.95$, investment cost 
  heterogeneity $\sigma_b = 0.6$.
\end{itemize}

\subsection{DGP-Specific Productivity Process Settings}

\paragraph{DGP1: AR(1) Process (Baseline)}
\begin{itemize}
\item $\omega_{jt} = \rho_{\mathrm{dgp1}}\, \omega_{j,t-1} 
  + \xi_{\mathrm{dgp1},jt}$
\item $\rho_{\mathrm{dgp1}} = 0.8$
\item Innovation standard deviation $\sigma_{\xi,\mathrm{dgp1}} 
  = 0.2$ (consistent with firms' beliefs)
\end{itemize}

\paragraph{DGP2: AR(2) Process}
\begin{itemize}
\item $\omega_{jt} = \rho_{1,\mathrm{dgp2}}\, \omega_{j,t-1} 
  + \rho_{2,\mathrm{dgp2}}\, \omega_{j,t-2} 
  + \xi_{\mathrm{dgp2},jt}$
\item $\rho_{1,\mathrm{dgp2}} = 0.6$, 
  $\rho_{2,\mathrm{dgp2}} = 0.3$
\item Innovation standard deviation 
  $\sigma_{\xi,\mathrm{dgp2}} = 0.15$
\end{itemize}

\paragraph{DGP3: Potential Outcome Model}
\begin{itemize}
\item \textbf{Potential Process (Untreated $\omega^0$):}
  $\omega^0_{jt} = \rho_0\, \omega^0_{j,t-1} + \xi_{0,jt}$, 
  $\rho_0 = 0.8$, $\sigma_{\xi 0} = 0.2$
\item \textbf{Potential Process (Treated $\omega^1$):}
  \begin{itemize}
  \item From treated state ($D_{j,t-1} = 1$):
    $\omega^1_{jt} = \rho_1\, \omega^1_{j,t-1} + \Delta
    + \xi_{1,jt}$
  \item From untreated state ($D_{j,t-1} = 0$):
    $\omega^1_{jt} = \rho_1\, \omega^0_{j,t-1} + \Delta
    + \xi_{1,jt}$
  \item $\rho_1 = 0.5$, $\Delta = 0.15$, $\sigma_{\xi 1} = 0.25$
  \end{itemize}
\item \textbf{Reversible Selection:}
  $D_{jt} = \mathbb{I}(\omega^0_{jt} > 0)$.
  Treatment is not an absorbing state: firms enter and exit
  treatment as $\omega^0_{jt}$ crosses the threshold.
  Both potential processes $\omega^0, \omega^1$ evolve
  independently regardless of the current treatment state
  (Diagonal Markov; \textcite{chen2024identifying},
  Assumption~2.1).
\item \textbf{Realized Productivity:} 
  $\omega_{jt} = (1 - D_{jt})\, \omega^0_{jt} 
  + D_{jt}\, \omega^1_{jt}$
\end{itemize}

\subsection{Simulation Execution Settings}
\begin{itemize}
\item Burn-in period $T_{\mathrm{burnin}} = 30$
\item \textbf{Part~1} (Block~A+B): $R = 100$,
  $N \in \{50, 200, 500\}$, $T_{\mathrm{obs}} \in \{10, 20, 50\}$
\item \textbf{Part~2} (Block~A+B+C): $R = 100$,
  $(N, T) = (200, 50)$
\end{itemize}

\section{Additional Results for Monte Carlo Simulation}
\label{sec:mc_additional_results}

This appendix presents detailed simulation results.
Tables report bias, standard deviation, and RMSE for each
parameter, estimation method, and DGP at $(N, T) = (500, 50)$.

\subsection{Part~1: Flexible Input Parameters (Block~A+B)}

\begin{table}
\centering
\caption{\label{tab:mc_part1_dgp1ar1}Part 1: DGP 1: AR(1) (N=500, T=50)}
\centering
\resizebox{\linewidth}{!}{
\fontsize{9}{11}\selectfont
\begin{tabular}[t]{lcrrrrrrrrrrrr}
\toprule
\multicolumn{1}{c}{\textbf{ }} & \multicolumn{1}{c}{\textbf{ }} & \multicolumn{3}{c}{\textbf{ACF}} & \multicolumn{3}{c}{\textbf{ACF-Mod}} & \multicolumn{3}{c}{\textbf{GNR}} & \multicolumn{3}{c}{\textbf{Proposed}} \\
\cmidrule(l{3pt}r{3pt}){3-5} \cmidrule(l{3pt}r{3pt}){6-8} \cmidrule(l{3pt}r{3pt}){9-11} \cmidrule(l{3pt}r{3pt}){12-14}
Parameter & True & Bias & SD & RMSE & Bias & SD & RMSE & Bias & SD & RMSE & Bias & SD & RMSE\\
\midrule
$\beta_m$ & 0.30 & 0.0004 & 0.0009 & 0.0010 & 0.0004 & 0.0010 & 0.0011 & 0.5886 & 0.0011 & 0.5886 & 0.0007 & 0.0035 & 0.0035\\
$\beta_e$ & 0.15 & 0.0006 & 0.0015 & 0.0016 & 0.0005 & 0.0012 & 0.0013 & 0.5480 & 0.0019 & 0.5480 & -0.0008 & 0.0034 & 0.0034\\
$\beta_w$ & 0.10 & 0.0001 & 0.0007 & 0.0007 & 0.0001 & 0.0008 & 0.0008 & 1.0366 & 0.0027 & 1.0366 & 0.0001 & 0.0037 & 0.0037\\
\bottomrule
\end{tabular}}
\end{table}

\begin{table}
\centering
\caption{\label{tab:mc_part1_dgp2ar2}Part 1: DGP 2: AR(2) (N=500, T=50)}
\centering
\resizebox{\linewidth}{!}{
\fontsize{9}{11}\selectfont
\begin{tabular}[t]{lcrrrrrrrrrrrr}
\toprule
\multicolumn{1}{c}{\textbf{ }} & \multicolumn{1}{c}{\textbf{ }} & \multicolumn{3}{c}{\textbf{ACF}} & \multicolumn{3}{c}{\textbf{ACF-Mod}} & \multicolumn{3}{c}{\textbf{GNR}} & \multicolumn{3}{c}{\textbf{Proposed}} \\
\cmidrule(l{3pt}r{3pt}){3-5} \cmidrule(l{3pt}r{3pt}){6-8} \cmidrule(l{3pt}r{3pt}){9-11} \cmidrule(l{3pt}r{3pt}){12-14}
Parameter & True & Bias & SD & RMSE & Bias & SD & RMSE & Bias & SD & RMSE & Bias & SD & RMSE\\
\midrule
$\beta_m$ & 0.30 & 0.0157 & 0.0190 & 0.0245 & 0.0181 & 0.0200 & 0.0268 & 0.5877 & 0.0013 & 0.5877 & 0.0004 & 0.0037 & 0.0036\\
$\beta_e$ & 0.15 & 0.0178 & 0.0219 & 0.0280 & 0.0206 & 0.0229 & 0.0306 & 0.5476 & 0.0029 & 0.5476 & -0.0005 & 0.0032 & 0.0033\\
$\beta_w$ & 0.10 & 0.0021 & 0.0032 & 0.0038 & 0.0014 & 0.0030 & 0.0033 & 1.0347 & 0.0033 & 1.0347 & 0.0004 & 0.0041 & 0.0041\\
\bottomrule
\end{tabular}}
\end{table}

\begin{table}
\centering
\caption{\label{tab:mc_part1_dgp3potential}Part 1: DGP 3: Potential (N=500, T=50)}
\centering
\resizebox{\linewidth}{!}{
\fontsize{9}{11}\selectfont
\begin{tabular}[t]{lcrrrrrrrrrrrr}
\toprule
\multicolumn{1}{c}{\textbf{ }} & \multicolumn{1}{c}{\textbf{ }} & \multicolumn{3}{c}{\textbf{ACF}} & \multicolumn{3}{c}{\textbf{ACF-Mod}} & \multicolumn{3}{c}{\textbf{GNR}} & \multicolumn{3}{c}{\textbf{Proposed}} \\
\cmidrule(l{3pt}r{3pt}){3-5} \cmidrule(l{3pt}r{3pt}){6-8} \cmidrule(l{3pt}r{3pt}){9-11} \cmidrule(l{3pt}r{3pt}){12-14}
Parameter & True & Bias & SD & RMSE & Bias & SD & RMSE & Bias & SD & RMSE & Bias & SD & RMSE\\
\midrule
$\beta_m$ & 0.30 & 0.1947 & 0.0144 & 0.1952 & 0.2325 & 0.0285 & 0.2342 & 0.5888 & 0.0014 & 0.5888 & 0.0012 & 0.0039 & 0.0041\\
$\beta_e$ & 0.15 & 0.1868 & 0.0090 & 0.1871 & 0.1707 & 0.0204 & 0.1719 & 0.5408 & 0.0029 & 0.5408 & -0.0014 & 0.0037 & 0.0039\\
$\beta_w$ & 0.10 & 0.0886 & 0.0117 & 0.0893 & 0.0602 & 0.0122 & 0.0614 & 1.0299 & 0.0027 & 1.0300 & 0.0001 & 0.0031 & 0.0031\\
\bottomrule
\end{tabular}}
\end{table}

\subsection{Part~2: All Parameters (Block~A+B+C)}

\begin{table}
\centering
\caption{\label{tab:mc_part2_dgp1ar1}Part 2: DGP 1: AR(1) (N=200, T=50)}
\centering
\resizebox{\linewidth}{!}{
\fontsize{9}{11}\selectfont
\begin{tabular}[t]{lcrrrrrr}
\toprule
\multicolumn{1}{c}{\textbf{ }} & \multicolumn{1}{c}{\textbf{ }} & \multicolumn{3}{c}{\textbf{ACF}} & \multicolumn{3}{c}{\textbf{Proposed}} \\
\cmidrule(l{3pt}r{3pt}){3-5} \cmidrule(l{3pt}r{3pt}){6-8}
Parameter & True & Bias & SD & RMSE & Bias & SD & RMSE\\
\midrule
$\beta_k$ & 0.20 & 0.0005 & 0.0064 & 0.0063 & 0.0016 & 0.0119 & 0.0119\\
$\beta_l$ & 0.30 & -0.0017 & 0.0047 & 0.0050 & 0.0019 & 0.0226 & 0.0225\\
$\beta_m$ & 0.30 & 0.0005 & 0.0021 & 0.0021 & 0.0002 & 0.0126 & 0.0125\\
$\beta_e$ & 0.15 & 0.0009 & 0.0037 & 0.0038 & 0.0023 & 0.0089 & 0.0091\\
$\beta_w$ & 0.10 & -0.0000 & 0.0009 & 0.0009 & -0.0012 & 0.0115 & 0.0115\\
\bottomrule
\end{tabular}}
\end{table}

\begin{table}
\centering
\caption{\label{tab:mc_part2_dgp2ar2}Part 2: DGP 2: AR(2) (N=200, T=50)}
\centering
\resizebox{\linewidth}{!}{
\fontsize{9}{11}\selectfont
\begin{tabular}[t]{lcrrrrrr}
\toprule
\multicolumn{1}{c}{\textbf{ }} & \multicolumn{1}{c}{\textbf{ }} & \multicolumn{3}{c}{\textbf{ACF}} & \multicolumn{3}{c}{\textbf{Proposed}} \\
\cmidrule(l{3pt}r{3pt}){3-5} \cmidrule(l{3pt}r{3pt}){6-8}
Parameter & True & Bias & SD & RMSE & Bias & SD & RMSE\\
\midrule
$\beta_k$ & 0.20 & -0.0211 & 0.0269 & 0.0340 & 0.0108 & 0.0133 & 0.0170\\
$\beta_l$ & 0.30 & -0.0394 & 0.0540 & 0.0665 & -0.0021 & 0.0192 & 0.0191\\
$\beta_m$ & 0.30 & 0.0187 & 0.0239 & 0.0301 & -0.0019 & 0.0128 & 0.0128\\
$\beta_e$ & 0.15 & 0.0228 & 0.0303 & 0.0376 & 0.0047 & 0.0102 & 0.0111\\
$\beta_w$ & 0.10 & 0.0022 & 0.0045 & 0.0050 & -0.0002 & 0.0099 & 0.0098\\
\bottomrule
\end{tabular}}
\end{table}

\begin{table}
\centering
\caption{\label{tab:mc_part2_dgp3potential}Part 2: DGP 3: Potential (N=200, T=50)}
\centering
\resizebox{\linewidth}{!}{
\fontsize{9}{11}\selectfont
\begin{tabular}[t]{lcrrrrrr}
\toprule
\multicolumn{1}{c}{\textbf{ }} & \multicolumn{1}{c}{\textbf{ }} & \multicolumn{3}{c}{\textbf{ACF}} & \multicolumn{3}{c}{\textbf{Proposed}} \\
\cmidrule(l{3pt}r{3pt}){3-5} \cmidrule(l{3pt}r{3pt}){6-8}
Parameter & True & Bias & SD & RMSE & Bias & SD & RMSE\\
\midrule
$\beta_k$ & 0.20 & -0.1972 & 0.0190 & 0.1981 & 0.0123 & 0.0104 & 0.0161\\
$\beta_l$ & 0.30 & -0.2954 & 0.0287 & 0.2967 & 0.0116 & 0.0139 & 0.0180\\
$\beta_m$ & 0.30 & 0.1902 & 0.0254 & 0.1918 & -0.0057 & 0.0110 & 0.0122\\
$\beta_e$ & 0.15 & 0.1860 & 0.0218 & 0.1872 & -0.0009 & 0.0087 & 0.0087\\
$\beta_w$ & 0.10 & 0.0878 & 0.0144 & 0.0890 & -0.0033 & 0.0083 & 0.0088\\
\bottomrule
\end{tabular}}
\end{table}

\subsection{Additional Monte Carlo Figures}

\begin{figure}[htbp]
\centering
\includegraphics[width=\textwidth]{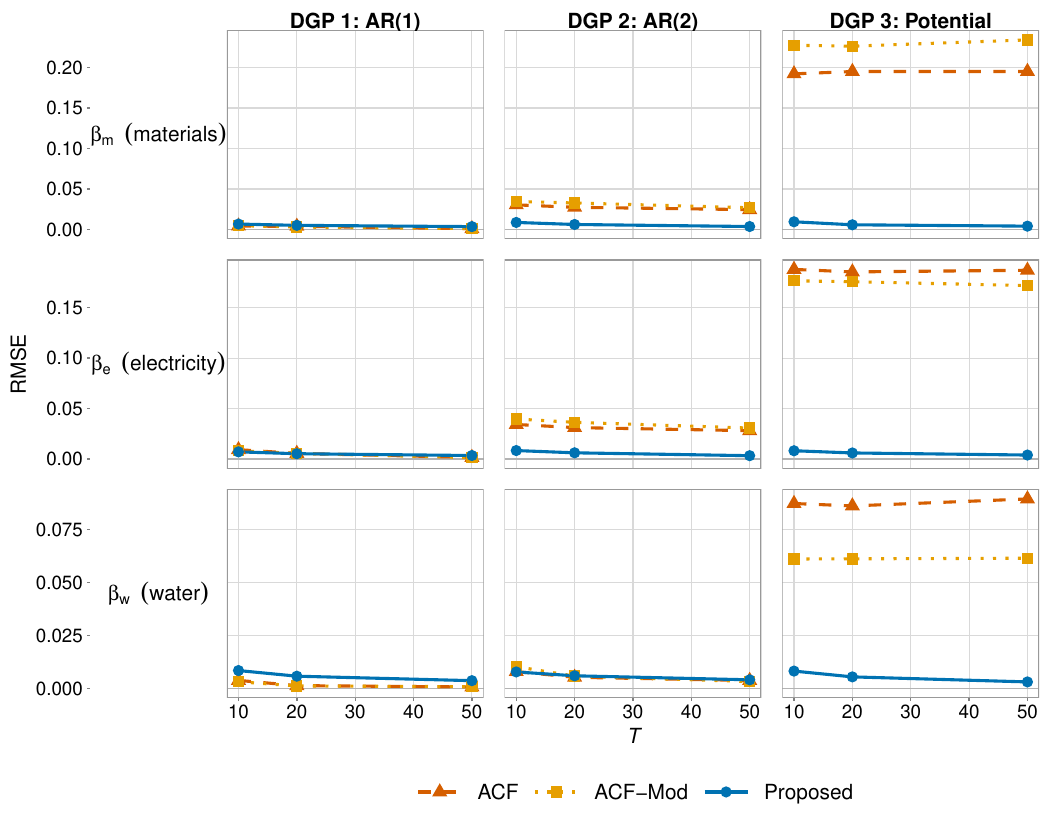}
\caption{Part~1: Mean RMSE Convergence ($N = 500$)}
\label{fig:rmse_convergence_app}

\small\raggedright \textit{Notes: Mean RMSE of $(\hat{\beta}_m, \hat{\beta}_e, \hat{\beta}_w)$ as a function of $T$ ($N = 500$, $R = 100$). Under DGP~2 and DGP~3, the ACF and ACF-Mod RMSEs do not vanish with $T$, reflecting asymptotic bias. The proposed method's RMSE declines monotonically.}
\end{figure}

\begin{figure}[htbp]
\centering
\includegraphics[width=\textwidth]{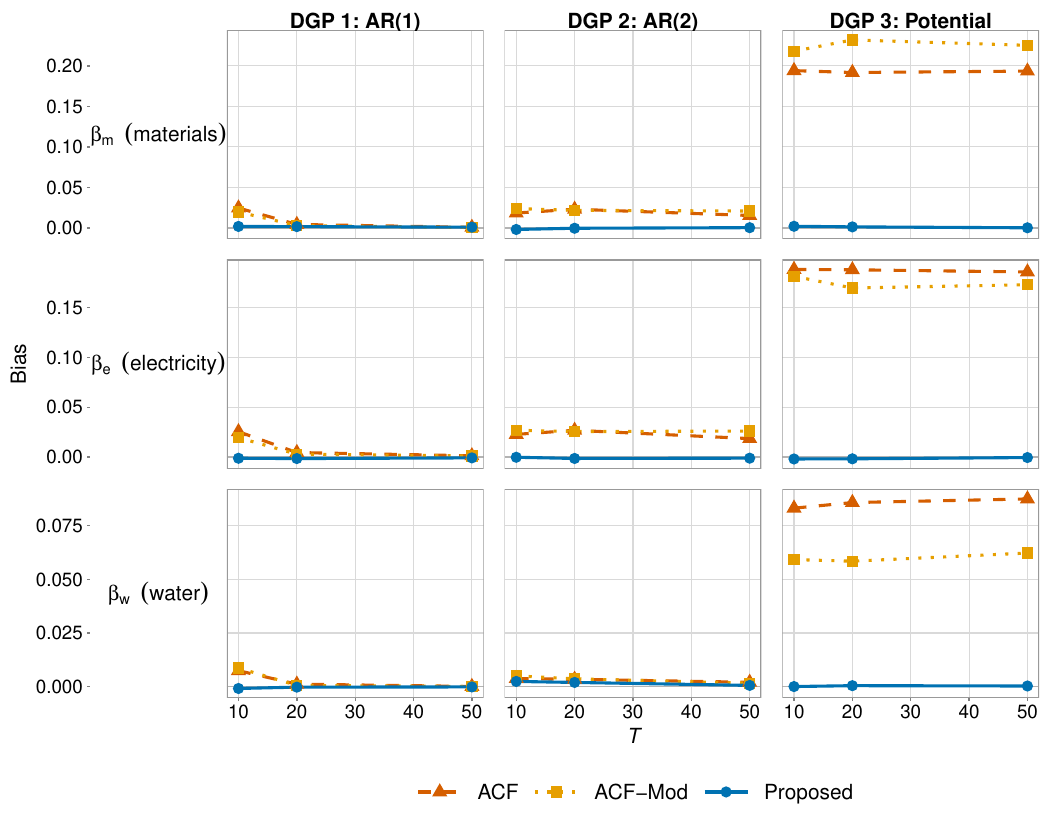}
\caption{Part~1: Mean Bias Convergence ($N = 200$)}
\label{fig:bias_convergence_N200}

\small\raggedright \textit{Notes: Same as Figure~\ref{fig:bias_convergence} but for $N = 200$. The qualitative patterns are preserved; the larger variance reflects the smaller cross-section.}
\end{figure}

\begin{figure}[htbp]
\centering
\includegraphics[width=\textwidth]{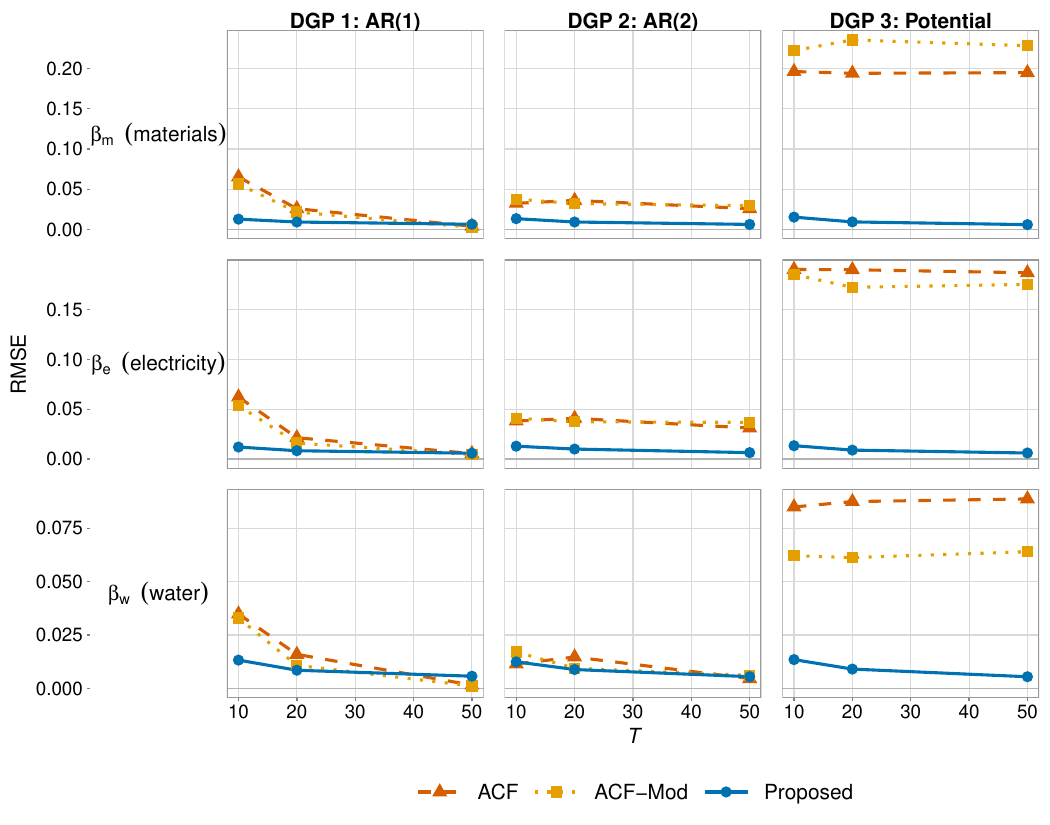}
\caption{Part~1: Mean RMSE Convergence ($N = 200$)}
\label{fig:rmse_convergence_N200}

\small\raggedright \textit{Notes: Mean RMSE of $(\hat{\beta}_m, \hat{\beta}_e, \hat{\beta}_w)$ as a function of $T$ ($N = 200$, $R = 100$).}
\end{figure}

\begin{figure}[htbp]
\centering
\includegraphics[width=\textwidth]{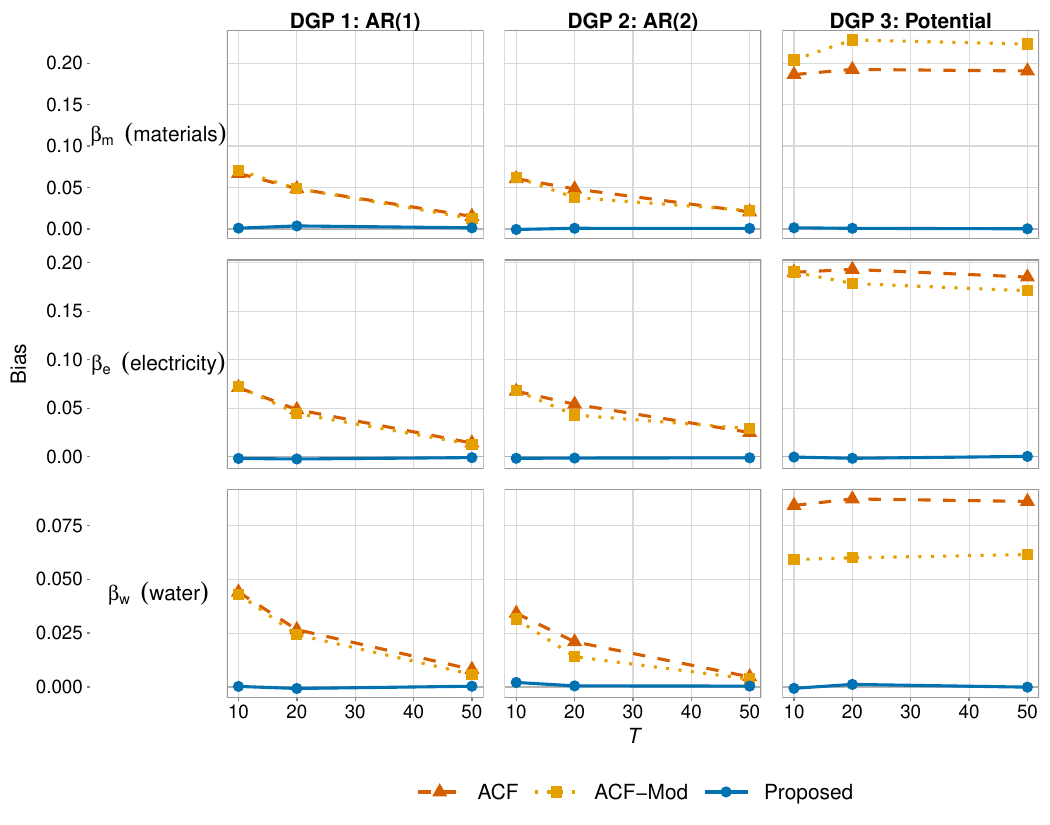}
\caption{Part~1: Mean Bias Convergence ($N = 50$)}
\label{fig:bias_convergence_N50}

\small\raggedright \textit{Notes: Same as Figure~\ref{fig:bias_convergence} but for $N = 50$. The qualitative patterns are preserved; the larger variance reflects the smaller cross-section.}
\end{figure}

\begin{figure}[htbp]
\centering
\includegraphics[width=\textwidth]{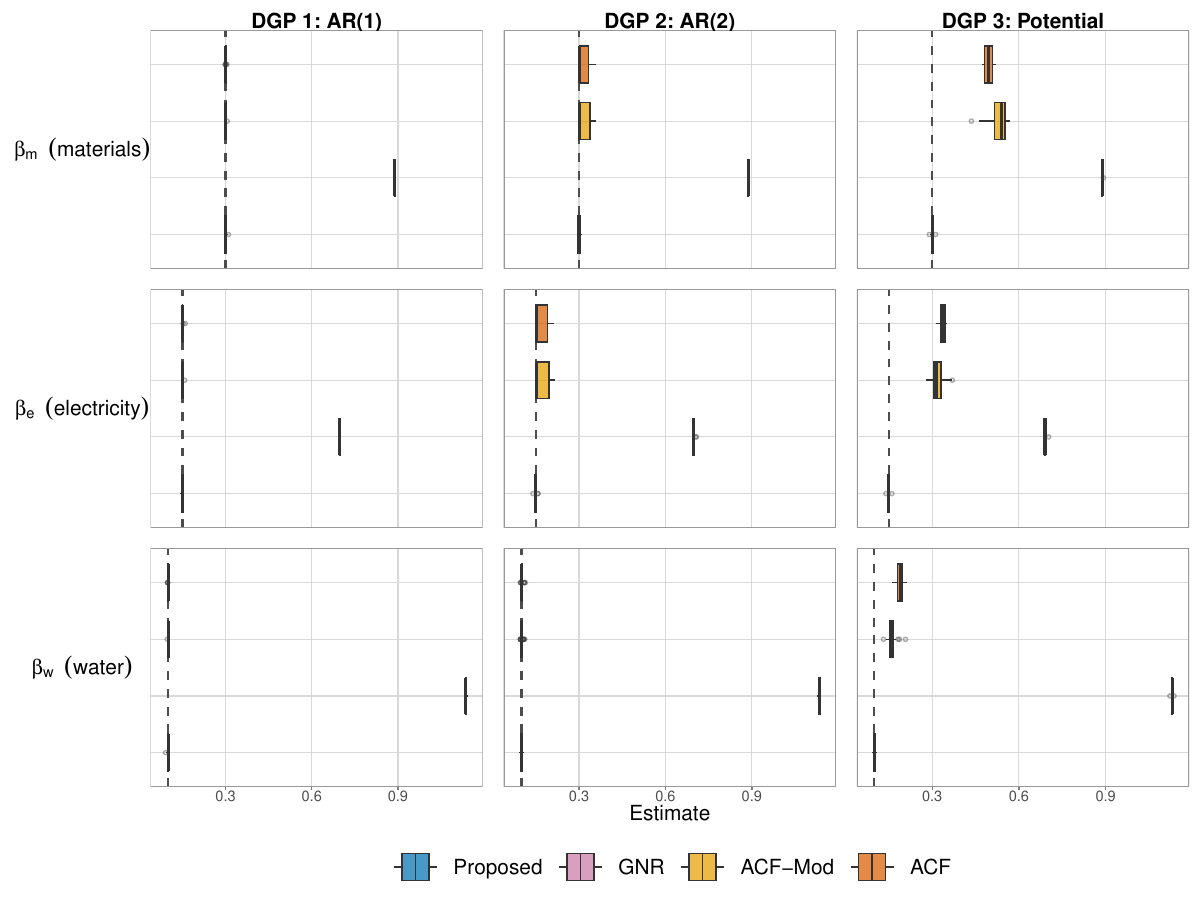}
\caption{Part~1: Four-Method Comparison ($N=500$, $T=50$)}
\label{fig:boxplot_all4_app}

\small\raggedright \textit{Notes: Distribution of $(\hat{\beta}_m, \hat{\beta}_e, \hat{\beta}_w)$ including the GNR estimator ($N = 500$, $T = 50$). The GNR estimates are severely biased ($\text{Bias}(\hat{\beta}_m) \approx 0.59$) due to persistent demand shocks ($\rho_\tau = 0.5$) violating the scalar unobservability assumption. The main text figures (Figures~\ref{fig:bias_convergence}--\ref{fig:boxplot_part1}) exclude GNR to preserve visual clarity for the ACF--Proposed comparison.}
\end{figure}

\begin{figure}[htbp]
\centering
\includegraphics[width=\textwidth]{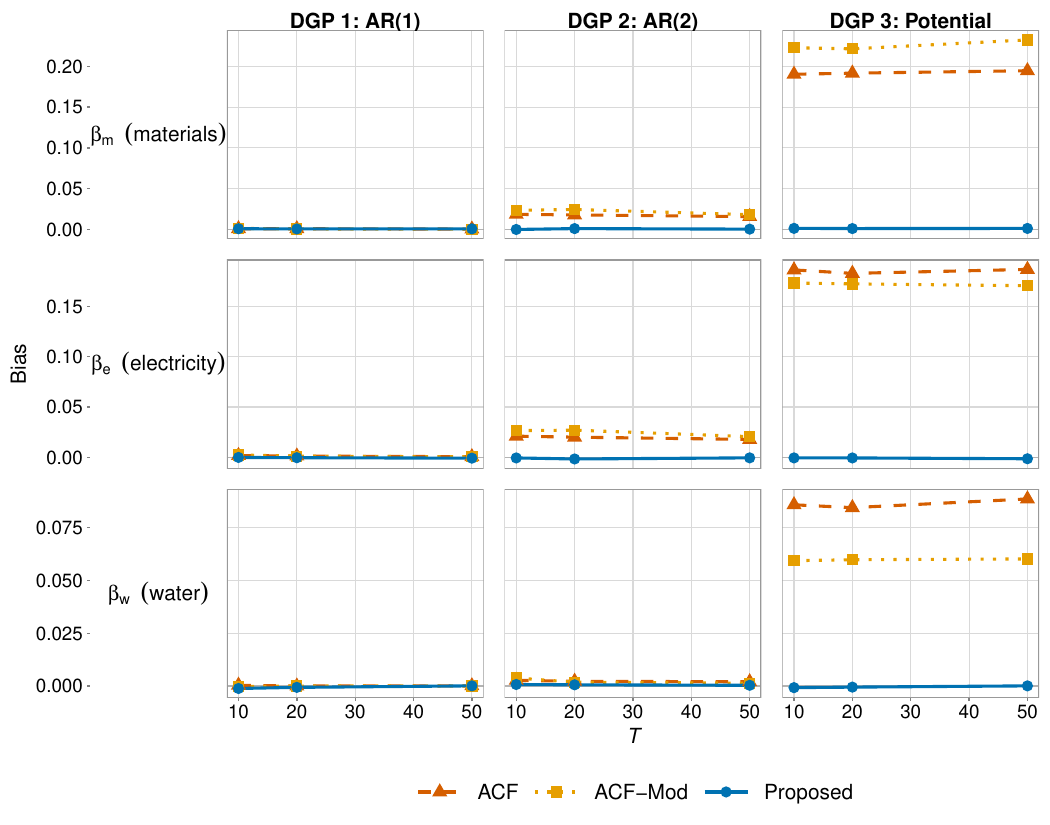}
\caption{Part~1: Three-Method Bias Convergence ($N = 500$)}
\label{fig:bias_convergence_3methods}

\small\raggedright \textit{Notes: Same as Figure~\ref{fig:bias_convergence} but including the ACF-Mod (oracle) estimator that observes the true demand shock $\tau_{jt}$. Under DGP~2 and~3, ACF-Mod bias is comparable to or larger than standard ACF.}
\end{figure}

\begin{figure}[htbp]
\centering
\includegraphics[width=\textwidth]{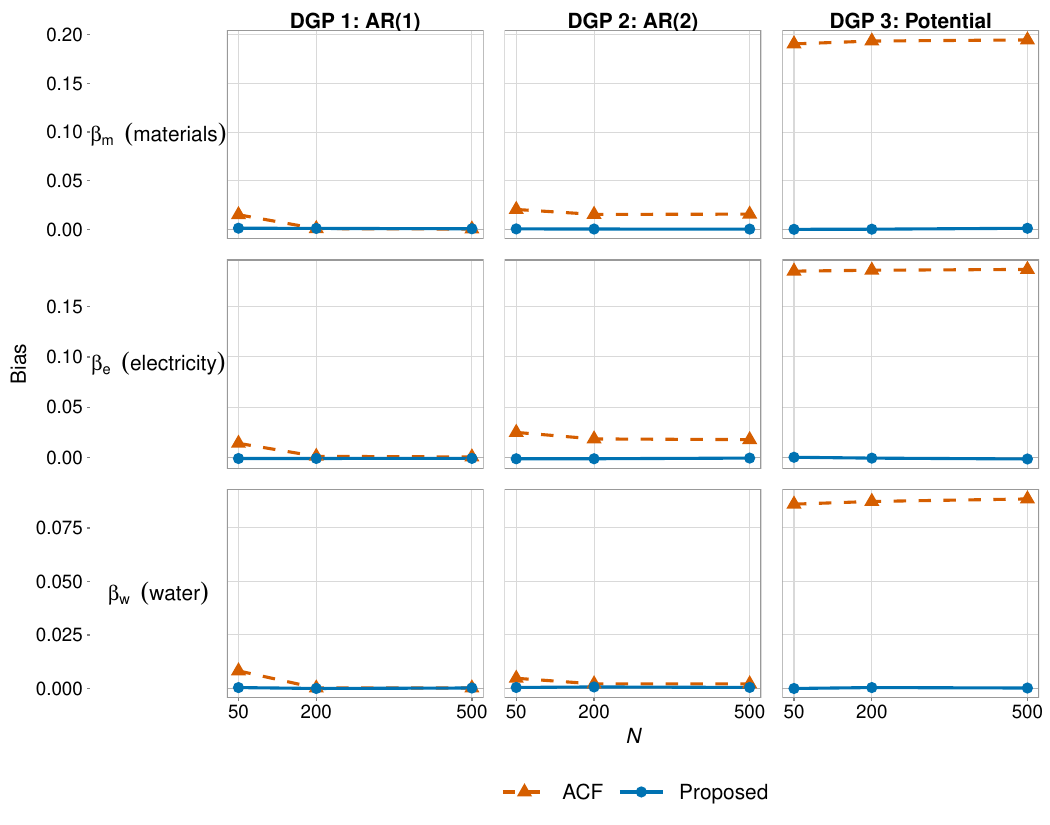}
\caption{Part~1: Mean Bias as a Function of $N$ ($T = 50$)}
\label{fig:bias_convergence_byN}

\small\raggedright \textit{Notes: Mean bias of $(\hat{\beta}_m, \hat{\beta}_e, \hat{\beta}_w)$ as a function of $N$ for $T = 50$ ($R = 100$). Increasing $N$ reduces variance for all estimators. ACF bias under DGP~2 and~3 does not vanish with $N$, confirming asymptotic bias.}
\end{figure}

\begin{figure}[htbp]
\centering
\includegraphics[width=\textwidth]{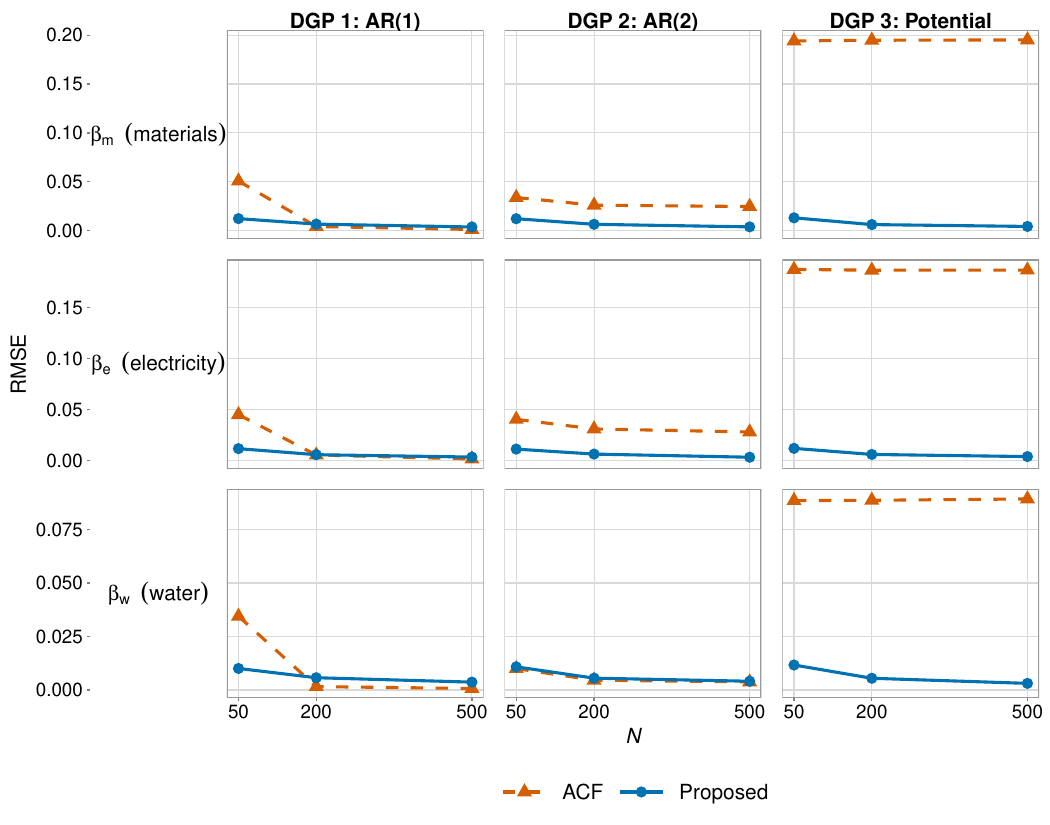}
\caption{Part~1: RMSE as a Function of $N$ ($T = 50$)}
\label{fig:rmse_convergence_byN}

\small\raggedright \textit{Notes: Root mean squared error of $(\hat{\beta}_m, \hat{\beta}_e, \hat{\beta}_w)$ as a function of $N$ for $T = 50$ ($R = 100$). Under DGP~1 (correct Markov specification), the RMSE of both estimators converges to zero at similar rates. Under DGP~2 and~3, ACF RMSE is bounded away from zero because bias dominates, whereas the proposed estimator's RMSE continues to decrease with $N$.}
\end{figure}

\subsection{DGP4: Conditional Independence Violation}
\label{subsec:mc_dgp4}

\begin{figure}[htbp]
\centering
\includegraphics[width=0.7\textwidth]{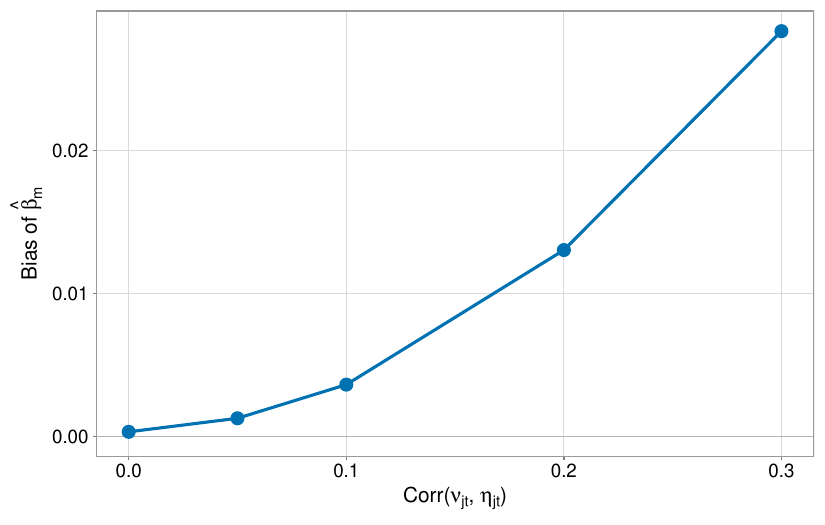}
\caption{DGP~4: Bias of $\hat{\beta}_m$ as a Function of $\mathrm{Corr}(\nu,\eta)$}
\label{fig:dgp4_bias_vs_rho}

\small\raggedright \textit{Notes: Mean bias of $\hat{\beta}_m$ as a function of $\mathrm{Corr}(\nu_{jt}, \eta_{jt})$ for the proposed method ($N=200$, $T=50$, $R=20$). When $\mathrm{Corr}(\nu, \eta) = 0$, the estimator is approximately unbiased at the true value $\beta_m = 0.30$. As the electricity--water correlation increases, $\hat{\zeta}_\omega$ is overestimated, causing upward bias in $\hat{\beta}_m$ (Appendix~\ref{app:ci_violation}). The bias direction is the same as the Markov misspecification bias in ACF, so the empirical gap (ACF $>$ Proposed) cannot be attributed to CI violation.}
\end{figure}

\begin{table}[!h]
\centering
\caption{\label{tab:mc_dgp4_ci}DGP 4 (CI Violation): $\hat{\beta}_m$ Bias, SD, and RMSE by $\mathrm{Corr}(\nu, \eta)$. $N=200$, $T=50$, $R=20$.}
\centering
\fontsize{9}{11}\selectfont
\begin{tabular}[t]{crrr}
\toprule
$\text{Corr}(\nu,\eta)$ & Bias & SD & RMSE\\
\midrule
0.0000 & 0.0003 & 0.0056 & 0.0054\\
0.0500 & 0.0013 & 0.0052 & 0.0052\\
0.1000 & 0.0036 & 0.0052 & 0.0062\\
0.2000 & 0.0130 & 0.0048 & 0.0139\\
0.3000 & 0.0283 & 0.0044 & 0.0287\\
\bottomrule
\end{tabular}
\end{table}

\begin{figure}[htbp]
\centering
\includegraphics[width=\textwidth]{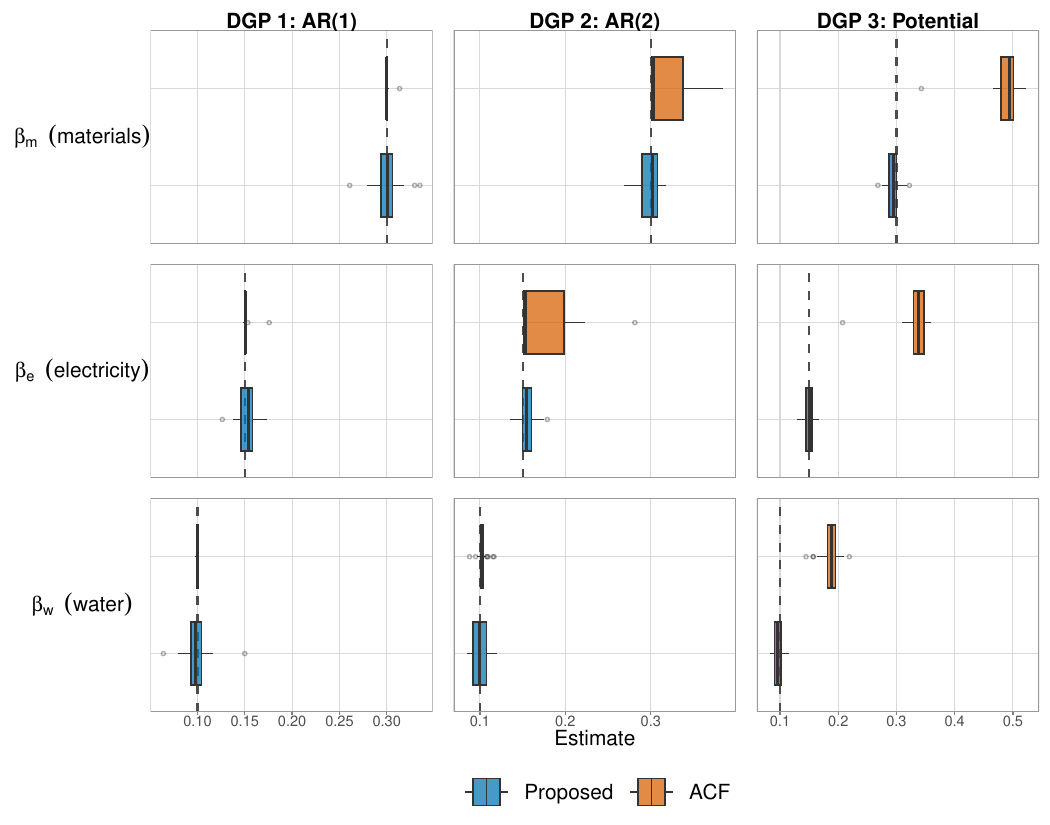}
\caption{Part~2: Flexible Input Parameters Under Block~A+B+C}
\label{fig:boxplot_part2_mew_app}

\small\raggedright \textit{Notes: Distribution of $(\hat{\beta}_m, \hat{\beta}_e, \hat{\beta}_w)$ from Part~2 (Block~A+B+C). These parameters are identified by Blocks~A and B alone; comparison with Part~1 confirms that the addition of Block~C moments does not contaminate the flexible input estimates.}
\end{figure}

\subsection{Block~C Identification Diagnostics}
\label{subsec:blockc_diagnostics}

The significance of $\hat{\rho}_2$ and $\hat{\rho}_3$ provides a
practical diagnostic for Block~C identification strength.
When these coefficients are statistically significant, the nonlinear
component $h(v)$ is identified beyond the linear term, enabling
separate estimation of $\beta_k$ and $\beta_l$.
When both are statistically insignificant, the capital-labor aggregate
$v(k,l)$ is empirically indistinguishable from a Cobb--Douglas form,
and the rank condition in Theorem~\ref{thm:homothetic_id} is not met
in the data: $(\beta_k, \beta_l)$ cannot be separately identified
through Block~C for such industries. This is a testable,
data-driven indicator of the proximity to the identification boundary
described in Section~\ref{sec:homothetic}.
In the empirical application (Section~\ref{sec:empirical}), I report
$t$-statistics for $\hat{\rho}_2$ and $\hat{\rho}_3$ alongside the
$J$-test to assess Block~C reliability for each industry.

\subsection{Block~A+B Only vs.\ Block~A+B+C Comparison}

A robustness check compares estimates obtained using Block~A+B moments alone against estimates using the full Block~A+B+C system.
Parameters identified by Block~A+B (namely $(\beta_m, \beta_e, \beta_w)$ and derived quantities such as the markup $\mu = \beta_m / s_m$) should remain stable across the two specifications.
Instability would indicate misspecification of the Block~C moments or weak identification of the CES aggregator structure.
See Table~\ref{tab:4group} and Section~\ref{sec:empirical} for the empirical comparison.


\section{Additional Results for Empirical Analysis}
\label{sec:empirical_additional}
\label{sec:appendix_empirical}

\FloatBarrier
\subsection{Cross-Industry Distribution of Parameter Estimates}

Table~\ref{tab:all_industry_beta_dist} reports the cross-industry
distribution (median, mean, SD) of the intermediate input elasticity
estimates ($\hat{\beta}_m$, $\hat{\beta}_e$, $\hat{\beta}_w$)
for both the proposed method (Panel~A) and ACF (Panel~B).
Capital and labor elasticities ($\hat{\beta}_k$, $\hat{\beta}_l$)
across all three methods are reported in Table~\ref{tab:beta_kl_summary}.

\begin{table}[!h]
\centering
\caption{\label{tab:all_industry_beta_dist}Cross-Industry Distribution of Intermediate Input Elasticity Estimates}
\centering
\begin{threeparttable}
\fontsize{11}{13}\selectfont
\begin{tabular}[t]{lrrrr}
\toprule
Parameter & N & Median & Mean & SD\\
\midrule
\addlinespace[0.3em]
\multicolumn{5}{l}{\textbf{Panel~A: Proposed Method ($N=502$)}}\\
\hspace{1em}$\hat{\beta}_m$ (Material) & 502 & 0.491 & 0.422 & 0.271\\
\hspace{1em}$\hat{\beta}_e$ (Electricity) & 502 & 0.001 & 0.079 & 0.191\\
\hspace{1em}$\hat{\beta}_w$ (Water) & 502 & 0.006 & 0.142 & 0.300\\
\addlinespace[0.3em]
\multicolumn{5}{l}{\textbf{Panel~B: ACF ($N=500$)}}\\
\hspace{1em}$\hat{\beta}_m$ (Material) & 500 & 0.565 & 0.535 & 0.191\\
\hspace{1em}$\hat{\beta}_e$ (Electricity) & 500 & 0.072 & 0.115 & 0.143\\
\hspace{1em}$\hat{\beta}_w$ (Water) & 500 & 0.017 & 0.055 & 0.095\\
\bottomrule
\end{tabular}
\begin{tablenotes}
\item \textit{Note: } 
\item Outliers $|\hat{\beta}|>2$ excluded from ACF panel.
\end{tablenotes}
\end{threeparttable}
\end{table}

\FloatBarrier
\subsection{Identification Cross-Check: Exclusion Restriction vs.\ Block~C}
\label{subsec:crosscheck_4group}

Table~\ref{tab:4group} presents the four-group comparison of
$\hat{\beta}_k$ and $\hat{\beta}_l$.
Groups~(i) and~(i-a) form the identification cross-check: the
exclusion restriction and Block~C applied to the same
\NconsGroup{} exclusion-consistent industries yield closely aligned
estimates of $\beta_k$ (median 0.010 vs.\ 0.039) and $\beta_l$
(median 0.219 vs.\ 0.341), indicating that the two conceptually
distinct identification strategies converge.
Groups~(ii) and~(iii) provide diagnostic contrast.

\begin{table}[ht]
\centering
\resizebox{\linewidth}{!}{\begin{tabular}{llrrrrrrr}
  \toprule
Group & Method & $N$ & $\hat{\beta}_k$ Median & $\hat{\beta}_k$ Mean & $\hat{\beta}_k$ SD & $\hat{\beta}_l$ Median & $\hat{\beta}_l$ Mean & $\hat{\beta}_l$ SD \\ 
  \midrule
(i)~~Excl. (consistent, $N=302$) & Excl. restriction & 302 & 0.010 & 0.010 & 0.038 & 0.219 & 0.204 & 0.208 \\ 
  (i-a) Block~C (consistent, $N=302$) & Block C & 302 & 0.039 & 0.048 & 0.046 & 0.341 & 0.339 & 0.192 \\ 
  (ii)~Excl. (inconsistent, $N=200$) & Excl. restriction & 200 & -0.013 & -0.073 & 0.555 & 0.184 & 0.093 & 0.573 \\ 
  (iii) Block~C (all, $N=502$) & Block C & 502 & 0.035 & 0.048 & 0.054 & 0.332 & 0.336 & 0.224 \\ 
   \midrule
\multicolumn{9}{l}{\footnotesize \textit{Notes:} Outliers $|\hat{\beta}|>2$ excluded. Groups~(i) and~(i-a) are applied to the identical set of industries, forming the identification cross-check.}\\[2pt]
 \bottomrule
\end{tabular}}
\caption{Four-Group Comparison of $\hat{\beta}_k$ and $\hat{\beta}_l$: Exclusion Restriction vs.\ Block~C} 
\label{tab:4group}
\end{table}

\FloatBarrier
\subsection{Demand Function Parameter Estimates}

Table~\ref{tab:all_industry_demand_dist} reports the cross-industry
distribution (median, mean, SD) of the input demand function
parameter estimates. Panel~A covers the productivity loading
parameters ($\hat{\gamma}_\omega$, $\hat{\delta}_\omega$,
$\hat{\zeta}_\omega$) identified by Block~B; Panel~B covers the
demand slopes on capital and labor.

\begin{table}[!h]
\centering
\caption{\label{tab:all_industry_demand_dist}Cross-Industry Distribution of Input Demand Function Parameter Estimates (Block~A+B)}
\centering
\begin{threeparttable}
\fontsize{11}{13}\selectfont
\begin{tabular}[t]{lrrrr}
\toprule
Parameter & N & Median & Mean & SD\\
\midrule
\addlinespace[0.3em]
\multicolumn{5}{l}{\textbf{Panel~A: Productivity Loading (Block~B)}}\\
\hspace{1em}$\hat{\gamma}_{\omega}$ (Material) & 502 & 1.547 & 2.498 & 3.239\\
\hspace{1em}$\hat{\delta}_{\omega}$ (Water) & 502 & 1.334 & 4.537 & 5.819\\
\hspace{1em}$\hat{\zeta}_{\omega}$ (Electricity) & 502 & 1.325 & 3.210 & 4.676\\
\addlinespace[0.3em]
\multicolumn{5}{l}{\textbf{Panel~B: Demand Slopes on $(k, l)$}}\\
\hspace{1em}$\hat{\gamma}_k$ (Capital) & 502 & 0.017 & 0.002 & 0.252\\
\hspace{1em}$\hat{\gamma}_l$ (Labor) & 502 & 0.010 & -0.253 & 1.541\\
\hspace{1em}$\hat{\delta}_k$ (Capital) & 502 & -0.050 & -0.045 & 0.484\\
\hspace{1em}$\hat{\delta}_l$ (Labor) & 502 & -0.978 & -1.280 & 2.444\\
\hspace{1em}$\hat{\zeta}_k$ (Capital) & 502 & -0.028 & -0.040 & 0.466\\
\hspace{1em}$\hat{\zeta}_l$ (Labor) & 502 & -0.246 & -0.604 & 2.203\\
\bottomrule
\end{tabular}
\begin{tablenotes}
\item \textit{Note: } 
\item 502 manufacturing industries (Block~A+B convergence). Outliers $|\hat{\theta}| > 10$ excluded from summary statistics.
\end{tablenotes}
\end{threeparttable}
\end{table}

\FloatBarrier
\subsection{Event Study Results}
\label{subsec:earthquake_es}

See Section~\ref{subsec:event_study_body} for the main DiD analysis
(Table~\ref{tab:earthquake_did}).
Table~\ref{tab:earthquake_es} below reports the full
\textcite{sun2021estimating} year-by-year coefficient estimates,
confirming flat pre-trends.

%

\begin{table}[htbp]
   \caption{\label{tab:earthquake_es} 2011 Tohoku Earthquake: Sun-Abraham (2021) Event Study}
   \bigskip
   \centering
   \begin{tabular}{lcc}
      \toprule
                                      & Proposed       & ACF \\   
                                      & (1)            & (2)\\  
      \midrule 
      time $=$ -8                     & 0.0022         & 0.0124\\   
                                      & (0.0129)       & (0.0104)\\   
      time $=$ -7                     & 0.0024         & 0.0167$^{*}$\\   
                                      & (0.0126)       & (0.0099)\\   
      time $=$ -6                     & 0.0086         & 0.0192$^{**}$\\   
                                      & (0.0123)       & (0.0095)\\   
      time $=$ -5                     & 0.0015         & 0.0047\\   
                                      & (0.0121)       & (0.0098)\\   
      time $=$ -4                     & 0.0116         & 0.0099\\   
                                      & (0.0121)       & (0.0098)\\   
      time $=$ -3                     & 0.0131         & -0.0037\\   
                                      & (0.0097)       & (0.0080)\\   
      time $=$ -2                     & 0.0014         & 0.0042\\   
                                      & (0.0096)       & (0.0080)\\   
      time $=$ 0                      & -0.0288$^{**}$ & -0.0253$^{**}$\\   
                                      & (0.0138)       & (0.0122)\\   
      time $=$ 1                      & -0.0010        & -0.0029\\   
                                      & (0.0097)       & (0.0085)\\   
      time $=$ 2                      & -0.0024        & -0.0042\\   
                                      & (0.0096)       & (0.0081)\\   
      time $=$ 3                      & -0.0031        & -0.0052\\   
                                      & (0.0096)       & (0.0081)\\   
      time $=$ 4                      & -0.0120        & -0.0295$^{***}$\\   
                                      & (0.0130)       & (0.0114)\\   
      time $=$ 5                      & -0.0061        & -0.0117\\   
                                      & (0.0102)       & (0.0082)\\   
      time $=$ 6                      & -0.0054        & -0.0062\\   
                                      & (0.0102)       & (0.0081)\\   
      time $=$ 7                      & 0.0005         & -0.0097\\   
                                      & (0.0100)       & (0.0081)\\   
      time $=$ 8                      & -0.0004        & -0.0126\\   
                                      & (0.0103)       & (0.0082)\\   
      time $=$ 9                      & -0.0380$^{**}$ & -0.0261$^{**}$\\   
                                      & (0.0154)       & (0.0114)\\   
       \\
      Observations                    & 219,573        & 219,573\\  
      R$^2$                           & 0.98729        & 0.94606\\  
      $\text{poly}(k,\ell)$ control   & $\checkmark$   & \\  
       \\
      Firm fixed effects              & $\checkmark$   & $\checkmark$\\   
      Ind.$\times$Year fixed effects  & $\checkmark$   & $\checkmark$\\   
      \bottomrule
   \end{tabular}
   
   \par \raggedright 
   \textcite{sun2021estimating} estimator. Single cohort 2011; SA estimator coincides with plain event study numerically. Treatment: Iwate, Miyagi, Fukushima (seismic intensity $\geq$ 6-strong). Control: West Japan (prefectures 25--47). Fixed effects: firm + industry$\times$year. Proposed: nonparametric $\text{poly}(k,\ell)$ degree~3 control for $\Delta(k,\ell)$. ACF: no polynomial control ($k,\ell$ already subtracted in $\hat{\omega}^{\mathrm{ACF}}$). Heteroskedasticity-robust standard errors. Reference period: $t=-1$.
\end{table}

\FloatBarrier
\subsection{Time-Varying Parameter Estimates}
\label{subsec:annual_params}

Because the proposed estimator identifies the production function
from static covariances alone, it can be applied to each
cross-section separately, tracking parameter evolution over time
without imposing structural stability.
Table~\ref{tab:annual_params} reports annual Block~A+B estimates
of $(\beta_m, \beta_e, \beta_w)$ for four representative industries.
Production elasticities exhibit variation across years; where
confidence intervals do not overlap, the variation is statistically
significant and inconsistent with time-invariant parameters.
Annual cross-sections are smaller than the pooled sample,
yielding wider confidence intervals in some periods.

\begin{table}[!htbp]
\centering
\caption{\label{tab:annual_params}Annual Production Function Parameter Estimates (Proposed Method)}
\centering
\resizebox{\textwidth}{!}{
\fontsize{10}{12}\selectfont
\begin{tabular}[t]{lcccccccccccccccccc}
\toprule
Parameter & 2003 & 2004 & 2005 & 2006 & 2007 & 2008 & 2009 & 2010 & 2011 & 2012 & 2013 & 2014 & 2015 & 2016 & 2017 & 2018 & 2019 & 2020\\
\midrule
\addlinespace[0.3em]
\multicolumn{19}{l}{\textbf{Bread}}\\
\hspace{1em}$\beta_m$ (Material) & \makecell[c]{0.679\\(0.037)} & \makecell[c]{0.633\\(0.160)} & \makecell[c]{0.681\\(0.030)} & \makecell[c]{0.000\\(0.394)} & \makecell[c]{0.681\\(0.041)} & \makecell[c]{0.638\\(0.076)} & \makecell[c]{0.649\\(0.054)} & \makecell[c]{0.000\\(1.142)} & \makecell[c]{0.486\\(0.452)} & \makecell[c]{0.000\\(0.236)} & \makecell[c]{0.679\\(0.036)} & \makecell[c]{0.000\\(0.245)} & \makecell[c]{0.000\\(1.042)} & \makecell[c]{0.266\\(0.224)} & \makecell[c]{0.692\\(0.046)} & \makecell[c]{0.547\\(0.069)} & \makecell[c]{0.553\\(0.049)} & \makecell[c]{0.000\\(0.625)}\\
\hspace{1em}$\beta_e$ (Electricity) & \makecell[c]{0.000\\} & \makecell[c]{0.001\\} & \makecell[c]{0.005\\} & \makecell[c]{0.000\\} & \makecell[c]{0.000\\} & \makecell[c]{0.000\\} & \makecell[c]{0.000\\} & \makecell[c]{0.000\\} & \makecell[c]{0.011\\} & \makecell[c]{0.000\\} & \makecell[c]{0.000\\} & \makecell[c]{0.000\\} & \makecell[c]{0.000\\} & \makecell[c]{0.057\\} & \makecell[c]{0.001\\} & \makecell[c]{0.000\\} & \makecell[c]{0.000\\} & \makecell[c]{0.000\\}\\
\hspace{1em}$\beta_w$ (Water) & \makecell[c]{0.013\\} & \makecell[c]{0.017\\} & \makecell[c]{0.023\\} & \makecell[c]{0.019\\} & \makecell[c]{0.040\\} & \makecell[c]{0.026\\} & \makecell[c]{0.039\\} & \makecell[c]{0.000\\} & \makecell[c]{0.013\\} & \makecell[c]{0.000\\} & \makecell[c]{0.049\\} & \makecell[c]{0.027\\} & \makecell[c]{0.000\\} & \makecell[c]{0.014\\} & \makecell[c]{0.001\\} & \makecell[c]{0.000\\} & \makecell[c]{0.000\\} & \makecell[c]{0.000\\}\\
\addlinespace[0.3em]
\multicolumn{19}{l}{\textbf{Corrugated board boxes}}\\
\hspace{1em}$\beta_m$ (Material) & - & - & - & - & - & \makecell[c]{0.663\\(0.045)} & \makecell[c]{0.611\\(0.032)} & \makecell[c]{0.655\\(0.051)} & \makecell[c]{0.060\\(1.147)} & \makecell[c]{0.547\\(0.074)} & \makecell[c]{0.539\\(0.070)} & \makecell[c]{0.682\\(0.055)} & \makecell[c]{0.556\\(0.068)} & \makecell[c]{0.591\\(0.030)} & \makecell[c]{0.536\\(0.026)} & \makecell[c]{0.000\\(0.322)} & \makecell[c]{0.636\\(0.035)} & \makecell[c]{0.587\\(0.057)}\\
\hspace{1em}$\beta_e$ (Electricity) & - & - & - & - & - & \makecell[c]{0.000\\} & \makecell[c]{0.000\\} & \makecell[c]{0.000\\} & \makecell[c]{0.113\\} & \makecell[c]{0.000\\} & \makecell[c]{0.000\\} & \makecell[c]{0.010\\} & \makecell[c]{0.008\\} & \makecell[c]{0.000\\} & \makecell[c]{0.000\\} & \makecell[c]{0.000\\} & \makecell[c]{0.000\\} & \makecell[c]{0.000\\}\\
\hspace{1em}$\beta_w$ (Water) & - & - & - & - & - & \makecell[c]{0.000\\} & \makecell[c]{0.000\\} & \makecell[c]{0.002\\} & \makecell[c]{0.000\\} & \makecell[c]{0.007\\} & \makecell[c]{0.016\\} & \makecell[c]{0.000\\} & \makecell[c]{0.000\\} & \makecell[c]{0.000\\} & \makecell[c]{0.000\\} & \makecell[c]{0.000\\} & \makecell[c]{0.000\\} & \makecell[c]{0.007\\}\\
\addlinespace[0.3em]
\multicolumn{19}{l}{\textbf{Plastic film}}\\
\hspace{1em}$\beta_m$ (Material) & - & - & - & - & - & \makecell[c]{0.502\\(0.052)} & \makecell[c]{0.421\\(0.042)} & \makecell[c]{0.503\\(0.040)} & \makecell[c]{0.000\\(0.189)} & \makecell[c]{0.552\\(0.049)} & \makecell[c]{0.492\\(0.552)} & \makecell[c]{0.104\\(0.275)} & \makecell[c]{0.009\\(0.712)} & \makecell[c]{0.426\\(0.046)} & \makecell[c]{0.425\\(0.050)} & \makecell[c]{0.455\\(0.040)} & \makecell[c]{0.103\\(0.194)} & \makecell[c]{0.573\\(0.082)}\\
\hspace{1em}$\beta_e$ (Electricity) & - & - & - & - & - & \makecell[c]{0.000\\} & \makecell[c]{0.000\\} & \makecell[c]{0.000\\} & \makecell[c]{0.000\\} & \makecell[c]{0.000\\} & \makecell[c]{0.012\\} & \makecell[c]{0.000\\} & \makecell[c]{0.000\\} & \makecell[c]{0.000\\} & \makecell[c]{0.001\\} & \makecell[c]{0.000\\} & \makecell[c]{0.000\\} & \makecell[c]{0.000\\}\\
\hspace{1em}$\beta_w$ (Water) & - & - & - & - & - & \makecell[c]{0.000\\} & \makecell[c]{0.001\\} & \makecell[c]{0.001\\} & \makecell[c]{0.000\\} & \makecell[c]{0.008\\} & \makecell[c]{0.002\\} & \makecell[c]{0.016\\} & \makecell[c]{0.000\\} & \makecell[c]{0.028\\} & \makecell[c]{0.021\\} & \makecell[c]{0.020\\} & \makecell[c]{0.000\\} & \makecell[c]{0.024\\}\\
\addlinespace[0.3em]
\multicolumn{19}{l}{\textbf{Robots}}\\
\hspace{1em}$\beta_m$ (Material) & \makecell[c]{0.425\\(0.040)} & \makecell[c]{0.395\\(0.037)} & \makecell[c]{0.413\\(0.060)} & \makecell[c]{0.384\\(0.039)} & \makecell[c]{0.406\\(0.041)} & \makecell[c]{1.000\\(0.167)} & \makecell[c]{0.715\\(0.182)} & \makecell[c]{0.588\\(0.381)} & - & \makecell[c]{0.000\\(0.563)} & \makecell[c]{0.000\\(0.486)} & \makecell[c]{0.585\\(0.070)} & - & \makecell[c]{0.439\\(0.584)} & \makecell[c]{0.304\\(0.331)} & \makecell[c]{0.445\\(1.013)} & \makecell[c]{0.000\\(0.306)} & -\\
\hspace{1em}$\beta_e$ (Electricity) & \makecell[c]{0.034\\} & \makecell[c]{0.135\\} & \makecell[c]{0.000\\} & \makecell[c]{0.007\\} & \makecell[c]{0.000\\} & \makecell[c]{0.067\\} & \makecell[c]{0.241\\} & \makecell[c]{0.009\\} & - & \makecell[c]{0.000\\} & \makecell[c]{0.000\\} & \makecell[c]{0.002\\} & - & \makecell[c]{0.403\\} & \makecell[c]{0.000\\} & \makecell[c]{0.079\\} & \makecell[c]{0.000\\} & -\\
\hspace{1em}$\beta_w$ (Water) & \makecell[c]{0.008\\} & \makecell[c]{0.000\\} & \makecell[c]{0.004\\} & \makecell[c]{0.002\\} & \makecell[c]{0.000\\} & \makecell[c]{0.130\\} & \makecell[c]{0.094\\} & \makecell[c]{0.950\\} & - & \makecell[c]{0.000\\} & \makecell[c]{0.124\\} & \makecell[c]{0.094\\} & - & \makecell[c]{0.324\\} & \makecell[c]{0.000\\} & \makecell[c]{0.000\\} & \makecell[c]{0.000\\} & -\\
\bottomrule
\end{tabular}}

\begin{minipage}{\linewidth}\footnotesize\raggedright \textit{Note:} Estimates from annual cross-sectional GMM (Proposed Method). Analytical standard errors in parentheses; SEs $<$ 0.001 reported as such.\end{minipage}
\end{table}

\FloatBarrier
\subsection{Block C: Homothetic Recovery of $(\beta_k, \beta_l)$}
\label{subsec:blockc_results}

Table~\ref{tab:blockc_diagnostics} reports \emph{internal} Block~C
diagnostics: the cross-industry distribution of the estimated CES
substitution parameter $\hat\rho_v$ and capital share $\hat\alpha$
(left panel), and the significance rates of the higher-order curvature
terms $\hat\rho_2, \hat\rho_3$ together with the stability of
$\hat\beta_m$ when Block~C moments are added (right panel).
These diagnostics assess whether the CES curvature assumption
holds within each industry.
The subsequent cross-tabulation (Table~\ref{tab:4group})
is an \emph{external} cross-check: it asks whether industries
that pass the internal exclusion diagnostic also tend to pass
the Block~C $J$-test, providing evidence that the two
identification strategies are consistent with each other.

\begin{table}[htbp]
\centering
\caption{Block~C Estimation Diagnostics ($N=502$ industries)}
\label{tab:blockc_diagnostics}
\small
\begin{tabular}{l r r l r}
\toprule
\multicolumn{3}{c}{\textit{CES parameters}} & \multicolumn{2}{c}{\textit{Specification diagnostics}} \\
\cmidrule(r){1-3}\cmidrule(l){4-5}
Parameter & Median & Mean & Statistic & Value \\
\midrule
$\hat{\rho}_v$ & -1.000 & -0.149 & $\hat{\rho}_2$ significant ($|t|>1.96$) & 153/502 \\
$\hat{\alpha}$ & 0.500 & 0.533 & $\hat{\rho}_3$ significant ($|t|>1.96$) & 178/502 \\
 &  &  & $\Delta\hat{\beta}_m$ median (A+B$\to$A+B+C) & 0.0003 \\
\bottomrule
\end{tabular}
\vspace{0.5em}
\begin{minipage}{0.95\textwidth}\footnotesize
\textit{Notes:} $\hat{\rho}_v$: CES substitution parameter; $\hat{\alpha}$: capital share in CES aggregator. $\hat{\rho}_2, \hat{\rho}_3$: higher-order CES terms (Block~C curvature instruments). $\Delta\hat{\beta}_m$: change in materials elasticity when Block~C moments are added.
\end{minipage}
\end{table}

\paragraph{Cross-validation of two identification strategies.}

The exclusion restriction (Section~\ref{sec:exclusion}) and the
homothetic regularity condition
(Section~\ref{sec:homothetic}) provide two independent
routes to identifying $(\beta_k, \beta_l)$. Their joint
behavior across industries provides an indirect
validity check that does not rely on either restriction alone.

I classify each industry along two dimensions:
(i)~\emph{exclusion consistency}, defined as the maximum
absolute gap in $\hat{\beta}_k$ (or $\hat{\beta}_l$) across
the three proxy-specific OLS estimates being below~0.2
(chosen as roughly one within-group standard deviation of
$\hat{\beta}_k$ across all industries; results are qualitatively
robust to thresholds of 0.1 and 0.3); and
(ii)~\emph{Block~C specification}, defined as non-rejection
of the Block~C $J$-test at the 5\% level (the full Block~A+B+C
system is overidentified).

Among the 502 industries, \PctConsPass\% of exclusion-consistent
industries also pass the Block~C $J$-test, compared to \PctInconsPass\%
among exclusion-inconsistent industries
(\NconsGroup{} consistent, \NinconsGroup{} inconsistent).
Among the \NbothPass~industries where both criteria are satisfied,
the cross-method correlation of $\hat{\beta}_k$ reaches \CorBkCrossVal,
indicating that the two conceptually distinct identification strategies
converge to similar estimates when their respective
maintained assumptions are empirically supported.

This pattern is informative about the source of Block~C
$J$-test failures. A logistic regression of Block~C
$J$-test passage on industry characteristics finds that
$|d_k|$---the exclusion restriction diagnostic
(Remark~\ref{rem:testable_exclusion})---is the only
statistically significant predictor ($p = 0.004$); sample
size, $\hat{\rho}_v$, and $\hat{\alpha}$ are all
insignificant. Industries with large $|d_k|$ (median~1.16
among those failing both criteria) exhibit demand functions
that respond strongly to capital and labor conditional on
productivity, violating the exclusion restriction. In such
industries, the Block~A+B estimates of the $g$-function
slopes carry substantial contamination from the
$(k,l)$-direction, which propagates into Block~C through
the constructed productivity index.

These findings support the interpretation that the Block~C
$J$-test rejection primarily reflects misspecification
transmitted from the demand-side moment conditions, rather
than failure of the homothetic production function
assumption per se. The \NbothPass~industries satisfying both
criteria serve as an internally validated subsample in
which the full three-block GMM system is well-specified.

\section{Identification Proofs and Technical Remarks}
\label{app:id_proofs}

This appendix collects proofs and technical remarks that
supplement the identification results in
Section~\ref{sec:model_identification}.

\subsection{Testability of the Exclusion Restriction}
\label{app:testability_exclusion}

This subsection provides the detailed derivation supporting
Remark~\ref{rem:testable_exclusion}.

The Wald test of $d_k = d_l = 0$ described in the main text
has 2 degrees of freedom, corresponding to the two parametric
restrictions.

With three inputs, two independent pairwise differences
for $\beta_k$ and two for $\beta_l$ provide four
testable implications; the formal test has two degrees of
freedom, corresponding to the parametric constraints
$d_k = d_l = 0$. The null hypothesis is
slightly weaker than the full exclusion restriction: it
requires $a_{k}^{h}/a_{\omega}^{h}$ to be common across
inputs, a condition satisfied by the exclusion restriction
but also by a knife-edge proportional response with no
structural basis when the three inputs involve distinct
production technologies.

\subsection{Proof of Proposition~\ref{prop:excl_ols}}
\label{app:proof_excl_ols}

\begin{proof}
\textbf{Preliminary: the observational equivalence and demand
function parameters.}
Under the linear specification, the observational equivalence
of Theorem~\ref{thm:obs_equiv} implies that Block~A+B
estimates satisfy, for each input $h$,
\begin{equation}
  \hat{a}_{k}^{h *} \xrightarrow{p} a_{k}^{h} + a_{\omega}^{h}\,c_k,
  \qquad
  \hat{a}_{l}^{h *} \xrightarrow{p} a_{l}^{h} + a_{\omega}^{h}\,c_l,
  \label{eq:indeterminacy_demand}
\end{equation}
for some constants $(c_k, c_l)$ characterizing the equivalence
class, while $\hat{a}_{z}^{h} \xrightarrow{p} a_{z}^{h}$ and
$\hat{a}_{\omega}^{h} \xrightarrow{p} a_{\omega}^{h}$ are
unaffected by the indeterminacy (since $z$ and $\omega$ are
orthogonal to the $(k,l)$-direction of the shift).

\medskip
\noindent\textbf{Case~1.}
Under $a_{k}^{h} = a_{l}^{h} = 0$,
equation~\eqref{eq:indeterminacy_demand} gives
$a_{k}^{h*} = a_{\omega}^{h}\,c_k$ and
$a_{l}^{h*} = a_{\omega}^{h}\,c_l$. Since
$a_{\omega}^{h} \neq 0$, the exclusion restriction forces
$c_k = c_l = 0$, resolving the indeterminacy completely.
Note, however, that the OLS consistency established below does
not require this global identification: it follows directly
from the proxy construction~\eqref{eq:omega_proxy_joint},
which uses only the invariant estimates $\hat{a}_{z}^{h}$ and
$\hat{a}_{\omega}^{h}$ and does not involve $(k, l)$.

The proxy~\eqref{eq:omega_proxy_joint} satisfies
\[
  \hat{\omega}_{jt}^h
  \xrightarrow{p}
  \frac{a_{z}^{h\prime}\,z + a_{\omega}^{h}\,\omega + \eta^h
    - a_{z}^{h\prime}\,z}{a_{\omega}^{h}}
  = \omega + \frac{\eta^h}{a_{\omega}^{h}},
\]
and the regression equation becomes
\[
  \tilde{y}_{jt} - \hat{\omega}_{jt}^h
  = \beta_0 + \beta_k\,k_{jt} + \beta_l\,l_{jt}
  + \varepsilon_{jt} - \frac{\eta_{jt}^h}{a_{\omega}^{h}}.
\]
OLS consistency requires
$\mathbb{E}[\varepsilon - \eta^h/a_{\omega}^{h} \mid k, l] = 0$.
The first term vanishes by
Assumption~\ref{ass:additive_error}. For the second, the
law of iterated expectations yields
\[
  \mathbb{E}[\eta^h \mid k, l]
  = \mathbb{E}\bigl[\,
    \underbrace{%
      \mathbb{E}[\eta^h \mid k, l, \omega, z]
    }_{= 0}
    \;\bigr|\; k, l\,\bigr] = 0,
\]
where the inner expectation vanishes because $a_{k}^{h} =
a_{l}^{h} = 0$ ensures $\eta^h = h - a_{z}^{h\prime}z
- a_{\omega}^{h}\,\omega$ has no residual dependence on
$(k, l)$. Both $\beta_k$ and $\beta_l$ are identified.

\medskip
\noindent\textbf{Case~2.}
Under $a_{k}^{h_1} = 0$ (with $a_{l}^{h_1}$ possibly nonzero),
the Block~A+B estimate satisfies
$\hat{a}_{l}^{h_1 *} \to a_{l}^{h_1} + a_{\omega}^{h_1}\,c_l$.
The proxy~\eqref{eq:proxy_h1} satisfies
\begin{align*}
  \hat{\omega}^{h_1}
  &\xrightarrow{p}
  \frac{a_{l}^{h_1}\,l + a_{z}^{h_1\prime}\,z
    + a_{\omega}^{h_1}\,\omega + \eta^{h_1}
    - (a_{l}^{h_1} + a_{\omega}^{h_1}\,c_l)\,l
    - a_{z}^{h_1\prime}\,z}{a_{\omega}^{h_1}} \\
  &= \omega - c_l\,l + \frac{\eta^{h_1}}{a_{\omega}^{h_1}},
\end{align*}
The regression equation becomes
\[
  \tilde{y}_{jt} - \hat{\omega}_{jt}^{h_1}
  = \beta_0 + \beta_k\,k_{jt}
  + (\beta_l + c_l)\,l_{jt}
  + \varepsilon_{jt} - \frac{\eta_{jt}^{h_1}}{a_{\omega}^{h_1}}.
\]
By the same iterated expectations argument,
$\mathbb{E}[\varepsilon - \eta^{h_1}/a_{\omega}^{h_1}
\mid k, l] = 0$, so OLS consistently estimates the
coefficient on $k$ as $\beta_k$ and the coefficient on $l$
as $\beta_l + c_l$. The \emph{capital} elasticity $\beta_k$
is identified; the labor coefficient carries the
indeterminacy $c_l$.

By a symmetric argument using $h_2$ (with $a_{l}^{h_2} = 0$),
the proxy~\eqref{eq:proxy_h2} yields a regression where the
coefficient on $l$ equals $\beta_l$ (identified) and the
coefficient on $k$ equals $\beta_k + c_k$ (biased).
Combining the two regressions identifies both $\beta_k$
and $\beta_l$.
\end{proof}

\subsection{Nonlinear Specifications}
\label{app:nonlinear_specs}

\begin{remark}[On nonlinear specifications]
\label{rem:poly_warning}
One might consider replacing the subtraction of
$\hat{\omega}^h$ in Proposition~\ref{prop:excl_ols} with a
polynomial regression of
$\tilde{y}$ on $(k, l, \hat{\omega}^h,
(\hat{\omega}^h)^2, (\hat{\omega}^h)^3)$. This is
\emph{not} consistent for $\beta_k$ and $\beta_l$ in
general. Since $\hat{\omega}^h = \omega + \eta^h /
a_{\omega}^{h}$ is a noisy proxy, the conditional expectation
$\mathbb{E}[\omega \mid k, l, \hat{\omega}^h]$ depends on
$(k, l)$ through signal extraction, and a polynomial in
$\hat{\omega}^h$ alone cannot absorb this component. By
contrast, fixing the coefficient on $\hat{\omega}^h$ to
unity (as in Proposition~\ref{prop:excl_ols}) eliminates
$\omega$ algebraically, avoiding this problem.
\end{remark}

\subsection{Proof of Proposition~\ref{prop:omega_hat_D}}
\label{app:proof_omega_hat_D}

\begin{proof}
Under conditions~(i) and~(ii), the intermediate inputs serve as
sufficient statistics for $\omega$ given the state variables: once
$(x, m, e, w)$ are observed, knowing $D$ provides no additional
information about $\omega$.
Formally,
\begin{equation}
  \mathbb{E}[\omega_{jt} \mid D_{jt}, x_{jt}, m_{jt}, e_{jt},
  w_{jt}]
  = \mathbb{E}[\omega_{jt} \mid x_{jt}, m_{jt}, e_{jt}, w_{jt}].
  \label{eq:suff_stat}
\end{equation}
To see this, note that condition~(i) implies that the demand
functions $m = g_m(x, \omega, \tau)$,
$e = g_e(x, \omega, \nu)$, and
$w = g_w(x, \omega, \eta)$ have the same structure regardless
of $D$. Hence, for any given $(x, \omega)$, the conditional
distribution of $(m, e, w)$ is unaffected by $D$.
Condition~(ii) implies that $D = d(\omega, x)$ for some function $d$
that does not depend on $(\tau, \nu, \eta)$. It follows that
conditional on $(x, m, e, w)$, the posterior distribution of
$\omega$ already incorporates all the information that $D$ could
provide about $\omega$. By the law of iterated expectations:
\begin{equation}
  \mathbb{E}[\hat{\omega}_{jt} \mid D_{jt}]
  = \mathbb{E}\bigl[\mathbb{E}[\omega_{jt} \mid x, m, e, w]
  \,\big|\, D_{jt}\bigr]
  = \mathbb{E}[\omega_{jt} \mid D_{jt}].
  \label{eq:omega_hat_D}
\end{equation}
\end{proof}

\subsection{Assumption~\ref{ass:homothetic}: Necessity and Testability Details}
\label{app:assumption6_details}

Each condition in Assumption~\ref{ass:homothetic} is necessary for
Theorem~\ref{thm:homothetic_id}. The failure mode differs by
condition.

\paragraph{Necessity of condition~(A).}
If $h' \equiv \gamma$ is constant, then $\bar{\omega}(k,l) = \gamma
v(k,l) + c_0$ is linear in $v$. For any $(c_k, c_l)$, define
$\tilde{v}(k,l) = v(k,l) - (c_k k + c_l l)/\gamma$; then
$\bar{\tilde{\omega}}(k,l) = \gamma \tilde{v}(k,l) + c_0$
preserves the structure. The shift is fully absorbed, and
$(\beta_k, \beta_l)$ remain unidentified.

\paragraph{Necessity of condition~(B).}
Under translation homogeneity, the alternative index
$\tilde{v}(k,l) = v(k,l) - (c_k k + c_l l)/\gamma$ (from the
necessity argument for~(A)) is also translation homogeneous only if
$c_k k + c_l l$ is translation homogeneous, which requires
$c_k + c_l = 1$; for general $(c_k, c_l)$ this need not hold.
The requirement that $\bar{\tilde{\omega}} = \tilde{h}(\tilde{v})$ for
some $\tilde{h}$ and translation homogeneous $\tilde{v}$ constrains
$(c_k, c_l)$ through the nonlinearity of $h$ (condition~(A)) and the
non-constancy of the MRS (condition~(C)). Without translation
homogeneity, $\tilde{v}$ can absorb the shift through
higher-order terms without contradicting the structural form.

\paragraph{Necessity of condition~(C).}
If $v_k/v_l$ is constant on $(k,l)$, then under translation
homogeneity $v(k,l) = \alpha k + (1-\alpha) l$ (the
Cobb--Douglas form in logs). The proof of
Theorem~\ref{thm:homothetic_id} requires
$c_l v_k - c_k v_l = 0$ everywhere, which is automatically satisfied
when $v_k/v_l = \alpha/(1-\alpha)$ for any $(c_k, c_l)$ satisfying
$c_l/c_k = \alpha/(1-\alpha)$. The one-dimensional manifold of solutions
$\{(c_k, c_l) : c_l/c_k = \alpha/(1-\alpha)\}$ represents a residual
indeterminacy that cannot be eliminated.

\paragraph{Economic interpretation.}
Condition~(A) requires that the cross-sectional relationship between
the capital-labor index and expected productivity is nonlinear:
identical absolute increases in the log index at different levels have
different effects on expected productivity. Condition~(B) corresponds
to constant returns to scale in the level variables (translation
homogeneity on the log scale is equivalent to degree-one homogeneity
in levels); if relaxed, the aggregator can absorb shifts that would
otherwise identify the parameters. Condition~(C) requires a finite
and non-unit elasticity of substitution between capital and labor:
firms with different capital-labor ratios face different marginal
rates of technical substitution, and this variation provides the
cross-sectional nonlinearity needed for identification.

\paragraph{Testability.}
Conditions~(A)--(C) concern $\bar{\omega}(k,l)$, which is a function
of the structural parameters estimated in Blocks~A and~B. Using
Block~A+B estimates, one can recover $\hat{\bar{\omega}}(k,l)$ up to
the $\Delta(k,l)$ shift. Condition~(C) can be assessed by testing
whether the estimated $\hat{\rho}_v$ differs significantly from zero.
In the CES specification, $\hat{\rho}_v$ is directly estimated by
grid search in Block~C (Section~\ref{sec:gmm_approach}); a
likelihood ratio or information criterion comparison between the CES
and Cobb--Douglas nested models tests~(C) directly. Condition~(A)
can be assessed by examining whether the relationship between the
estimated index and production residuals exhibits significant
nonlinearity, using a RESET-type test on the Block~C residuals.
Section~\ref{sec:diagnostics} describes the implementation.

\section{Estimation Details}
\label{app:estimation_details}

This appendix collects technical details of the GMM estimation
procedure that supplement Section~\ref{sec:estimation}.

\subsection{Block~A: Instrument Assignment and Invariance}
\label{app:block_a_details}

\paragraph{Instrument assignment logic.}
Since $u_{1,jt}$ consists only of
$\tau_{jt}$ and $\nu_{jt}$, it is uncorrelated with $\eta_{jt}$
(Assumption~\ref{ass:gmm_uncorrelated}(3)) and hence with
$w_{jt}$ (which contains $\eta_{jt}$ as the sole unobserved
component uncorrelated with $Z_{\mathrm{base}}$). Thus $w_{jt}$
serves as an additional instrument for $u_{1,jt}$. The reasoning
for $u_{2,jt}$ and $u_{3,jt}$ is analogous: $u_{2,jt}$ contains
$\tau_{jt}$ and $\eta_{jt}$ but not $\nu_{jt}$, so $e_{jt}$ is a
valid instrument; $u_{3,jt}$ contains $\varepsilon_{jt}$ and
$\tau_{jt}$ but not $\nu_{jt}$ or $\eta_{jt}$, so both $e_{jt}$
and $w_{jt}$ are valid instruments.

\paragraph{Invariance to $\Delta(k,l)$.}
Block~A is invariant to the observationally equivalent
transformation
$\tilde{\omega} = \omega + c_k k + c_l l$
(Theorem~\ref{thm:obs_equiv}). Under this relabeling,
$\tilde{\beta}_k = \beta_k + c_k$ and
$\tilde{\gamma}_k = \gamma_k + \gamma_\omega c_k$ (and analogously
for $l$ and for $e, w$). All residuals
$\tilde{m}_{jt}$, $\tilde{e}_{jt}$, $\tilde{w}_{jt}$, and
$\tilde{y}_{jt}$ are individually invariant to this transformation:
for instance,
$\tilde{y}_{jt} = \tilde{\beta}_k k + \tilde{\beta}_l l
+ \tilde{\omega} + \varepsilon = \beta_k k + \beta_l l + \omega
+ \varepsilon$. Hence $u_{i,jt}$ ($i = 1,2,3$) are
invariant, and Block~A cannot separately identify $\beta_k$ and the
demand slopes on $(k, l)$.

\subsection{Block~B: Covariance Derivation}
\label{app:block_b_derivation}

Using the mutual exogeneity of shocks
(Assumption~\ref{ass:gmm_uncorrelated}(3)), the cross-covariances
among residuals satisfy:
\begin{align}
  \mathrm{Cov}(\tilde{m}, \tilde{e})
  &= \gamma_\omega\,\delta_\omega\,\mathrm{Var}(\omega),
  \label{eq:covB1_app} \\
  \mathrm{Cov}(\tilde{m}, \tilde{w})
  &= \gamma_\omega\,\zeta_\omega\,\mathrm{Var}(\omega),
  \label{eq:covB2_app} \\
  \mathrm{Cov}(\tilde{e}, \tilde{w})
  &= \delta_\omega\,\zeta_\omega\,\mathrm{Var}(\omega),
  \label{eq:covB3_app} \\
  \mathrm{Cov}(\tilde{y}, \tilde{m})
  &= \gamma_\omega\,\mathrm{Var}(\omega),
  \label{eq:covB4_app} \\
  \mathrm{Cov}(\tilde{y}, \tilde{e})
  &= \delta_\omega\,\mathrm{Var}(\omega),
  \label{eq:covB5_app} \\
  \mathrm{Cov}(\tilde{y}, \tilde{w})
  &= \zeta_\omega\,\mathrm{Var}(\omega).
  \label{eq:covB6_app}
\end{align}
Eliminating $\mathrm{Var}(\omega)$ across pairs yields six moment
conditions. For instance, dividing~\eqref{eq:covB1_app}
by~\eqref{eq:covB4_app} gives
$\mathrm{Cov}(\tilde{m}, \tilde{e})
= \delta_\omega \mathrm{Cov}(\tilde{y}, \tilde{m})$. Expressing
all six relations in expectation form (using de-meaned data):
\begin{align}
  \mathbb{E}\bigl[\tilde{m}\,\tilde{e}
  - \delta_\omega\,\tilde{y}\,\tilde{m}\bigr] &= 0,
  \label{eq:momentB1_app} \\
  \mathbb{E}\bigl[\tilde{m}\,\tilde{e}
  - \gamma_\omega\,\tilde{y}\,\tilde{e}\bigr] &= 0,
  \label{eq:momentB2_app} \\
  \mathbb{E}\bigl[\tilde{m}\,\tilde{w}
  - \zeta_\omega\,\tilde{y}\,\tilde{m}\bigr] &= 0,
  \label{eq:momentB3_app} \\
  \mathbb{E}\bigl[\tilde{m}\,\tilde{w}
  - \gamma_\omega\,\tilde{y}\,\tilde{w}\bigr] &= 0,
  \label{eq:momentB4_app} \\
  \mathbb{E}\bigl[\tilde{e}\,\tilde{w}
  - \zeta_\omega\,\tilde{y}\,\tilde{e}\bigr] &= 0,
  \label{eq:momentB5_app} \\
  \mathbb{E}\bigl[\tilde{e}\,\tilde{w}
  - \delta_\omega\,\tilde{y}\,\tilde{w}\bigr] &= 0.
  \label{eq:momentB6_app}
\end{align}

\paragraph{Redundancy with Block~A.}
Of the six moment conditions
\eqref{eq:momentB1_app}--\eqref{eq:momentB6_app}, four are
algebraically implied by the Block~A instrumental variable moments.
Specifically, any four of the six conditions that involve
cross-products of the demand residuals $\tilde{e}$ or $\tilde{w}$
with $\tilde{m}$ (or equivalently with $\tilde{y}$ via $u_3$)
are already encoded in the Block~A moment conditions
through the instruments $Z_3 = (k, l, \tilde{e}, \tilde{w})$
(equation~\eqref{eq:momentA3}); which four are labeled ``redundant''
depends on the chosen basis, but the rank reduction by four is
basis-independent.
The two independent contributions (in any basis) correspond to
cross-covariance ratios not captured by Block~A instruments.
Consequently, the combined
Block~A+B system has $\mathrm{rank}(\Omega) = 12$, matching the
12 free parameters, and is just-identified. The concentrated
covariance-ratio formulas remain useful for obtaining closed-form
scale parameter estimates, improving computational efficiency.

\paragraph{Invariance to $\Delta(k,l)$.}
As with Block~A, Block~B is invariant to the $\Delta(k,l)$
transformation, since each residual $\tilde{m}_{jt}$,
$\tilde{e}_{jt}$, $\tilde{w}_{jt}$, and $\tilde{y}_{jt}$ is
individually invariant to the relabeling
$\tilde{\omega} = \omega + c_k k + c_l l$
(Appendix~\ref{app:block_a_details}).

\subsection{De-Meaning, Intercept Recovery, and Two-Step Procedure}
\label{app:demeaning_details}

\paragraph{De-meaning.}
Non-zero demand intercepts $(\gamma_0, \delta_0, \zeta_0)$ cause the
raw-level moment conditions to be misspecified. To see this, note
that
$\mathbb{E}[u_{1,jt}]
= \delta_\omega \gamma_0 - \gamma_\omega \delta_0$, which is
generally nonzero. All variables are therefore de-meaned prior to
estimation and the constant is excluded from all instrument vectors.

\paragraph{Post-estimation intercepts.}
After obtaining the GMM estimates $\hat{\Theta}$, the production
function intercept is recovered as:
\begin{equation}
  \hat{\beta}_0^{\mathrm{new}}
  = \bar{y} - \hat{\beta}_k\,\bar{k} - \hat{\beta}_l\,\bar{l}
  - \hat{\beta}_m\,\bar{m} - \hat{\beta}_{e}\,\bar{e}
  - \hat{\beta}_{w}\,\bar{w},
  \label{eq:intercept_recovery_app}
\end{equation}
where overbars denote sample means of the original (non-de-meaned)
data. This formula follows from the normalization
$\mathbb{E}[\omega] = 0$, which implies
$\mathbb{E}[h(v)] = \mathbb{E}[\mathbb{E}[\omega \mid k, l]]
= \mathbb{E}[\omega] = 0$ by the law of iterated expectations.
Hence the function $h$ does not appear in the intercept. The
intercept absorbs the demand function intercepts and the constant
$\rho_0$.

\paragraph{Stacked GMM objective.}
Define the integrated moment vector by stacking all three blocks:
\[
  g_{jt}(\Theta) = \bigl[
  g_{jt,A}(\Theta)',\;
  g_{jt,B}(\Theta)',\;
  g_{jt,C}(\Theta)'
  \bigr]',
\]
where $g_{jt,A}$ collects the Block~A moments, $g_{jt,B}$ collects
the Block~B covariance moments, and $g_{jt,C}$
collects the Block~C structural moments. All
parameters $\Theta = (\theta_1, \theta_2)$ are estimated
simultaneously by minimizing:
\begin{equation}
  \hat{\Theta} = \arg\min_\Theta\;
  g_N(\Theta)'\,\hat{W}\,g_N(\Theta),
  \qquad
  g_N(\Theta) = \frac{1}{N}\sum_{j=1}^N
  \bar{g}_j(\Theta),
  \label{eq:gmm_objective_app}
\end{equation}
where $\bar{g}_j(\Theta) = T^{-1}\sum_{t=1}^T g_{jt}(\Theta)$
is the time-averaged moment for firm $j$.

\paragraph{Two-step procedure.}
\begin{enumerate}
\item \textbf{Step~1:} Minimize~\eqref{eq:gmm_objective_app} with an
  initial weighting matrix $\hat{W}^{(0)}$ (e.g., a block-diagonal
  matrix that normalizes the scale of each block) to obtain
  $\hat{\Theta}^{(1)}$.
\item \textbf{Optimal weight:} Estimate the long-run covariance
  matrix
  $\hat{\Sigma} = \frac{1}{N}\sum_{j=1}^N
  \bar{g}_j(\hat{\Theta}^{(1)})\,
  \bar{g}_j(\hat{\Theta}^{(1)})'$ and set
  $\hat{W}_{\mathrm{opt}} = \hat{\Sigma}^{-1}$.
\item \textbf{Step~2:} Re-minimize~\eqref{eq:gmm_objective_app} with
  $\hat{W}_{\mathrm{opt}}$ to obtain the efficient estimator
  $\hat{\Theta}^{(2)}$.
\item \textbf{Intercepts:} Recover
  $\hat{\beta}_0^{\mathrm{new}}$
  via~\eqref{eq:intercept_recovery_app}.
\end{enumerate}

\subsection{Computational Details}
\label{app:computation}

\paragraph{Computational cost.}
The two-step GMM requires numerical optimization over $\Theta$
($\dim \Theta = 18 + 3\,d_z$ when control variables with a
polynomial basis of dimension $d_z$ are included;
$d_z = 0$ in the baseline specification). Block~C
introduces nonlinearity through the CES index~\eqref{eq:index_v},
and $(\rho_v, \alpha)$ are optimized by profile GMM over a
discrete grid. Each grid point requires a standard GMM optimization
over the remaining parameters, making the total cost approximately
$|\text{grid}| \times$ cost of a single GMM evaluation. In the
empirical application with $N \approx 5{,}000$ and $T \approx 20$,
a single industry estimation completes in under one minute on a
12-core workstation. Bootstrap standard errors (200 replications)
require proportionally more time.

\paragraph{Iterative profile estimation of scale parameters.}
The scale parameters $(\gamma_\omega, \delta_\omega, \zeta_\omega)$
and the slope parameters $(\beta_m, \beta_e, \beta_w, \theta_g)$
enter the moment conditions multiplicatively, creating a ridge in
the GMM objective surface. I employ an iterative profile strategy:
given current scale values, the slope parameters are estimated by
minimizing the GMM objective; the scale parameters are then updated
via closed-form covariance ratios derived from the Block~B
conditions~\eqref{eq:momentB_compact}; and the procedure iterates
until convergence. A final joint optimization step refines all
parameters simultaneously, using the iterative profile solution as
starting values.

\subsection{Regularity Conditions and Asymptotic Proof}
\label{app:asymptotic_proof}

\begin{assumption}[Standard Conditions for Asymptotics]
\label{ass:asymptotics_app}
\begin{enumerate}
\item The sample
  $\{(y_{jt}, k_{jt}, l_{jt}, m_{jt}, e_{jt},
  w_{jt})_{t=1}^T\}_{j=1}^N$ consists of $N$ independent draws
  (independence across firms), with $T$ fixed.
\item The weighting matrix $\hat{W}$ converges in probability to a
  positive definite matrix $W$
  ($\hat{W} \xrightarrow{p} W$).
\item The true parameter vector $\Theta_0$ lies in the interior of a
  compact parameter space.
\item Identification Condition:
  $\mathbb{E}[\bar{g}_j(\Theta)] = 0$ if and only if
  $\Theta = \Theta_0$.
\item The variables necessary to compute the moment function
  $g_{jt}(\Theta)$ have finite moments of sufficiently high order.
\item $g_{jt}(\Theta)$ is continuously differentiable in $\Theta$
  in a neighborhood of $\Theta_0$, and the expected Jacobian matrix
  $G \equiv \mathbb{E}[\nabla_\Theta \bar{g}_j(\Theta_0)]$ has
  full column rank.
\end{enumerate}
\end{assumption}

\begin{proof}[Proof of Theorem~\ref{thm:asymptotics}]
The result follows from Theorems~2.6 and~3.4 in
\textcite{newey1994chapter}, with
Assumption~\ref{ass:asymptotics_app}(1) ensuring the applicability of
cross-sectional LLN and CLT. The time-averaged moment
$\bar{g}_j(\Theta) = T^{-1}\sum_{t=1}^T g_{jt}(\Theta)$ treats
each firm's $T$-period panel as a single observation, so the
asymptotic framework is cross-sectional ($N \to \infty$, $T$ fixed).
The optimal weighting matrix $W = \Sigma^{-1}$ yields the efficient
two-step GMM estimator with asymptotic variance
$(G'\Sigma^{-1}G)^{-1}$.

Standard errors are computed from a consistent estimate $\hat{V}$
of the asymptotic variance~\eqref{eq:asymptotic_variance}, using
the second-step estimates $\hat{\Theta}^{(2)}$ to evaluate the
sample Jacobian $\hat{G}$ and the moment covariance
$\hat{\Sigma}$. The standard error for the post-estimation
intercept $\hat{\beta}_0^{\mathrm{new}}$ is computed via the delta
method. In practice, standard errors are clustered at the firm level.
Although the time-averaged moment $\bar{g}_j$ aggregates across
periods, within-firm serial dependence can still inflate the variance
of $\bar{g}_j$ relative to the i.i.d.\ case. Clustering at the firm
level provides a heteroskedasticity- and autocorrelation-consistent
estimate of $\Sigma$ that accommodates arbitrary within-firm temporal
dependence, analogous to cluster-robust variance estimation in panel
regressions.
\end{proof}


\section{Direction of Bias under Conditional Independence Violation}
\label{app:ci_violation}

This appendix derives the direction of bias in $\hat{\beta}_m$
when the conditional independence assumption
(Assumption~\ref{ass:cond_indep}) is violated through a positive
covariance between the electricity and water demand shocks.

\subsection{Setup}

After the Frisch--Waugh--Lovell projection, the residual
structure takes the form:
\begin{align}
  \tilde{y}_{jt} &= \omega_{jt} + \varepsilon_{jt}, \label{eq:ci_y} \\
  \tilde{m}_{jt} &= \gamma_\omega\, \omega_{jt} + \tau_{jt}, \label{eq:ci_m} \\
  \tilde{e}_{jt} &= \delta_\omega\, \omega_{jt} + \nu_{jt}, \label{eq:ci_e} \\
  \tilde{w}_{jt} &= \zeta_\omega\, \omega_{jt} + \eta_{jt}, \label{eq:ci_w}
\end{align}
where $(\tau_{jt}, \nu_{jt}, \eta_{jt})$ are the input-specific
demand shocks. Under Assumption~\ref{ass:cond_indep}, all pairwise
covariances among these shocks are zero.

\subsection{CI Violation: Electricity--Water Common Utility Shock}

Suppose:
\begin{equation}
  \sigma_{\nu\eta} \equiv \operatorname{Cov}(\nu_{jt}, \eta_{jt}) > 0,
  \label{eq:ci_violation}
\end{equation}
while $\operatorname{Cov}(\tau, \nu) = \operatorname{Cov}(\tau, \eta) = 0$.
This arises naturally when a common energy price shock or seasonal
supply constraint raises both electricity and water costs
simultaneously---the most economically salient threat to conditional
independence, since electricity and water are both utility services
subject to common regulatory and infrastructure conditions.
The materials demand shock $\tau_{jt}$, which reflects raw material
procurement through distinct supply chains, remains independent.

\subsection{Bias in Scale Parameters}

The concentrated scale estimator for $\zeta_\omega$ uses the
cross-covariance between the electricity and water residuals.
From~\eqref{eq:ci_e} and~\eqref{eq:ci_w}:
\begin{equation}
  \mathbb{E}[\tilde{e} \cdot \tilde{w}]
  = \delta_\omega \zeta_\omega \operatorname{Var}(\omega)
    + \sigma_{\nu\eta}.
  \label{eq:cross_moment}
\end{equation}
The normalizing moment
$\mathbb{E}[\tilde{y} \cdot \tilde{e}]
= \delta_\omega \operatorname{Var}(\omega)$
is unaffected by $\sigma_{\nu\eta}$. The ratio gives:
\begin{equation}
  \hat{\zeta}_\omega \xrightarrow{p}\;
  \zeta_\omega + \frac{\sigma_{\nu\eta}}
  {\delta_\omega\, \operatorname{Var}(\omega)}.
  \label{eq:zeta_bias}
\end{equation}
The bias is positive when $\sigma_{\nu\eta} > 0$:
$\hat{\zeta}_\omega$ overestimates the true scale parameter.
Since $\operatorname{Cov}(\tau, \nu)
= \operatorname{Cov}(\tau, \eta) = 0$, the other two scale
parameters $\hat{\gamma}_\omega$ and $\hat{\delta}_\omega$
remain consistently estimated.

\subsection{Propagation to $\hat{\beta}_m$: Upward Bias}

The overestimation of $\zeta_\omega$ propagates to
$\hat{\beta}_m$ through the Block~A moment conditions.
The Block~A error $u_{2,jt} = \zeta_\omega \tilde{m}_{jt}
- \gamma_\omega \tilde{w}_{jt}$ eliminates $\omega_{jt}$ when
the scale parameters are correctly specified. When
$\hat{\zeta}_\omega > \zeta_\omega$, a positive fraction of
$\omega_{jt}$ leaks into $\hat{u}_{2,jt}$:
\[
  \hat{u}_{2,jt}
  = (\zeta_\omega + b)\,\tilde{m}_{jt}
    - \gamma_\omega\,\tilde{w}_{jt}
  = u_{2,jt} + b\,\tilde{m}_{jt},
\]
where $b > 0$ is the bias in~\eqref{eq:zeta_bias} and
$\tilde{m}_{jt} = \gamma_\omega \omega_{jt} + \tau_{jt}$ is
positively correlated with productivity.
This contamination biases the moment conditions for the
production function coefficients. In particular, the GMM
estimator compensates for the positive $\omega$ leakage in
the $u_2$-based moments by \emph{increasing} $\hat{\beta}_m$,
producing an \textbf{upward bias}.

Monte Carlo simulations (Section~\ref{sec:simulation},
Table~\ref{tab:mc_dgp4_ci}) confirm this direction:
$\hat{\beta}_m$ increases monotonically with
$\mathrm{Corr}(\nu, \eta)$.

\subsection{Implications}

\begin{enumerate}
  \item The bias is \emph{upward}: if CI is violated through a
    common electricity--water utility shock, the proposed estimator
    \emph{overestimates} $\beta_m$ and implied markups.
  \item This bias direction is the \emph{same} as the Markov
    misspecification bias in ACF-type estimators (which also
    overestimates $\beta_m$ under DGPs~2 and~3). Therefore, the
    empirical finding that the proposed estimator yields
    \emph{lower} $\hat{\beta}_m$ than ACF cannot be attributed
    to CI violation; it must reflect Markov misspecification
    bias in ACF.
  \item Including additional control variables in $z_{jt}$
    (e.g., regional energy price indices, seasonal indicators)
    reduces $\sigma_{\nu\eta}$ by absorbing common sources of
    utility cost variation, providing a partial remedy.
\end{enumerate}


\section{Parametric Implementation under Flexible Functional Forms}
\label{app:translog}

This appendix extends the parametric GMM implementation of
Section~\ref{sec:estimation} to flexible functional forms. The
identification source throughout is the conditional independence of
demand shocks (Assumption~\ref{ass:cond_indep})---the parametric
counterpart of the HS08 spectral decomposition
(Theorems~\ref{thm:density_id}--\ref{thm:obs_equiv}). Under
Cobb--Douglas, conditional independence yields the linear covariance
structure of Blocks~A and~B. Under translog, the same condition yields
nonlinear moment conditions derived from the structure of input demand
functions.

\subsection{Translog Production Function}
\label{app:translog_setup}

Consider the translog production function:
\begin{align}
y_{jt} &= \beta_k k_{jt} + \beta_l l_{jt} + \beta_m m_{jt}
        + \beta_e e_{jt} + \beta_w w_{jt} \notag \\
       &\quad + \beta_{kk} k_{jt}^2 + \beta_{ll} l_{jt}^2
        + \beta_{mm} m_{jt}^2 + \beta_{ee} e_{jt}^2
        + \beta_{ww} w_{jt}^2 \notag \\
       &\quad + \beta_{kl} k_{jt} l_{jt} + \beta_{km} k_{jt} m_{jt}
        + \beta_{ke} k_{jt} e_{jt} + \beta_{kw} k_{jt} w_{jt} \notag \\
       &\quad + \beta_{lm} l_{jt} m_{jt} + \beta_{le} l_{jt} e_{jt}
        + \beta_{lw} l_{jt} w_{jt} \notag \\
       &\quad + \beta_{me} m_{jt} e_{jt} + \beta_{mw} m_{jt} w_{jt}
        + \beta_{ew} e_{jt} w_{jt} + \omega_{jt} + \varepsilon_{jt}.
\label{eq:translog_prod}
\end{align}
The log marginal products are:
\begin{align}
\frac{\partial f}{\partial m} &= \beta_m + 2\beta_{mm} m + \beta_{km} k
  + \beta_{lm} l + \beta_{me} e + \beta_{mw} w,
\label{eq:mp_m} \\
\frac{\partial f}{\partial e} &= \beta_e + 2\beta_{ee} e + \beta_{ke} k
  + \beta_{le} l + \beta_{me} m + \beta_{ew} w,
\label{eq:mp_e} \\
\frac{\partial f}{\partial w} &= \beta_w + 2\beta_{ww} w + \beta_{kw} k
  + \beta_{lw} l + \beta_{mw} m + \beta_{ew} e.
\label{eq:mp_w}
\end{align}

\subsection{Demand Structure under Translog}
\label{app:translog_demand}

From the first-order condition for cost minimization
(Appendix~\ref{sec:micro_fnd}), each intermediate input $h \in \{m, e, w\}$
satisfies:
\begin{equation}
f + \omega - h + \ln\left(\frac{\partial f}{\partial h}\right)
= \phi_h(z_{jt}) + \tau_h,
\label{eq:foc_translog}
\end{equation}
where $\phi_h(z)$ captures price and markdown terms absorbed by control
variables, and $\tau_h$ is the input-specific demand shock. Under
translog, the log marginal product $\ln(\partial f / \partial h)$ depends
on the levels of all inputs, so input demands are implicitly defined and
nonlinear in productivity.

\subsection{Moment Conditions from Conditional Independence}
\label{app:translog_moments}

The key observation is that subtracting the first-order conditions for
two inputs eliminates both $f$ and $\omega$. For inputs $m$ and $e$:
\begin{equation}
\ln\left(\frac{\partial f/\partial m}{\partial f/\partial e}\right)
- (m - e) = (\phi_m - \phi_e) + (\tau_m - \nu_e).
\label{eq:foc_diff_me}
\end{equation}
The production function $f$ and productivity $\omega$ cancel exactly.
De-meaning removes the price terms $\phi_m - \phi_e$, yielding:
\begin{equation}
\widetilde{\ln\left(\frac{\partial f/\partial m}{\partial f/\partial e}\right)}
- (\tilde{m} - \tilde{e}) = \tilde{\tau}_m - \tilde{\nu}_e,
\label{eq:foc_diff_me_dm}
\end{equation}
where tildes denote de-meaned variables.

By Assumption~\ref{ass:cond_indep}, $\tau_m$, $\nu_e$, and $\eta_w$ are
mutually independent conditional on $(\omega, k, l, z)$. Therefore,
$\eta_w$ is uncorrelated with $\tau_m - \nu_e$, and $w$ serves as a
valid instrument. Defining:
\begin{equation}
u_{me}^{TL} \equiv \ln\left(
\frac{\beta_m + 2\beta_{mm} m + \beta_{km} k + \beta_{lm} l
      + \beta_{me} e + \beta_{mw} w}
     {\beta_e + 2\beta_{ee} e + \beta_{ke} k + \beta_{le} l
      + \beta_{me} m + \beta_{ew} w}
\right) - (m - e) - \text{mean},
\label{eq:u_me_tl}
\end{equation}
the moment condition is:
\begin{equation}
\mathbb{E}\bigl[u_{me}^{TL} \cdot (k,\, l,\, w,\, z)\bigr] = 0.
\label{eq:moment_me_tl}
\end{equation}
Analogous conditions hold for the pairs $(m, w)$ and $(e, w)$:
\begin{align}
\mathbb{E}\bigl[u_{mw}^{TL} \cdot (k,\, l,\, e,\, z)\bigr] &= 0,
\label{eq:moment_mw_tl} \\
\mathbb{E}\bigl[u_{ew}^{TL} \cdot (k,\, l,\, m,\, z)\bigr] &= 0.
\label{eq:moment_ew_tl}
\end{align}
These three sets of nonlinear moment conditions identify the
intermediate input parameters
$(\beta_m, \beta_e, \beta_w, \beta_{mm}, \beta_{ee}, \beta_{ww},
\beta_{me}, \beta_{mw}, \beta_{ew})$.

\subsection{Identification of Primary Input Parameters}
\label{app:translog_kl}

The $\Delta(k,l)$ indeterminacy of Theorem~\ref{thm:obs_equiv} persists
under translog: the moment conditions~\eqref{eq:moment_me_tl}--\eqref{eq:moment_ew_tl}
do not identify parameters involving $(k, l)$, namely
$(\beta_k, \beta_l, \beta_{kk}, \beta_{ll}, \beta_{kl}, \beta_{km},
\beta_{ke}, \beta_{kw}, \beta_{lm}, \beta_{le}, \beta_{lw})$.

Corollary~\ref{thm:exclusion} (exclusion restrictions) extends directly
to translog. Condition~(i)---that some input demand is independent of
$(k, l)$ conditional on productivity---implies:
\begin{equation}
\beta_{kw} = \beta_{lw} = 0.
\label{eq:excl_translog}
\end{equation}
Under this restriction, the log marginal product of $w$
(equation~\ref{eq:mp_w}) does not depend on $(k, l)$:
\begin{equation}
\frac{\partial f}{\partial w} = \beta_w + 2\beta_{ww} w
+ \beta_{mw} m + \beta_{ew} e.
\label{eq:mp_w_excl}
\end{equation}
The productivity proxy constructed from input $w$ is then independent
of $(k, l)$, and the $(k, l)$ parameters can be recovered by the
following procedure.

Define the partially residualized output using the intermediate input
estimates from Section~\ref{app:translog_moments}:
\begin{align}
\tilde{y}^{TL}_{jt} &\equiv y_{jt}
  - \hat{\beta}_m m - \hat{\beta}_e e - \hat{\beta}_w w \notag \\
&\quad - \hat{\beta}_{mm} m^2 - \hat{\beta}_{ee} e^2 - \hat{\beta}_{ww} w^2
  - \hat{\beta}_{me} me - \hat{\beta}_{mw} mw - \hat{\beta}_{ew} ew.
\label{eq:y_tilde_tl}
\end{align}
This equals:
\begin{align}
\tilde{y}^{TL}_{jt} &= \beta_k k + \beta_l l
  + \beta_{kk} k^2 + \beta_{ll} l^2 + \beta_{kl} kl \notag \\
&\quad + \beta_{km} km + \beta_{ke} ke + \beta_{lm} lm + \beta_{le} le
  + \omega + \varepsilon.
\label{eq:y_tilde_tl_expanded}
\end{align}
Construct the productivity proxy from input $w$:
\begin{equation}
\hat{\omega}^w_{jt} \equiv w_{jt} - f_{jt}
- \ln\left(\frac{\partial f}{\partial w}\right)_{jt} - \hat{\phi}_w(z_{jt}),
\label{eq:omega_hat_w_tl}
\end{equation}
where all terms on the right-hand side are evaluated at the estimated
parameters. Under the exclusion restriction~\eqref{eq:excl_translog},
$\hat{\omega}^w$ does not depend on $(k, l)$.

The moment condition for the $(k, l)$ parameters is:
\begin{equation}
\mathbb{E}\Bigl[\bigl(\tilde{y}^{TL} - \hat{\omega}^w
- g_{kl}(k, l, m, e; \beta_{kl})\bigr) \cdot Z_{kl}\Bigr] = 0,
\label{eq:moment_kl_tl}
\end{equation}
where $g_{kl}$ collects all terms involving $(k, l)$:
\begin{equation}
g_{kl} \equiv \beta_k k + \beta_l l + \beta_{kk} k^2 + \beta_{ll} l^2
+ \beta_{kl} kl + \beta_{km} km + \beta_{ke} ke + \beta_{lm} lm + \beta_{le} le,
\label{eq:g_kl}
\end{equation}
and $Z_{kl} = (k, l, k^2, l^2, kl, km, ke, lm, le)$. The error term
$\varepsilon - \eta_w / \zeta_\omega$ is orthogonal to $Z_{kl}$ under
the exclusion restriction, identifying all $(k, l)$ parameters.

\subsection{Reduction to Cobb--Douglas}
\label{app:translog_cd}

Setting all second-order coefficients to zero
($\beta_{hh'} = 0$ for all $h, h'$), the translog reduces to
Cobb--Douglas. The log marginal product ratio in
equation~\eqref{eq:u_me_tl} becomes:
\begin{equation}
\ln\left(\frac{\partial f/\partial m}{\partial f/\partial e}\right)
= \ln\frac{\beta_m}{\beta_e},
\label{eq:mp_ratio_cd}
\end{equation}
a constant that vanishes under de-meaning. The residual
$u_{me}^{TL}$ reduces to $-(\tilde{m} - \tilde{e})$, and the nonlinear
moment conditions collapse to linear orthogonality conditions.

Similarly, the $(k, l)$ identification
(equation~\ref{eq:moment_kl_tl}) reduces to:
\begin{equation}
\tilde{y} - \hat{\omega}^w = \beta_k k + \beta_l l + \text{error},
\label{eq:ols_cd}
\end{equation}
which is the OLS regression of Remark~\ref{rem:testable_exclusion}.
The Cobb--Douglas implementation of Section~\ref{sec:estimation} is
thus a computationally tractable special case of the general framework.

\section{Empirical Estimation Flowchart}
\label{app:flowchart}

Figure~\ref{fig:flowchart} provides an overview of the full
empirical estimation and inference pipeline.

\begin{figure}[htbp]
\centering
\resizebox{\textwidth}{!}{%
\begin{tikzpicture}[
  node distance = 0.5cm,
  box/.style    = {rectangle, draw, rounded corners=3pt,
                   text width=5.8cm, align=center,
                   minimum height=0.9cm, font=\small},
  diag/.style   = {diamond, draw, aspect=2.8,
                   text width=3.8cm, align=center, font=\small},
  side/.style   = {rectangle, draw, rounded corners=3pt,
                   text width=4.6cm, align=left,
                   minimum height=0.8cm, font=\small, inner sep=5pt},
  avail/.style  = {rectangle, draw, rounded corners=3pt,
                   text width=5.2cm, align=left,
                   font=\small, inner sep=5pt, fill=blue!8},
  head/.style   = {rectangle, draw, rounded corners=3pt,
                   text width=5.8cm, align=center,
                   minimum height=0.9cm, font=\small\bfseries,
                   fill=gray!20},
  arr/.style    = {-Stealth, thick},
  darr/.style   = {-Stealth, thick, dashed}
]


\node[head] (input)
  {Step 1: Data requirements\\
   {\normalfont\small Firm-level panel $(y,k,l,m,e,w)_{jt}$;\\
   at least two intermediate input proxies}};

\node[box, below=0.6cm of input] (ab)
  {Step 2: Block~A+B GMM\\
   {\footnotesize $\hat\beta_m,\hat\beta_e,\hat\beta_w$; demand slopes;\\
   productivity loading parameters.\\
   \textbf{No assumption on productivity dynamics.}}};

\node[box, below=0.6cm of ab] (omega)
  {Step 3: Productivity index $\hat\omega_{jt}$\\
   {\footnotesize Identified up to $\Delta(k,l)$ indeterminacy\\
   (Theorem~\ref{thm:obs_equiv})}};

\node[diag, below=0.6cm of omega] (excl)
  {Diagnostic~(i): Exclusion test\\$d_k\!\neq\!0$ or $d_l\!\neq\!0$?\\
   {\footnotesize ($d_k,d_l$: OLS slopes of $\hat\omega$ on $k,l$)}};

\node[box, below=0.6cm of excl] (blockc)
  {Step 4: Block~C GMM\\
   {\footnotesize CES curvature $\Rightarrow$ $\hat\beta_k, \hat\beta_l$}};

\node[diag, below=0.6cm of blockc] (ces)
  {Diagnostic~(ii)\\CES curvature: $\hat\rho_v \neq 0$?};

\node[side, below=0.6cm of ces, fill=green!10] (klfinal)
  {\textit{Pass:} $\hat\beta_k, \hat\beta_l$ from Block~C identified.};


\node[avail, right=2.8cm of omega] (down3)
  {\textbf{Available after Step~3:}\\
   $\bullet$ Markup estimation\\
   $\bullet$ Productivity determinants\\
   $\bullet$ Event study (policy evaluation)\\[2pt]
   {\footnotesize All invariant to $\Delta(k,l)$
   (Thm.~\ref{thm:obs_equiv})}};

\node[side, right=2.8cm of excl, fill=green!10] (cons)
  {\textit{No ($d_k\!=\!d_l\!=\!0$):}\\
   OLS recovery (Prop.~\ref{prop:excl_ols})\\
   gives $\hat\beta_k, \hat\beta_l$ directly.\\
   Block~C as cross-check only.};

\node[side, right=2.8cm of ces, fill=orange!15] (cesfail)
  {\textit{Fail:} Cobb-Douglas ($\rho_v\!=\!0$).\\
   $\beta_k,\beta_l$ not identified from\\
   Block~C (Thm.~\ref{thm:homothetic_id}).\\
   {\footnotesize If Diag.~(ii) passed, use\\
   OLS recovery instead.}};

\draw[arr] (input)  -- (ab);
\draw[arr] (ab)     -- (omega);
\draw[arr] (omega)  -- (excl);
\draw[arr] (excl)   -- node[left,font=\footnotesize]{yes} (blockc);
\draw[arr] (blockc) -- (ces);
\draw[arr] (ces)    -- node[left,font=\footnotesize]{pass} (klfinal);

\draw[arr] (omega.east)  -- (down3.west);
\draw[arr] (excl)   -- node[above,font=\footnotesize]{no}  (cons);
\draw[arr] (ces)    -- node[above,font=\footnotesize]{fail}  (cesfail);

\end{tikzpicture}}%
\caption{Implementation Guide: Proposed Estimation Pipeline}
\label{fig:flowchart}

\small\raggedright \textit{Notes: Step-by-step guide for applying
the proposed estimator to firm-level panel data.
The key message is that \textbf{Step~3 alone} (Block~A+B GMM) suffices
for the most common downstream applications---markup estimation,
productivity analysis, event studies, and Olley--Pakes
decomposition---because these rely only on $\hat\beta_m$ and
$\hat\omega_{jt}$, which are invariant to the $\Delta(k,l)$
indeterminacy (Theorem~\ref{thm:obs_equiv}).
Step~4 (Block~C GMM) is needed only when capital and labor elasticities
$(\beta_k, \beta_l)$ are themselves of interest.
Diagnostics (i)--(iii) check the maintained assumptions at
each stage.}
\end{figure}
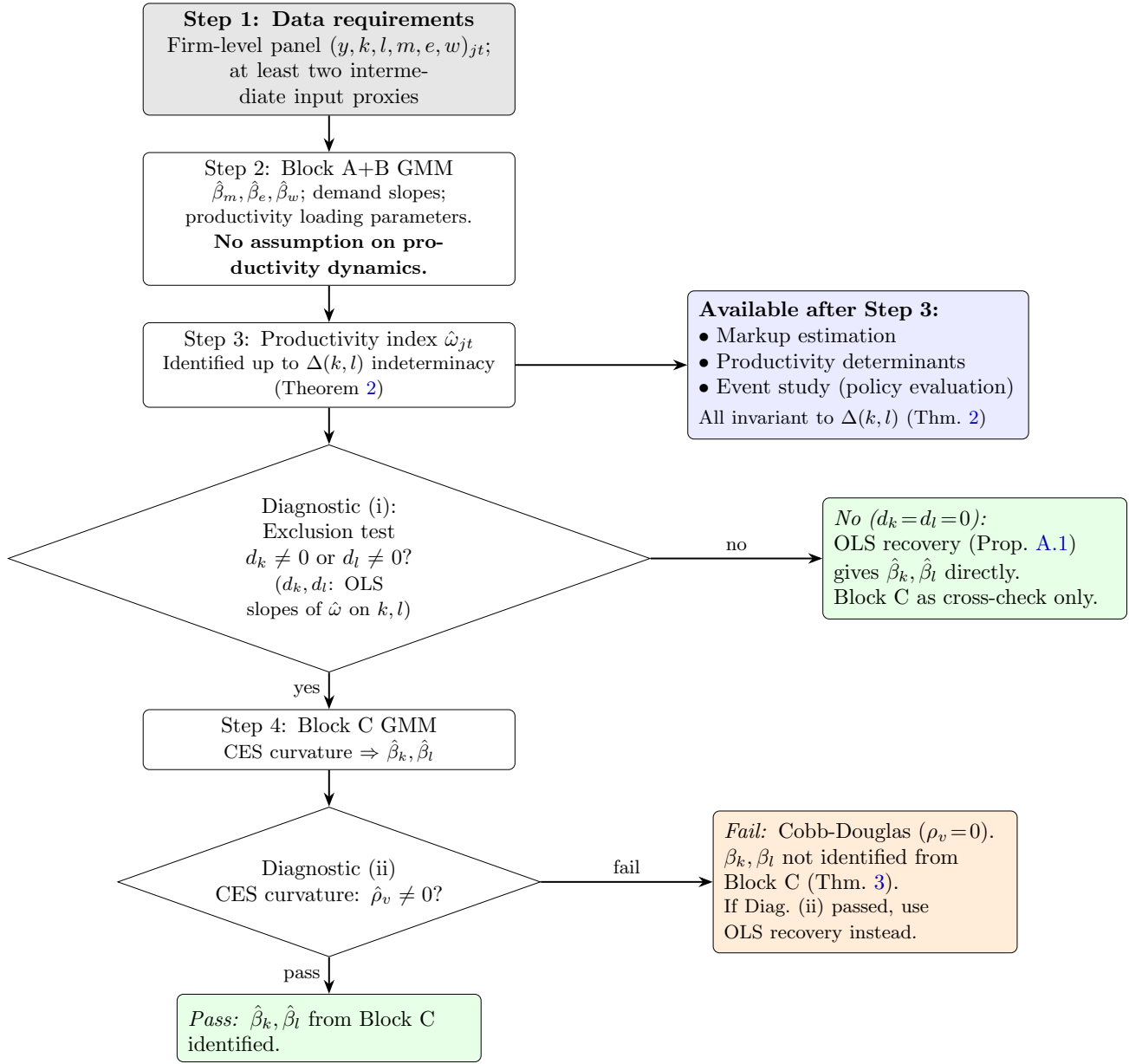

\end{appendices}

\end{document}